\definecolor{lightred}{RGB}{255,127,127}
\definecolor{lightgreen}{RGB}{127,255,127}
\definecolor{lightblue}{RGB}{127,127,255}
\definecolor{linkcolor}{rgb}{0,0,0.6}
\theoremstyle{plain}
\newtheorem{theorem}{Theorem}[section]
\newtheorem*{theorem*}{Theorem}
\newtheorem*{proposition*}{Proposition}
\newtheorem{lemma}[theorem]{Lemma}
\newtheorem{corollary}[theorem]{Corollary}
\definecolor{brightBlue}{rgb}{0,0,1}
\definecolor{Violet}{rgb}{0.47,0,1}
\DeclareMathOperator{\res}{res}
\newcommand{\noi}{\noindent}
\newcommand{\dd}{\text{d}}
\newcommand{\p}{\partial}
\newcommand{\g}{\mathfrak{g}}
\newcommand{\so}{\mathfrak{so}}
\newcommand{\spc}{\mathfrak{sp}}
\renewcommand{\sl}{\mathfrak{sl}}
\newcommand{\C}{\mathbb{C}}
\newcommand{\s}{\sigma}
\newcommand{\E}{\mathcal{E}}
\newcommand{\M}{\mathcal{M}}
\newcommand{\R}{\mathbb{R}}
\newcommand{\Nc}{\mathcal{N}}
\newcommand{\W}{\mathcal{W}}
\newcommand{\Z}{\mathbb{Z}}
\newcommand{\Zc}{\mathcal{Z}}
\newcommand{\Y}{\mathcal{Y}}
\newcommand{\Tc}{\mathscr{T}}
\newcommand{\J}{\mathscr{J}}
\newcommand{\K}{\mathscr{K}}
\newcommand{\Hc}{\mathcal{H}}
\newcommand{\Cc}{\mathcal{C}}
\newcommand{\Pc}{\mathcal{P}}
\newcommand{\Q}{\mathcal{Q}}
\newcommand{\Id}{\text{Id}}
\newcommand{\Lc}{\mathcal{L}}
\newcommand{\Lct}{\widetilde{\mathcal{L}}}
\newcommand{\Rc}{\mathcal{R}}
\newcommand{\Rct}{\widetilde{\mathcal{R}}}
\newcommand{\ti}[1]{_{\bm{\underline{#1}}}}
\newcommand{\Tr}{\text{Tr}}
\newcommand{\Pexp}{\text{P}\overleftarrow{\text{exp}}}
\newcommand{\Tp}{^{\mathsf{T}}}
\numberwithin{equation}{section}
\begin{document}

\begin{center}
\vspace*{2em}
{\large\bf Local charges in involution and\\[1mm]
hierarchies in integrable sigma-models}\\
\vspace{1.5em}
S. Lacroix$\,{}^1$, M. Magro$\,{}^1$, B. Vicedo$\,{}^2$

\vspace{1em}
\begingroup\itshape
{\it 1) Univ Lyon, Ens de Lyon, Univ Claude Bernard, CNRS,\\ Laboratoire de Physique,
F-69342 Lyon, France}\\
\vspace{1em}
{\it 2) School of Physics, Astronomy and Mathematics,
University of Hertfordshire,}\\
{\it College Lane,
Hatfield AL10 9AB,
United Kingdom}
\par\endgroup
\vspace{1em}
\begingroup\ttfamily
Sylvain.Lacroix@ens-lyon.fr, Marc.Magro@ens-lyon.fr, Benoit.Vicedo@gmail.com
\par\endgroup
\vspace{1.5em}
\end{center}

\begin{abstract}
Integrable $\s$-models, such as the principal chiral model, $\Z_T$-coset models for $T \in \Z_{\geq 2}$ and their various integrable deformations, are examples of non-ultralocal integrable field theories 
described by $r/s$-systems with twist function. In this general setting, and when the Lie algebra $\g$ underlying the $r/s$-system is of classical type, we construct an infinite algebra of local conserved charges in involution, extending the approach of Evans, Hassan, MacKay and Mountain developed for the principal chiral model and symmetric space $\s$-model. In the present context, the local charges are attached to certain `regular' zeros of the twist function and have increasing degrees related to the exponents of the untwisted affine Kac-Moody algebra $\widehat{\g}$ associated with $\g$. The Hamiltonian flows of these charges are shown to generate an infinite hierarchy of compatible integrable equations.
\end{abstract}


\section{Introduction}

One of the hallmarks of integrability in a $(1+1)$-dimensional field theory is the existence of an infinite number of conserved charges. At the classical level, this property can be attributed to the existence of a Lax connection, depending on some auxiliary complex spectral parameter $\lambda$, whose zero curvature equation for all $\lambda$ is equivalent to the equations of motion of the field theory. In principle, the conserved charges can all be obtained by expanding the monodromy of the Lax connection in $\lambda$ around specific points. Depending on the chosen point of expansion, the resulting charges can be either local or non-local in the fields entering the Lax connection.

When passing to the Hamiltonian formalism, one requires additionally that the conserved charges be in involution with respect to the Poisson bracket of the theory, namely that they Poisson commute not only with the Hamiltonian but also between themselves. It is known since the mid-eighties that a sufficient condition guaranteeing the involution of the charges built from the monodromy is that the Poisson bracket of the Lax matrix $\mathcal L(\lambda, x)$, the spatial component of the Lax connection, be of Maillet's general $r/s$-form \cite{Maillet:1985fn,Maillet:1985ek}.
In most cases of interest, the pair of matrices $r\ti{12}(\lambda, \mu)$ and $s\ti{12}(\lambda, \mu)$, giving the form its name, are rational functions on $\C^2$ valued in the two-fold tensor product $\g \otimes \g$ of some finite-dimensional Lie algebra $\g$. The mathematical formalism underlying this particular case was pinned down by Semenov-Tian-Shansky in \cite{SemenovTianShansky:1983ik} where the $r$- and $s$-matrices were understood to be the skew-symmetric and symmetric parts of a single solution $\mathcal R\ti{12}(\lambda, \mu)$ of the classical Yang-Baxter equation. In general, the latter is related to the standard solution $\mathcal R^0\ti{12}(\lambda, \mu)$ on the (twisted) loop algebra over $\g$ by
\begin{equation*}
\Rc\ti{12}(\lambda,\mu) = \Rc^0\ti{12}(\lambda,\mu) \varphi(\mu)^{-1},
\end{equation*}
where $\varphi(\lambda)$ is a rational function 
\cite{SemenovTianShansky:1995ha,Sevostyanov:1995hd,Vicedo:2010qd} (see also \cite{Maillet:1985ec}), 
known as the twist function.    The loop algebra over $\g$ is twisted by some automorphism 
$\sigma$ of order $T \in \Z_{\geq 1}$, with the non-twisted case corresponding to $T=1$. 
Recall that the twist function 
  plays an essential role in characterising a given integrable field theory. In fact, it was 
shown recently in \cite{Vicedo:2017cge} to form an integral part of the Lax matrix itself, 
viewed as a rational function valued in the untwisted affine Kac-Moody algebra associated 
with $\g$. 

\medskip

The principal chiral model serves as the prototypical example of an integrable $\sigma$-model which fits the general formalism of $r/s$-systems with twist function \cite{Sevostyanov:1995hd}. As such, it has proved extremely fruitful over the past couple of years to try and reinterpret some of its properties in terms of its twist function. Indeed, once a given property has been understood algebraically at the level of the twist function, it almost immediately generalises to other integrable field theories described within this formalism.
The first illustration of this general philosophy came about from the desire to generalise the Faddeev-Reshetikhin construction \cite{Faddeev:1982rn}, initially developed in the context of the $SU(2)$ principal chiral model, to other integrable $\sigma$-models with twist function. Specifically, it was shown in \cite{Delduc:2012qb} that the key initial step of this construction can be naturally reformulated in the language of twist functions. This led to a proposal for extending the Faddeev-Reshetikhin approach to a wide range of other models of interest, including the symmetric and semi-symmetric space $\sigma$-models, as well as the Green-Schwarz superstring on $AdS_5 \times S^5$ in \cite{Delduc:2012vq}. The prospect of extending the subsequent steps in the Faddeev-Reshetikhin construction to these other models remains an exciting open problem.
It is interesting to note that, despite its usefulness in the case of the principal chiral model, the relevance of the twist function was first appreciated in \cite{Vicedo:2010qd}, following \cite{Magro:2008dv,Vicedo:2009sn}, on the much more elaborate $AdS_5 \times S^5$ superstring both within the Green-Schwarz \cite{Metsaev:1998it} and the pure spinor \cite{Berkovits:2000fe} formulations.

Another important application of the formalism of the twist function ties in with the great effort made in recent years \cite{Klimcik:2002zj,Klimcik:2008eq,Delduc:2013fga,Delduc:2013qra,Sfetsos:2013wia,
Delduc:2014kha,Kawaguchi:2014qwa,Delduc:2014uaa,Hollowood:2014rla,Hollowood:2014qma,Hoare:2014oua,Sfetsos:2015nya} towards deforming some well known integrable field theories, such as the principal chiral model as well as symmetric and semi-symmetric space $\sigma$-models, while preserving their integrability.
Specifically, it was realised in \cite{Delduc:2013fga} that the so called Yang-Baxter $\sigma$-model, first introduced by {Klim$\check{\text{c}}$\'{\i}k} in \cite{Klimcik:2002zj} as a certain one-parameter deformation of the principal chiral model on any real Lie group $G_0$, could be naturally obtained by deforming the poles of the twist function of the principal chiral model. This led to an immediate broadening of the landscape of Yang-Baxter type deformations to include also one-parameter deformations of the symmetric and semi-symmetric space $\sigma$-models, incorporating, in particular, the Green-Schwarz superstring on $AdS_5 \times S^5$ in \cite{Delduc:2014kha}. In fact, many other deformations were also understood a posteriori to arise in this fashion \cite{Vicedo:2015pna}.
It is worth noting in passing that the bi-Yang-Baxter $\sigma$-model \cite{Klimcik:2008eq,Klimcik:2014bta} is special in this regard. Although it was originally devised as a two-parameter deformation of the principal chiral model on any real Lie group $G_0$, it can equally be regarded as a two-parameter deformation of the symmetric space $\sigma$-model on $G_0 \times G_0 / G_{0, \rm diag}$ with $G_{0, \rm diag}$ the diagonal subgroup of $G_0 \times G_0$.
It was shown in \cite{Delduc:2015xdm} that it is this latter formulation which fits within the general framework of $r/s$-systems with twist function. The original description of the model as a double deformation of the principal chiral model can be recovered by fixing the $G_{0, \rm diag}$ gauge symmetry, at the expense of losing the formulation in terms of a twist function \cite{Delduc:2015xdm} (see, however, subsection \ref{Sec:BYB} below).

Aside from providing a systematic way of constructing integrable deformations,
the idea of deforming the pole structure of the twist function of a given integrable field theory has also played a pivotal role in establishing and characterising the symmetry algebras of the resulting deformed models. Specifically, the principal chiral model and (semi-)symmetric space $\sigma$-models all have a double pole in their twist function which splits up into a pair of simple poles when their Yang-Baxter type deformation is switched on. It was shown in \cite{Delduc:2013fga, Delduc:2014kha}, based on earlier work \cite{Kawaguchi:2010jg,Orlando:2010yh,
Kawaguchi:2011mz,Kawaguchi:2011pf,Kawaguchi:2012ve,Kawaguchi:2012gp,Orlando:2012hu,Kawaguchi:2013gma} (see also more recent related results \cite{Kameyama:2014bua,Itsios:2014vfa,Hollowood:2015dpa}), that the charges extracted from the leading order in the expansion of the monodromy at this pair of simple poles satisfy all the relations of a Poisson algebra $\mathscr U_q(\g)$, the semiclassical counterpart of the quantum group $U_{\widehat{q}}(\g)$ with $\widehat{q} = q^\hbar$. This general feature of Yang-Baxter type deformations of double poles in the twist function was subsequently related to Poisson-Lie $G$-symmetries in \cite{Delduc:2016ihq}. Amongst the charges spanning the Poisson algebra $\mathscr U_q(\g)$, those associated to non-Cartan generators are all non-local. In the example of the Yang-Baxter $\sigma$-model, the level zero charges together with two additional non-local charges coming from the next order in the expansion of the monodromy around the simple poles of the twist function, have been shown \cite{Delduc:2017brb} to satisfy all the defining relations of the semiclassical counterpart of the quantum affine algebra $U_{\widehat{q}}(\widehat{\g})$.

\medskip

The purpose of the present article is to provide another application of the general formalism of $r/s$-systems with twist function. Specifically, we will describe how, in this general framework, infinite towers of local charges can be associated with certain zeros of the twist function, all of which are in pairwise involution. Following the same spirit as recalled above, the starting point of our approach was to reinterpret the construction of local charges in the principal chiral model due to Evans, Hassan, MacKay and Mountain \cite{Evans:1999mj} in the present language of twist functions. In fact, this construction had soon been generalised to include also the (supersymmetric) principal chiral model with a Wess-Zumino term in \cite{Evans:2000hx}, symmetric space $\sigma$-models in \cite{Evans:2000qx} as well as supersymmetric coset $\sigma$-models in \cite{Evans:2005zd}. Each of these generalisations can be regarded as further evidence that such a construction should hold for any integrable field theory with twist function, while at the same time providing indications on how to do so. In the remainder of this introduction we will briefly summarise the main results of the paper.

Let us first note that in all of the integrable $\sigma$-models with twist function described above, every zero of $\varphi(\lambda)$ is such that $\varphi(\lambda) \mathcal L(\lambda, x)$ is regular there. In a general integrable field theory with twist function $\varphi(\lambda)$ we shall say that any zero of $\varphi(\lambda)$ with this property is \emph{regular}. We denote by $\mathcal Z$ the set of regular zeros of $\varphi(\lambda)$ in $\C$. As discussed in subsection \ref{SubSec:DGAM}, the regularity property of the zeros of the twist function in an integrable $\sigma$-model is related to a general condition used for describing these models as dihedral affine Gaudin models \cite{Vicedo:2017cge}. We shall further distinguish between two types of zeros: cyclotomic ones and 
non-cyclotomic ones. This notion depends on the order $T$ of the automorphism $\sigma$. 
In a model with $T=1$, every point is by definition non-cyclotomic, 
whereas in  a model with $T>1$, every point is non-cyclotomic except for 
the origin and infinity. As explained in subsection \ref{Sec:Infinity}, throughout our analysis the point at infinity will be treated in much the same way as the origin by using an inversion of the spectral parameter.

\medskip

To every $\lambda_0 \in \mathcal Z$, or every $\lambda_0 \in \mathcal Z \cup \{ \infty \}$ if infinity is also a regular zero, we will associate a subset of integers $\mathcal E_{\lambda_0} \subset \Z_{\geq 2}$ and a corresponding tower of local charges $\mathcal Q^{\lambda_0}_n$ labelled by $n \in \mathcal E_{\lambda_0}$. The first main property of these charges which we will establish is that any two such charges $\mathcal Q^{\lambda_0}_n$ and $\mathcal Q^{\mu_0}_m$ for any $\lambda_0, \mu_0 \in \mathcal Z$ and $n \in \mathcal E_{\lambda_0}$, $m \in \mathcal E_{\mu_0}$ are in involution. Moreover, if infinity is a regular zero and either $\lambda_0$ or $\mu_0$ is taken to be the point at infinity, the corresponding local charges will only Poisson commute up to a certain field $\mathcal C(x)$ which will coincide with the coset constraint in $\Z_T$-coset $\sigma$-models. Following the standard terminology from the theory of constrained Hamiltonian systems, we will refer to equalities as being \emph{weak} when they hold only after setting this particular field to zero, see subsection \ref{Sec:AlgebraLoc}. Furthermore, we show that in every example of integrable $\sigma$-model considered, the Hamiltonian can be expressed as a particular linear combination of the collection of quadratic local charges $\mathcal Q^{\lambda_0}_2$ for $\lambda_0 \in \mathcal Z \cup \{ \infty \}$ and the momentum of the model. It then follows that all of the local charges are conserved.

Let us briefly outline the construction of the local charges by considering first the case when $\lambda_0 \in \mathcal Z$ is non-cyclotomic. If the Lie algebra $\g$ is of type B, C or D then the density of the local charge $\mathcal Q^{\lambda_0}_n$ is obtained simply by evaluating
\begin{equation} \label{tr phi L intro}
\Tr \big( \varphi(\lambda)^n \mathcal L(\lambda, x)^n \big)
\end{equation}
at the regular zero $\lambda_0$. When $\g$ is of type A, on the other hand, the density of the local charge $\mathcal Q^{\lambda_0}_n$ is given instead by a certain polynomial in the above expressions, determined as in \cite{Evans:1999mj} with the help of a generating function. In either case, $\mathcal E_{\lambda_0}$ is given here by the set of exponents of the affine Kac-Moody algebra $\widehat{\g}$ associated with $\g$, shifted by one (we do not treat the case of the Pfaffian in type D). In the example of the principal chiral model on a real Lie group $G_0$ treated in \cite{Evans:1999mj}, the twist function has simple zeros at $\pm 1$ and the evaluation of $\varphi(\lambda) \mathcal L(\lambda, x)$ at $\lambda = \pm 1$ produces the currents $j_\pm = g^{-1} \partial_\pm g$ of the theory, where $g$ is the $G_0$-valued field of the principal chiral model and $\partial_\pm$ are the partial derivatives along light-cone coordinates on the worldsheet. We recover in this way the higher spin local charges in involution of the principal chiral model constructed in \cite{Evans:1999mj}.

When the regular zero $\lambda_0 \in \mathcal Z$ is cyclotomic, \emph{i.e.} $\lambda_0 = 0$, it may 
happen, as a result of the equivariance properties of both the Lax matrix and twist function, that 
the evaluation of \eqref{tr phi L intro} at the point $\lambda_0$ vanishes identically. More precisely, 
the first non-vanishing term in the power series expansion of \eqref{tr phi L intro} around 
$\lambda = 0$ is of order $\lambda^{r_n}$ for some $0 \leq r_n \leq T-1$. If the Lie algebra 
$\g$ is of type B, C or D, or also of type A with an inner automorphism $\sigma$, then we 
define the density of the local charge $\mathcal Q^0_n$ as the coefficient of this leading term. The 
case when $\g$ is of type A and the 
automorphism $\sigma$ is not inner is treated in a similar fashion 
to the case of a non-cyclotomic point in type A, with the densities of the local charges $\mathcal Q^0_n$ being obtained by means of a generating function. In each case it turns out that we need to restrict attention to indices $n$ such that $0 \leq r_n < T-1$. As a result, and in contrast to the case of a non-cyclotomic regular zero, some exponents of the affine Kac-Moody algebra $\widehat{\g}$ are `dropped' in the construction of the subset $\mathcal E_0$, specifically those such that $r_n = T-1$. In the case of a symmetric space $\sigma$-model, for which $T=2$ so that only charges for which $r_n = 0$ are kept, we recover in this way the local charges found in \cite{Evans:2000qx}.

\medskip

The collection of local charges $\Q^{\lambda_0}_n$, $\lambda_0 \in \mathcal Z$, $n \in \mathcal E_{\lambda_0}$ in involution generate an infinite set of Poisson commuting Hamiltonian flows $\left\lbrace \Q^{\lambda_0}_n, \cdot \right\rbrace$ on the phase space of the model. To every such flow we then associate a corresponding $\g$-valued connection
$\nabla^{\lambda_0}_n = \left\lbrace \Q^{\lambda_0}_n, \cdot \right\rbrace + \M^{\lambda_0}_n (\lambda,x)$ for some $\g$-valued matrix $\M^{\lambda_0}_n (\lambda,x)$ depending on the spectral parameter $\lambda$.
The second main property of the local charges $\Q^{\lambda_0}_n$, $\lambda_0 \in \mathcal Z$, $n \in \mathcal E_{\lambda_0}$ which we establish is that the connection $\nabla^{\lambda_0}_n$ for any $\lambda_0 \in \mathcal Z$ and $n \in \mathcal E_{\lambda_0}$ commutes with the connection $\nabla_x = \partial_x + \mathcal L(\lambda, x)$. In this sense, the local charges generate a hierarchy of integrable equations. We use this result to deduce that the local charges $\Q^{\lambda_0}_n$, $\lambda_0 \in \mathcal Z$, $n \in \mathcal E_{\lambda_0}$ are in involution with the non-local charges extracted from the monodromy of $\Lc(\lambda,x)$.
Moreover, we go on to show that when $\g$ is of type B, C or D, any two such connections $\nabla^{\lambda_0}_n$ and $\nabla^{\mu_0}_m$ for $\lambda_0, \mu_0 \in \mathcal Z$ and $n \in \mathcal E_{\lambda_0}$, $m \in \mathcal E_{\mu_0}$ also commute with one another. Finally, we have also checked these results in the case of type A for low values of $n$ and $m$ and on this basis we conjecture it to hold in general. If infinity is a regular zero then the majority of these results still hold in the weak sense when we consider also the local charges associated with infinity.

\medskip

This article is organised as follows. The general framework of $r/s$-systems with twist function which we employ throughout the article is reviewed in section \ref{Sec:FraandGenRes}. In particular, we introduce the notion of a regular zero in the complex plane which plays 
a central role in our analysis.  In subsection \ref{Sec:Model}, we define the $\Rc$-matrices entering the $r/s$-systems of interest and discuss their equivariance properties, as well as those of the Lax matrix and the twist function. We will make extensive use of these properties when discussing local charges extracted from cyclotomic regular zeros of the twist function. The list of examples we shall consider is given in paragraph \ref{Sec:Examples}. In subsection \ref{Sec:Infinity}, we define the notion of a regular zero at infinity and relate it to that of a regular zero at the origin by inversion of the spectral parameter. Finally, we establish some general results in subsection \ref{subsec-pbtrpow}. Section \ref{Sec:NonCycZero} is devoted to the procedure for extracting local charges in involution in the case of a non-cyclotomic regular zero. In particular, we present in subsection \ref{Sec:GenNonCyc} an explicit construction of the currents $\K_n^{\lambda_0}$ for type A algebras using generating functions in the spirit of \cite{Evans:1999mj}. Section \ref{Sec:CycZero} deals with charges at cyclotomic zeros. We explain how the equivariance properties of the various objects affect the construction of local conserved charges in involution. Here the Lie algebras of type B, C and D can still be treated 
uniformly but in type A we need to consider separately the cases when the 
automorphism $\sigma$ is inner or not. The generating function for Lie algebras of type A 
with non-inner automorphism is presented in subsection \ref{Sec:CycGenerating}. A list of properties of these local charges is collated in section \ref{Sec:PrOfLocCha}, including the fact that the local charges extracted from different regular zeros Poisson commute (weakly when the point at infinity is involved). Moreover, we show that all the local charges commute with the field $\mathcal C(x)$ which will play the role of the constraint in $\Z_T$-coset $\sigma$-models, therefore showing that they are gauge invariant. We also discuss the reality conditions of all the local charges. The Hamiltonian flows of the local charges $\Q^{\lambda_0}_n$ are studied in detail in 
section \ref{Sec:IntHierZeroCurv}. The main result that any two of the $\g$-valued connections $\nabla^{\lambda_0}_n$ and $\nabla^{\mu_0}_m$ satisfy a zero curvature equation is established in subsection \ref{sec: ZC eq}. Finally, in section \ref{Sec:Applications} we apply all these results to the examples listed in paragraph \ref{Sec:Examples}. In paragraph \ref{SubSec:DGAM} we interpret some of our results and assumptions in terms of dihedral affine Gaudin models \cite{Vicedo:2017cge}. We end with an outlook and two appendices.

\section{Framework and general results} \label{Sec:FraandGenRes}

\subsection{Non-ultralocal models with twist function}
\label{Sec:Model}

\subsubsection{Non-ultralocal models}

In this section we begin by outlining the general framework which we will work in. We consider a two dimensional field theory with spatial coordinate $x$ taking values on the circle $S^1$ or the real line $\R$. The dynamics of the model is described by a Hamiltonian $\Hc$ and a Poisson bracket $\lbrace\cdot,\cdot\rbrace$ on the phase space. Let $\Pc$ denote the conserved momentum of the model, whose Poisson bracket generates the derivative $\p_x$ with respect to $x$ on the phase space.

We suppose that the model is integrable with Lax matrix $\Lc$ valued in a Lie algebra $\g$ and depending on a complex spectral parameter $\lambda$. Moreover, we also suppose that the Poisson bracket of the Lax matrix with itself assumes the form of Maillet's $r/s$-system, \textit{i.e.}
\begin{align}\label{Eq:PBR}
\left\lbrace \Lc(\lambda,x)\ti{1}, \Lc(\mu,y)\ti{2} \right\rbrace & =
\left[ \Rc\ti{12}(\lambda,\mu), \Lc(\lambda,x)\ti{1} \right] \delta_{xy} - \left[ \Rc\ti{21}(\mu,\lambda), \Lc(\mu,y)\ti{2} \right] \delta_{xy} \\
 & \hspace{50pt} - \; \bigl( \Rc\ti{12}(\lambda,\mu) + \Rc\ti{21}(\mu,\lambda) \bigr) \delta'_{xy},  \notag
\end{align}
using the standard tensorial notations $\bm{\underline{i}}$.
In this equation, $\Rc\ti{12}$ is a $\g\otimes\g$-valued matrix depending on the spectral parameters $\lambda$ and $\mu$, $\delta_{xy}$ is the Dirac distribution and $\delta'_{xy}=\p_x \delta_{xy}$. When the $\Rc$-matrix is skew-symmetric, the term containing $\delta'_{xy}$ vanishes and the model is said to be ultralocal. In general, $\Rc$ is non skew-symmetric and the model is then non-ultralocal.

In this article, we will focus on the case where the Lie algebra $\g$ is simple. More precisely, we will restrict to the classical types A, B, C and D of the Cartan classification, seen in their defining representations\footnote{Here $J_n$ is the standard symplectic structure on $\mathbb{C}^{2n}$ given by $J_n = \bigg( \begin{matrix} 0 & \text{Id}\\ - \text{Id} & 0 \end{matrix} \bigg)$ and $M\Tp$ denotes the transpose of $M$.}:
\begin{table}[h]
\begin{center}
\begin{tabular}{lcl}
Type & ~~ & Algebra \\
\hline \hline
A    &    & $\sl(n,\C)=\left\lbrace M \in M_n(\C) \; | \; \Tr(M)=0 \right\rbrace$ \\
B,D  &    & $\so(n,\C)=\left\lbrace M \in M_n(\C) \; | \; M\Tp+M=0 \right\rbrace$ \\
C    &    & $\spc(2n,\C)=\left\lbrace M \in M_{2n}(\C) \; | \; M\Tp J_n + J_n M =0 \right\rbrace$
\end{tabular}
\caption{Defining representations of classical Lie algebras.\label{Tab:Alg}}
\end{center}\vspace{-16pt}
\end{table}

\subsubsection{$\Rc$-matrices and twist functions}
\label{Sec:RMatTwist}

A sufficient condition to ensure the Jacobi identity of the Poisson bracket \eqref{Eq:PBR} is for $\Rc$ to verify the classical Yang-Baxter Equation (CYBE)
\begin{equation}
\label{Eq:CYBE}
\left[ \Rc\ti{12}(\lambda_1,\lambda_2), \Rc\ti{13}(\lambda_1,\lambda_3) \right] + \left[ \Rc\ti{12}(\lambda_1,\lambda_2), \Rc\ti{23}(\lambda_2,\lambda_3) \right] + \left[ \Rc\ti{32}(\lambda_3,\lambda_2), \Rc\ti{13}(\lambda_1,\lambda_3) \right] = 0.
\end{equation}
Let us recall the family of solutions of this equation that we shall consider. Let $T$ be a positive 
integer, $\s:\g\rightarrow\g$ an automorphism of $\g$ of finite order $T$ and $\omega$ a primitive $T^{\rm th}$ root of unity. We fix a basis $T^a$ of $\g$ and its dual basis $T_a$ normalised such that $\Tr(T^a T_b) = \delta^a_b$. We define the Casimir tensor on $\g$ to be
\begin{equation}\label{Eq:Cas}
C\ti{12} = T^a \otimes T_a,
\end{equation}
where here and throughout we use summation convention on repeated Lie algebra indices. The 
standard $\Rc$-matrix
\begin{equation}\label{Eq:RCyc}
\Rc^0\ti{12}(\lambda,\mu) = \frac{1}{T} \sum_{k=0}^{T-1} \frac{\s^k\ti{1}C\ti{12}}{\mu-\omega^{-k}\lambda}
\end{equation}
is then a solution of the CYBE \eqref{Eq:CYBE}. Note that $\s\ti{1}C\ti{12} = \s^{-1}\ti{2}C\ti{12}$, as the Killing form on $\g$ is $\s$-invariant. In this article, we will consider a matrix $\Rc$ obtained by ``twisting'' this cyclotomic matrix, namely
\begin{equation}\label{Eq:DefR}
\Rc\ti{12}(\lambda,\mu) = \Rc^0\ti{12}(\lambda,\mu) \varphi(\mu)^{-1},
\end{equation}
where $\varphi$ is a rational function, called the twist function of the model. It is easy to check that $\Rc$ is also a solution of the CYBE 
\eqref{Eq:CYBE}. We shall call a Poisson bracket \eqref{Eq:PBR} with such an  
$\Rc$-matrix an $r/s$-system with twist function. 

As $\s^T=\Id$, the eigenvalues of $\s$ are of the form $\omega^p$ where, by convention, we take $p\in\lbrace 0, \ldots, T-1 \rbrace$. We denote by $\g^{(p)}$ the corresponding eigenspace and by $\pi^{(p)}$ the projection on $\g^{(p)}$ in the direct sum $\g=\bigoplus_{p=0}^{T-1} \g^{(p)}$. One then has the identity
\begin{equation*}
\pi^{(p)} = \frac{1}{T} \sum_{k=0}^{T-1} \omega^{-kp} \s^k.
\end{equation*}
Defining $C\ti{12}^{(p)}=\pi^{(p)}\ti{1}C\ti{12}=C\ti{21}^{(-p)}$, we can rewrite $\Rc^0$ as
\begin{equation}\label{Eq:RCas}
\Rc^0\ti{12}(\lambda,\mu) = \sum_{p=0}^{T-1} \frac{\lambda^p\mu^{T-1-p}}{\mu^T-\lambda^T} C\ti{12}^{(p)}.
\end{equation}

\subsubsection{Equivariance properties}
\label{Sec:Equi}

As  $\s$ is of order $T$, it defines an action of the cyclic group $\Z_T=\Z/T\Z$ on $\g$. On the other hand, $\Z_T$ can be seen as acting on the complex numbers $\C$ \textit{via} multiplication by $\omega$. We then remark that the matrix $\Rc^0$ is equivariant under these two actions, in the sense that
\begin{equation}\label{Eq:EquiR}
\s\ti{1} \Rc^0\ti{12}(\lambda,\mu) = \Rc^0\ti{12}(\omega\lambda,\mu).
\end{equation}
We will suppose that the Lax matrix $\Lc$ possesses a similar equivariance property, namely
\begin{equation}\label{Eq:EquiL}
\s\bigl( \Lc(\lambda,x) \bigr) = \Lc(\omega\lambda,x).
\end{equation}
The compatibility of these two properties with the Poisson bracket \eqref{Eq:PBR} imposes that
\begin{equation}\label{Eq:TwistEqui}
\varphi(\omega\lambda)=\omega^{-1}\varphi(\lambda),
\end{equation}
from which we deduce that $\lambda\varphi(\lambda)$ is invariant under the action of $\Z_T$. Thus, there exists a rational function $\zeta$ such that
\begin{equation}\label{Eq:DefZeta}
\lambda \varphi(\lambda) = \zeta(\lambda^T).
\end{equation}~

In this article, we will be interested in the zeros of the twist function $\varphi$ in $\C$. We will say that such a zero $\lambda_0$ is \textit{regular} if $\varphi(\lambda)\Lc(\lambda,x)$ is holomorphic at $\lambda=\lambda_0$. By virtue of the equivariance properties \eqref{Eq:EquiL} and \eqref{Eq:TwistEqui}, if $\lambda_0$ is a regular zero, all points of the orbit $\Z_T\lambda_0$ are also regular zeros. Let us pick (arbitrarily) one of them. We then form a set $\Zc$ of regular zeros of $\varphi$ such that for every pair of distinct points $\lambda_0$ and $\mu_0$ in $\Zc$, the orbits $\Z_T\lambda_0$ and $\Z_T\mu_0$ are disjoint. As explained in subsection \ref{Sec:Infinity}, we will  also be interested in the case where the differential form $\varphi(\lambda)\dd \lambda$ has a zero at infinity, \textit{i.e.} where
\begin{equation}\label{Eq:Psi}
\psi(\alpha) = - \frac{1}{\alpha^2}\varphi\left(\frac{1}{\alpha}\right)
\end{equation}
has a zero at $\alpha=0$. We will also see that the appropriate notion of a regular zero at infinity corresponds to requiring that $\frac{1}{\alpha}\varphi\left(\frac{1}{\alpha}\right)\Lc\left(\frac{1}{\alpha},x\right)$ be holomorphic at $\alpha=0$.

\subsubsection{Examples}
\label{Sec:Examples}

To end this subsection we list some examples of models which fit the framework of this article. We consider a real Lie group $G_0$ whose Lie algebra $\g_0$ is a real form of $\g$. The space of fields valued in the cotangent bundle $T^*G_0$ is naturally equipped with a canonical Poisson bracket. One can parametrise this phase space by two fields $g$ and $X$, respectively $G_0$-valued and $\g_0$-valued. The Poisson bracket then reads
\begin{subequations}
\begin{align*}
\left\lbrace g(x)\ti{1}, g(y)\ti{2} \right\rbrace & = 0, \\
\left\lbrace X(x)\ti{1}, g(y)\ti{2} \right\rbrace & = g(x)\ti{2} C\ti{12} \delta_{xy}, \\
\left\lbrace X(x)\ti{1}, X(y)\ti{2} \right\rbrace & = \left[ C\ti{12}, X(x)\ti{1} \right] \delta_{xy}.
\end{align*}
\end{subequations}

\paragraph{Models with $T=1$.}
The first class of examples of non-ultralocal models with twist function consists of the principal chiral 
model (PCM) and its integrable deformations (dPCM). Among these models are the so-called 
Yang-Baxter deformation (or $\eta$-deformation) \cite{Klimcik:2002zj, Klimcik:2008eq, Delduc:2013fga}, the PCM with Wess-Zumino term and the 
combination of these two deformations \cite{Delduc:2014uaa}. The Lax matrix of these models all have the form
\begin{equation}\label{Eq:LaxPCM}
\Lc_{\text{dPCM}}(\lambda,x) = \frac{j_1(x) + \lambda j_0(x)}{1-\lambda^2}.
\end{equation}
The $\g_0$-valued fields $j_0$ and $j_1$ are expressed in terms of $(g,X)$, in a model dependent way. In the simplest case, the PCM, we have
\begin{subequations}
\begin{align*}
j_0 &= g^{-1}Xg, \\
j_1 &= -g^{-1}\p_x g.
\end{align*}
\end{subequations}
For the deformed models, these relations are modified and involve two parameters $\eta$ and $k$ (which are zero for the PCM).

In all these models, the Lax matrix \eqref{Eq:LaxPCM} satisfies the Maillet bracket \eqref{Eq:PBR} with an $\Rc$-matrix 
of the form \eqref{Eq:DefR}. The matrix $\Rc^0$ is given by equation \eqref{Eq:RCyc} for $T=1$ (and thus $\s=\Id$) and 
the twist function is \cite{Delduc:2014uaa}
\begin{equation}\label{Eq:TwistdPCM}
\varphi_{\text{dPCM}}(\lambda) = \frac{1-\lambda^2}{(\lambda-k)^2+A^2},
\end{equation}
where $A$ is a certain function of $\eta$ and $k$, which vanishes when $\eta=0$. The Hamiltonian and the momentum of the model are given by
\begin{subequations}\label{Eq:HamMomPCM}
\begin{align}
\Hc_{\text{dPCM}} &= \frac{B}{2} \int \dd x \; \Tr\bigl( (A^2+k^2+1)(j_0^2 + j_1^2) + 4k j_0j_1 \bigr), \\
\Pc_{\text{dPCM}} &= B \int \dd x \; \Tr\bigl( k(j_0^2 + j_1^2) + (A^2+k^2+1) j_0j_1 \bigr),
\end{align}
\end{subequations}
with $B$ a global factor depending on $A$ and $k$ \textit{via} the relation $B=-\dfrac{1}{4}\varphi'_{\text{dPCM}}(1)\varphi'_{\text{dPCM}}(-1)$. The twist function $\varphi_{\text{dPCM}}$ has two zeros at $+1$ and $-1$. Moreover, these are regular zeros, \textit{i.e.} $\varphi(\lambda)\Lc(\lambda,x)$ is regular at $\lambda=\pm 1$. Note that the evaluation of the latter at $\pm 1$ gives
\begin{equation}\label{Eq:ChiralFieldsdPCM}
J_\pm(x) = \frac{j_1(x) \pm j_0(x)}{(k\pm 1)^2+A^2}.
\end{equation}

There exists another two-parameter deformation of the PCM, the so-called bi-Yang-Baxter model 
\cite{Klimcik:2008eq}. It depends on 
two parameters $\eta$ and $\tilde{\eta}$, such that the case $\eta=\tilde{\eta}=0$ corresponds to the PCM. It is 
integrable \cite{Klimcik:2014bta,Delduc:2015xdm} and has a Lax matrix which satisfies a Poisson bracket 
of Maillet type \eqref{Eq:PBR}. The associated $\Rc$-matrix is not of the form \eqref{Eq:DefR} and so the bi-Yang-Baxter model does not quite fit within the above framework. However, we will see in subsection \ref{Sec:BYB} how the method discussed here can be adapted to apply also to the bi-Yang-Baxter model.

\paragraph{Models with $T>1$.} The second class of examples that we will consider are the $\s$-models 
on $\Z_T$-coset spaces \cite{Young:2005jv}. Consider first the case of symmetric spaces, \textit{i.e.} of $\Z_2$-coset spaces, 
which possess a family of one-parameter deformations (that we shall call d$\Z_2$ models). These models are constructed as follows. Let $\s$ be an involutive automorphism of $G_0$ and let $H=G_0^\s$ denote 
the fixed-point subgroup of $G_0$ under the action of $\s$. The symmetric space is then the quotient $G_0/H$. Differentiating $\s$ at the identity, we induce an automorphism of $\g_0$. We extend it linearly to the complex algebra $\g$ and thus obtain an automorphism of $\g$ of order 2, that we shall still denote $\s$. As a vector space, $\g$ then decomposes as the direct sum of the eigenspaces $\g^{(0)}$ and $\g^{(1)}$, corresponding to the eigenvalues $+1$ and $-1$ (cf. paragraph \ref{Sec:RMatTwist}).

The symmetric space $\s$-model and its deformation are expressed in terms of fields $A^{(k)}$ and $\Pi^{(k)}$ 
valued in the eigenspaces $\g^{(k)}$, for each $k=0, 1$. These fields are defined in terms of the canonical fields
$g$ and $X$ in a model dependent way (in particular, this definition depends on the deformation parameter $\eta$). 
The Lax matrix of the model (deformed or not) is then \cite{Delduc:2013fga}
\begin{equation*}
\Lc_{\dd\Z_2}(\lambda,x)  = A^{(0)}(x) + \frac{1}{2}\left(\frac{1}{\lambda}+\lambda\right)A^{(1)}(x) + \frac{1}{2}(\lambda^2-1) \Pi^{(0)}(x) + \frac{1}{2}\left(\lambda - \frac{1}{\lambda}\right) \Pi^{(1)}(x).
\end{equation*}
The Poisson bracket of this Lax matrix with itself is of Maillet type \eqref{Eq:PBR} with a $\Rc$-matrix of 
the form \eqref{Eq:DefR}. The matrix $\Rc^0$ is given by equation \eqref{Eq:RCyc} with $T=2$ and $\s$ the above 
involutive automorphism of $\g$. The twist function is \cite{Delduc:2013fga}
\begin{equation}\label{Eq:TwistZ2}
\varphi_{\dd\Z_2}(\lambda) = \frac{2\lambda}{(1-\lambda^2)^2 + \eta^2 (1+\lambda^2)^2}.
\end{equation}
The regular zeros of these models are thus the origin $0$ and the infinity $\infty$.\\

To conclude the list of examples let us discuss briefly the general $\Z_T$-coset model, for any $T\in\Z_{\geq 2}$. Let $\s$ be an automorphism of the complexified group $G$, of order $T$. The coset space is then $G_0/H$, where $H=G^\s\cap G_0$. We will still denote by $\s$ the corresponding automorphism of $\g$, which induces an eigenspace decomposition $\g=\bigoplus_{p=0}^{T-1} \g^{(p)}$ (see paragraph \ref{Sec:RMatTwist} for details). The phase space of the model is expressed in terms of fields $A^{(k)}$ and $\Pi^{(k)}$, for $k=0,\ldots,T-1$, which are defined in terms of the canonical fields $g$ and $X$ and belong to $\g^{(k)}$. The Lax matrix is 
then \cite{Ke:2011zzb, Vicedo:2017cge} 
\begin{equation}\label{Eq:LaxZT}
\Lc_{\Z_T}(\lambda,x) = \sum_{k=1}^{T} \frac{(T-k) + k\lambda^{-T}}{T}\lambda^k A^{(k)}(x)  + \sum_{k=1}^{T} \frac{1-\lambda^{-T}}{T} \lambda^k \Pi^{(k)}(x).
\end{equation}
Note that in this equation, and in general, we consider the exponents $(k)$ only modulo $T$, so that $A^{(T)}=A^{(0)}$ for example.

All $\Z_T$-coset models possess a gauge symmetry under the action of the subgroup $H$ of $G_0$. In the Hamiltonian formulation presented above, $A^{(0)}$ plays the role of a gauge field and $\Pi^{(0)}$ is the constraint associated with the gauge symmetry. In those models also, the Poisson bracket of the Lax matrix \eqref{Eq:LaxZT} with itself is non-ultralocal with a twist function. The associated matrix $\Rc^0$ is the one in equation \eqref{Eq:RCyc}, with the automorphism $\s$ introduced above. The twist function 
is \cite{Ke:2011zzb, Vicedo:2017cge}
\begin{equation}\label{Eq:TwistZT}
\varphi_{\Z_T}(\lambda) = \frac{T\lambda^{T-1}}{(1-\lambda^T)^2}.
\end{equation}
As for the $\Z_2$-coset, the regular zeros are $0$ and $\infty$.

\subsection{Infinity and inversion of the spectral parameter}
\label{Sec:Infinity}

In this article we will construct a tower of local charges associated with each regular zero of the twist function. As mentioned in paragraph \ref{Sec:Equi}, the set of regular zeros can include the point at infinity, although the sense in which infinity can be a regular zero is slightly different from the definition of finite regular zeros. In this subsection, we show how the notion of a regular zero at infinity is related to that of a regular zero at the origin through inversion of the spectral parameter, \textit{i.e.} by the change of parameter $\lambda \mapsto \alpha=\lambda^{-1}$.
Under such a change of spectral parameter we have
\begin{equation*}
\varphi(\lambda) \dd\lambda = \psi(\alpha)\dd\alpha,
\end{equation*}
where $\psi(\alpha)$ is defined in equation \eqref{Eq:Psi}. Suppose that infinity is a zero of the twist function, \textit{i.e.} that $\psi(0)=0$, and define
\begin{equation}\label{Eq:DefP}
P(\alpha,x)=\frac{1}{\alpha}\varphi\left(\frac{1}{\alpha}\right)\Lc\left(\frac{1}{\alpha},x\right).
\end{equation}
We will say that infinity is a \textit{regular zero} if $P(\alpha,x)$ is regular at $\alpha=0$. In the remainder of this subsection we will assume this to be the case. We then set
\begin{equation}\label{Eq:DefC}
\Cc(x) = P(0,x).
\end{equation}
From the equivariance properties \eqref{Eq:EquiL} and \eqref{Eq:TwistEqui} of $\Lc$ and $\varphi$, we deduce that $\Cc$ is valued in the grading $\g^{(0)}$. Let us note here that in the $\Z_T$-coset models, described in paragraph \ref{Sec:Examples}, this field $\Cc$ coincides with the gauge constraint $\Pi^{(0)}$.

Starting from the Poisson bracket \eqref{Eq:PBR} and using the form \eqref{Eq:DefR} of the $\Rc$-matrix, we find
\begin{align*}
 \left\lbrace \Lc(\lambda,x)\ti{1}, P(\alpha,y)\ti{2} \right\rbrace & =  \left[\alpha^{-1}\Rc^0\ti{12}\left(\lambda,\alpha^{-1}\right), \Lc(\lambda,x)\ti{1} \right] \delta_{xy} - \left[ \Rc^0\ti{21}\left(\alpha^{-1},\lambda \right)\varphi(\lambda)^{-1}, P(\alpha,x)\ti{2} \right] \delta_{xy} \\
 & \hspace{50pt} - \Bigl( \alpha^{-1}\Rc^0\ti{12}\left(\lambda,\alpha^{-1}\right) - \alpha\psi(\alpha)\Rc^0\ti{21}\left(\alpha^{-1},\lambda \right) \varphi(\lambda)^{-1} \Bigr)  \delta'_{xy}
\end{align*}
Using the expression \eqref{Eq:RCyc} of $\Rc^0$, we have
\begin{equation}\label{Eq:RAsymptoticInfinity}
\alpha^{-1}\Rc^0\ti{12}\left(\lambda,\alpha^{-1}\right) \;\xrightarrow{\alpha\to 0}\; \frac{1}{T} \sum_{k=0}^{T-1} \s^k\ti{1}C\ti{12} = C^{(0)}\ti{12}, \;\;\;\;\; \Rc^0\ti{21}\left(\alpha^{-1},\lambda \right) \;\xrightarrow{\alpha\to 0}\; 0.
\end{equation}
As $P(\alpha,x)$ and $\alpha\psi(\alpha)$ are regular at 0, taking the limit $\alpha\to 0$ in the above Poisson bracket, we then obtain
\begin{equation}\label{Eq:PBLC}
\left\lbrace \Lc(\lambda,x)\ti{1}, \Cc(y)\ti{2} \right\rbrace = \bigl[ C^{(0)}\ti{12}, \Lc(\lambda,x)\ti{1} \bigr] \delta_{xy} - C^{(0)}\ti{12}\delta'_{xy}.
\end{equation}
Applying the same kind of reasoning we also find
\begin{equation}\label{Eq:PBCC}
\left\lbrace \Cc(x)\ti{1}, \Cc(y)\ti{2} \right\rbrace = \bigl[ C^{(0)}\ti{12}, \Cc(x)\ti{1} \bigr] \delta_{xy}.
\end{equation}
Let us define a new Lax matrix
\begin{equation*}
\Lct(\lambda,x) = \Lc(\lambda,x) - \lambda^{-1}\varphi(\lambda)^{-1}\Cc(x).
\end{equation*}
From the fact that $[ C^{(k)}\ti{12}, Z\ti{1} ] = -[ C^{(k)}\ti{12}, Z\ti{2} ]$ for any $Z\in\g^{(0)}$, we find that
\begin{equation*}
\lambda \left[ \Rc^0\ti{21}(\mu,\lambda), Z\ti{2} \right] - \mu \left[ \Rc^0\ti{12}(\lambda,\mu), Z\ti{1} \right]  = \left[ C^{(0)}\ti{12}, Z\ti{2} \right].
\end{equation*}
Using this identity and the Poisson brackets \eqref{Eq:PBR}, \eqref{Eq:PBLC} and \eqref{Eq:PBCC}, we prove that the Poisson bracket of $\Lct$ with itself is also of the $r/s$-form, namely
\begin{align}\label{Eq:PBRt}
\left\lbrace \Lct(\lambda,x)\ti{1}, \Lct(\mu,y)\ti{2} \right\rbrace &=
\left[ \Rct\ti{12}(\lambda,\mu), \Lct(\lambda,x)\ti{1} \right] \delta_{xy} - \left[ \Rct\ti{21}(\mu,\lambda), \Lct(\mu,y)\ti{2} \right] \delta_{xy} \\
 &\hspace{50pt}  - \; \Bigl( \Rct\ti{12}(\lambda,\mu) + \Rct\ti{21}(\mu,\lambda) \Bigr) \delta'_{xy}, \notag
\end{align}
where $\Rct\ti{12}(\lambda,\mu)=\Rct^0\ti{12}(\lambda,\mu)\varphi(\mu)^{-1}$ and
\begin{equation}\label{Eq:DefRct}
\Rct^0\ti{12}(\lambda,\mu) = \Rc^0\ti{12}(\lambda,\mu) - \mu^{-1} C^{(0)}\ti{12}.
\end{equation}
We now define
\begin{equation*}
\Lc^\infty(\alpha,x) = \Lct\left(\frac{1}{\alpha},x\right).
\end{equation*}
The following theorem is the main result of this subsection.
\begin{theorem}\label{Thm:PBLcI}
The Poisson bracket of $\Lc^\infty$ with itself reads
\begin{align}\label{Eq:PBLcI}
\left\lbrace \Lc^\infty(\alpha,x)\ti{1}, \Lc^\infty(\beta,y)\ti{2} \right\rbrace &=
\left[ \Rc^\infty\ti{12}(\alpha,\beta), \Lc^\infty(\alpha,x)\ti{1} \right] \delta_{xy} - \left[ \Rc^\infty\ti{21}(\beta,\alpha), \Lc^\infty(\beta,y)\ti{2} \right] \delta_{xy} \\
 &\hspace{50pt}  - \; \Bigl( \Rc^\infty\ti{12}(\alpha,\beta) + \Rc^\infty\ti{21}(\beta,\alpha) \Bigr) \delta'_{xy}, \notag
\end{align}
where
\begin{equation*}
\Rc^\infty\ti{12}(\alpha,\beta) = \Rc^0\ti{21}(\alpha,\beta)\psi(\beta)^{-1}
\end{equation*}
satisfies the classical Yang-Baxter equation \eqref{Eq:CYBE}.
\end{theorem}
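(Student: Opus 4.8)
The plan is to obtain \eqref{Eq:PBLcI} by a direct change of spectral parameter in the bracket \eqref{Eq:PBRt} already established for $\Lct$, and then to identify the resulting matrix with the announced $\Rc^\infty$. Since $\Lc^\infty(\alpha,x)=\Lct(1/\alpha,x)$ and the distributions $\delta_{xy}$, $\delta'_{xy}$ involve only the spatial variables, I would simply substitute $\lambda=1/\alpha$ and $\mu=1/\beta$ into \eqref{Eq:PBRt}: this leaves the structure of the bracket intact, with no Jacobian factor, because the bracket is evaluated at fixed spectral parameters and the only derivative present is the spatial $\delta'_{xy}$. Reading off the coefficients then puts $\{\Lc^\infty\ti1,\Lc^\infty\ti2\}$ into the $r/s$-form \eqref{Eq:PBLcI} with the identification $\Rc^\infty\ti{12}(\alpha,\beta)=\Rct\ti{12}(1/\alpha,1/\beta)$; the companion identity $\Rc^\infty\ti{21}(\beta,\alpha)=\Rct\ti{21}(1/\beta,1/\alpha)$ appearing in the second and third terms follows automatically by the tensor-swap plus argument-swap convention, so the two identifications are mutually consistent. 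It then remains to rewrite $\Rct\ti{12}(1/\alpha,1/\beta)$ in the stated form and to verify the CYBE.

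For the first task I would start from $\Rct\ti{12}=\bigl(\Rc^0\ti{12}-\mu^{-1}C^{(0)}\ti{12}\bigr)\varphi(\mu)^{-1}$ given by \eqref{Eq:DefRct}, insert $\lambda=1/\alpha$, $\mu=1/\beta$ into the form \eqref{Eq:RCas} of $\Rc^0$, and clear denominators by $\alpha^T\beta^T$. The crux is the purely algebraic identity
\begin{equation*}
\Rc^0\ti{12}(1/\alpha,1/\beta)-\beta\,C^{(0)}\ti{12} = -\beta^2\,\Rc^0\ti{21}(\alpha,\beta),
\end{equation*}
which I would verify by observing that subtracting $\beta\,C^{(0)}\ti{12}$ precisely converts the $p=0$ summand of $\Rc^0\ti{12}(1/\alpha,1/\beta)$ into a ``$p=T$'' summand (using $C^{(T)}\ti{12}=C^{(0)}\ti{12}$), after which the reindexing $p\mapsto T-p$ together with $C^{(p)}\ti{12}=C^{(-p)}\ti{21}$ reproduces $-\beta^2\,\Rc^0\ti{21}(\alpha,\beta)$. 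Combining this with $\psi(\beta)^{-1}=-\beta^2\varphi(1/\beta)^{-1}$, immediate from the definition \eqref{Eq:Psi}, yields exactly $\Rct\ti{12}(1/\alpha,1/\beta)=\Rc^0\ti{21}(\alpha,\beta)\,\psi(\beta)^{-1}=\Rc^\infty\ti{12}(\alpha,\beta)$.

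For the CYBE I would avoid a brute-force expansion by exploiting the structure of $\Rc^0\ti{21}$. Swapping the tensor factors in \eqref{Eq:RCyc} and using $\s\ti1 C\ti{12}=\s^{-1}\ti2 C\ti{12}$, one finds $\Rc^0\ti{21}(\alpha,\beta)=\frac1T\sum_{k=0}^{T-1}\frac{\s^k\ti1 C\ti{12}}{\beta-\omega^{k}\alpha}$; that is, $\Rc^0\ti{21}$ is again of the standard cyclotomic form \eqref{Eq:RCyc}, but for the automorphism $\s$ paired with the primitive root $\omega^{-1}$ in place of $\omega$. Since \eqref{Eq:RCyc} solves \eqref{Eq:CYBE} for \emph{any} choice of automorphism of order $T$ and \emph{any} primitive $T^{\rm th}$ root of unity, $\Rc^0\ti{21}$ is itself a solution of the CYBE; twisting it by the scalar $\psi(\beta)^{-1}$ then preserves this property by exactly the same elementary argument that shows $\Rc=\Rc^0\varphi(\mu)^{-1}$ solves \eqref{Eq:CYBE} in \eqref{Eq:DefR}.

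The substitution and the CYBE step are straightforward once the above form is in hand; I expect the only delicate point to be the bookkeeping in the central identity — in particular keeping track of the ``$21$'' tensor-swap convention, the reindexing of the grading labels modulo $T$, and the sign coming from $\psi$ — since it is precisely the correction term $-\mu^{-1}C^{(0)}\ti{12}$ built into $\Lct$ and $\Rct^0$ that is engineered to turn the inverted standard matrix into $-\beta^2\,\Rc^0\ti{21}$.
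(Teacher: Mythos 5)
Your proposal is correct and follows essentially the same route as the paper: substitute $\lambda=1/\alpha$, $\mu=1/\beta$ into the bracket \eqref{Eq:PBRt}, and reduce everything to the inversion identity $\Rct^0\ti{12}(1/\alpha,1/\beta)=-\beta^2\,\Rc^0\ti{21}(\alpha,\beta)$ (the paper's \eqref{Eq:RtInvR}), which you verify by the same Casimir-decomposition/reindexing computation that underlies the paper's \eqref{Eq:RtCas}, combined with $\psi(\beta)^{-1}=-\beta^{2}\varphi(1/\beta)^{-1}$. Your explicit CYBE argument (recognising $\Rc^0\ti{21}$ as the standard cyclotomic matrix for the inverse action and noting that twisting by $\psi(\beta)^{-1}$ preserves the CYBE) simply spells out what the paper leaves to its remark following the theorem, so it is a welcome but not divergent addition.
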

\begin{proof}
Using equation \eqref{Eq:RCas}, we find that
\begin{equation}\label{Eq:RtCas}
\Rct^0\ti{12}(\lambda,\mu) = \sum_{k=1}^{T} \frac{\lambda^k\mu^{T-1-k}}{\mu^T-\lambda^T} C\ti{12}^{(p)}.\end{equation}
The theorem follows from the Poisson bracket \eqref{Eq:PBRt} and the identity
\begin{equation}\label{Eq:RtInvR}
\Rct^0\ti{12}\left(\frac{1}{\alpha},\frac{1}{\beta}\right)= - \beta^2 \Rc^0\ti{21}(\alpha,\beta),
\end{equation}
which is a consequence of equation \eqref{Eq:RtCas}.
\end{proof}

To interpret Theorem \ref{Thm:PBLcI}, let us note that the matrix $\Rc^0\ti{21}$ is nothing but the matrix $\Rc^0\ti{12}$ for the automorphism $\s^{-1}$. Moreover, from the equivariance properties \eqref{Eq:EquiL} and \eqref{Eq:TwistEqui}, we find that the corresponding properties of $\Lc^\infty$ and $\psi$ are
\begin{equation}\label{Eq:EquiInfinity}
\s^{-1}\bigl(\Lc^\infty(\alpha,x)\bigr) = \Lc^\infty(\omega\alpha,x)
\;\;\;\;\;\;\; \text{ and } \;\;\;\;\;\;\;
\psi(\omega\alpha)=\omega^{-1}\psi(\alpha).
\end{equation}
The Poisson bracket of $\Lc^\infty$ is thus an $r/s$-system with twist function $\psi$, automorphism $\s^{-1}$ and spectral parameter $\alpha=\lambda^{-1}$. Moreover, the point $\alpha=0$ is a regular zero of this $r/s$-system. Indeed, we supposed that $\alpha$ was a zero of $\psi(\alpha)$ and one can check explictly that $\psi(\alpha)\Lc^\infty(\alpha,x)$ is regular at $\alpha=0$.\\

It is worth noting that the procedure just described is involutive, in the following sense. If $\varphi(\lambda)\Lc(\lambda,x)$ is regular at $\lambda=0$, one can check that $\alpha=\infty$ (which corresponds to $\lambda=0$) is a regular zero of the $r/s$-system of $\Lc^\infty$ and, moreover, that the corresponding field $\Cc^\infty$ obtained by evaluating $\lambda^{-1}\psi(\lambda^{-1})\Lc^\infty(\lambda^{-1},x)$ at $\lambda=0$ is equal to $\Cc$. Re-inverting the spectral parameter $\alpha$ to $\lambda=\alpha^{-1}$, we can thus construct a ``new'' Lax matrix $\Lc^\infty(\lambda^{-1},x)-\lambda \psi(\lambda^{-1})^{-1}\Cc(x)$. According to Theorem \ref{Thm:PBLcI}, this Lax matrix should satisfy an $r/s$-system with twist function $\varphi$ and automorphism $\s$. A direct computation reveals that this Lax matrix is actually equal to the initial Lax matrix $\Lc$.\\

Let us end this subsection by illustrating the inversion of spectral parameter on the example of $\Z_T$-coset models. As noted above, for these models the field $\Cc$ coincides with the constraint $\Pi^{(0)}$. After performing the change of spectral parameter $\lambda \mapsto \alpha=\lambda^{-1}$, we find a twist function
\begin{equation*}
\psi_{\Z_T}(\alpha) = -\frac{T\alpha^{T-1}}{(1-\alpha^T)^2} = -\varphi_{\Z_T}(\alpha).
\end{equation*}
Note that the property $\psi(\alpha)=-\varphi(\alpha)$ is also true for the twist function \eqref{Eq:TwistZ2} of the $\eta$-deformed $\Z_2$-model. The new Lax matrix is
\begin{equation*}
\Lc^\infty_{\Z_T}(\alpha,x) = \sum_{k=1}^{T} \frac{(T-k) + k\alpha^{-T}}{T}\alpha^k A^{(T-k)}(x) - \sum_{k=1}^{T} \frac{1-\alpha^{-T}}{T} \alpha^k \Pi^{(T-k)}(x).
\end{equation*}
Comparing this to the initial Lax matrix \eqref{Eq:LaxZT}, we see that it simply corresponds (up to a minus sign on terms involving $\Pi^{(k)}$) to changing every grading $(k)$ to $(T-k)$, which is equivalent to considering the automorphism $\s^{-1}$ instead of $\s$.

\subsection{Poisson brackets of traces of powers of $\Lc$}
\label{subsec-pbtrpow}
Recall that we consider the Lie algebra $\g$ in its defining matrix representation (see Table \ref{Tab:Alg}). We may therefore take powers of elements of $\g$ and traces of these matrices. In the following sections, we will extract local charges in involution from the traces of powers of the Lax matrix $\Lc$. In this subsection, we will establish general results on the Poisson brackets of powers of $\Lc$ and their traces.

\begin{lemma}\label{Lem:PBPow}
Suppose that $X$ and $Y$ are $\g$-valued quantities such that
\begin{equation*}
\left\lbrace X\ti{1}, Y\ti{2} \right\rbrace = \left[a\ti{12},X\ti{1}\right] + \left[b\ti{12},Y\ti{2} \right] + c\ti{12}.
\end{equation*}
Then the Poisson brackets of powers of $X$ and $Y$ are
\begin{equation*}
\left\lbrace X^n\ti{1}, Y^m\ti{2} \right\rbrace = \left[a^{(nm)}\ti{12},X\ti{1}\right] + \left[b^{(nm)}\ti{12},Y\ti{2} \right] + c^{(nm)}\ti{12},
\end{equation*}
where, for $t=a,b,c$, we defined
\begin{equation*}
t^{(nm)}\ti{12} = \sum_{k=0}^{n-1} \sum_{l=0}^{m-1} X\ti{1}^k Y\ti{2}^l \, t\ti{12} \, X\ti{1}^{n-1-k} Y\ti{2}^{m-1-l}.
\end{equation*}
\end{lemma}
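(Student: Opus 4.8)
The plan is to prove this by a double induction on the exponents $n$ and $m$, reducing the general statement to the given base case $n=m=1$ via the Leibniz rule for the Poisson bracket. The key structural observation is that the stated formula for $\{X^n\ti{1}, Y^m\ti{2}\}$ is precisely what one obtains by treating the Poisson bracket as a biderivation and summing the single-factor brackets over all positions in the products $X^n$ and $Y^m$, with the untouched factors lined up on the appropriate sides in the respective tensor slots.

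First I would establish the result for $\{X^n\ti{1}, Y\ti{2}\}$ by induction on $n$ (keeping $m=1$). Writing $X^n\ti{1} = X\ti{1} X^{n-1}\ti{1}$ and applying the Leibniz rule, I get $\{X^n\ti{1}, Y\ti{2}\} = X\ti{1}\{X^{n-1}\ti{1}, Y\ti{2}\} + \{X\ti{1}, Y\ti{2}\} X^{n-1}\ti{1}$, where the factors of $X\ti{1}$ commute through the tensor-slot-$2$ quantity $Y\ti{2}$ and the auxiliary matrices $a\ti{12}, b\ti{12}, c\ti{12}$ only in the sense that they sit in the correct slots. Substituting the base case $\{X\ti{1}, Y\ti{2}\} = [a\ti{12}, X\ti{1}] + [b\ti{12}, Y\ti{2}] + c\ti{12}$ and the inductive hypothesis, one collects terms; the commutators $[a\ti{12}, X\ti{1}]$ telescope into the sum $\sum_{k=0}^{n-1} X\ti{1}^k [a\ti{12}, X\ti{1}] X\ti{1}^{n-1-k}$ precisely because the conjugating powers of $X\ti{1}$ redistribute correctly, and likewise for $b$ and $c$. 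This verifies the formula for general $n$ with $m=1$.

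The second step is to run the analogous induction on $m$ for fixed $n$, now differentiating the product $Y^m\ti{2} = Y^{m-1}\ti{2} Y\ti{2}$ and using the single-$Y$ result from the first step as the base of this second induction. The mechanics are symmetric: the Leibniz rule produces $\{X^n\ti{1}, Y^m\ti{2}\} = \{X^n\ti{1}, Y^{m-1}\ti{2}\} Y\ti{2} + Y^{m-1}\ti{2} \{X^n\ti{1}, Y\ti{2}\}$, and substituting the two sub-results and reorganising yields the claimed double sum $t^{(nm)}\ti{12} = \sum_{k=0}^{n-1}\sum_{l=0}^{m-1} X\ti{1}^k Y\ti{2}^l\, t\ti{12}\, X\ti{1}^{n-1-k} Y\ti{2}^{m-1-l}$ for each $t = a, b, c$.

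The main obstacle, and the point requiring genuine care rather than routine manipulation, is the bookkeeping of the tensor slots: one must check that powers of $X$ really do belong to slot $1$ and powers of $Y$ to slot $2$, so that $X\ti{1}$ and $Y\ti{2}$ commute as operators on $\g \otimes \g$, and that the commutator structure $[a\ti{12}, X\ti{1}]$ is preserved under conjugation by $X\ti{1}^k$ while being merely left/right-multiplied by $Y\ti{2}^l$. Because $X\ti{1}$ acts only in the first factor and $Y\ti{2}$ only in the second, these commute freely, which is what makes the telescoping go through cleanly and lets the two nested inductions decouple into independent sums over $k$ and $l$. No additional input beyond the Leibniz rule and the associativity of matrix multiplication in the defining representation is needed.
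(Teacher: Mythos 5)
Your proof is correct and follows essentially the same route as the paper: the paper applies the Leibniz rule in one shot to write $\left\lbrace X^n\ti{1}, Y^m\ti{2} \right\rbrace$ as the double sum $\sum_{k,l} X\ti{1}^k Y\ti{2}^l \left\lbrace X\ti{1}, Y\ti{2} \right\rbrace X\ti{1}^{n-1-k} Y\ti{2}^{m-1-l}$ and then absorbs the conjugating factors into the commutators via the identity $M_1[M_2,N]M_3 = [M_1M_2M_3,N]$ (valid when $[M_1,N]=[M_3,N]=0$), which is exactly the slot-commutativity bookkeeping you highlight. Your double induction merely unrolls this one-line Leibniz expansion step by step, so the two arguments rest on identical ingredients.
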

\begin{proof}
The Poisson bracket being a derivation, we can use the Leibniz rule yielding
\begin{equation*}
\left\lbrace X^n\ti{1}, Y^m\ti{2} \right\rbrace = \sum_{k=0}^{n-1} \sum_{l=0}^{m-1} X\ti{1}^k Y\ti{2}^l \, \left\lbrace X\ti{1}, Y\ti{2} \right\rbrace \, X\ti{1}^{n-1-k} Y\ti{2}^{m-1-l}.
\end{equation*}
We conclude observing that $ X\ti{1}^k Y\ti{2}^l $ and $X\ti{1}^{n-1-k} Y\ti{2}^{m-1-l}$ commute with $X\ti{1}$ and $Y\ti{2}$ and using the identity
\begin{equation*}
M_1[M_2,N]M_3 = [M_1M_2M_3,N],
\end{equation*}
true for any matrices $M_1$, $M_2$, $M_3$ and $N$ such that $[M_1,N]=[M_3,N]=0$.
\end{proof}

\begin{corollary}\label{Cor:PBTr}
Suppose that $X$ and $Y$ are $\g$-valued quantities such that
\begin{equation*}
\left\lbrace X\ti{1}, Y\ti{2} \right\rbrace = \left[a\ti{12},X\ti{1}\right] + \left[b\ti{12},Y\ti{2} \right] + c\ti{12}.
\end{equation*}
Then we have
\begin{subequations}
\begin{align*}
\bigl\lbrace \emph{\Tr}(X^n), \emph{\Tr}(Y^m) \bigr\rbrace &= nm \, \emph{\Tr}\ti{12} \bigl( c\ti{12}X^{n-1}\ti{1}Y^{m-1}\ti{2} \bigr), \\
\bigl\lbrace X, \emph{\Tr}(Y^m) \bigr\rbrace &= m\left[ \emph{\Tr}\ti{2}\bigl(a\ti{12}Y^{m-1}\ti{2}\bigr), X \right]  + m \, \emph{\Tr}\ti{2}\bigl(c\ti{12}Y^{m-1}\ti{2}\bigr).
\end{align*}
\end{subequations}
\end{corollary}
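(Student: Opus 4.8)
The plan is to derive the two trace formulas of Corollary \ref{Cor:PBTr} directly from Lemma \ref{Lem:PBPow}, which we are entitled to invoke since the hypothesis on $\{X\ti{1}, Y\ti{2}\}$ is identical. The first identity for $\{\Tr(X^n), \Tr(Y^m)\}$ follows by taking the full tensorial trace $\Tr\ti{12}$ of the expression given in Lemma \ref{Lem:PBPow} for $\{X^n\ti{1}, Y^m\ti{2}\}$. The key observation is that the trace of a commutator vanishes: applying $\Tr\ti{12}$ to the terms $[a^{(nm)}\ti{12}, X\ti{1}]$ and $[b^{(nm)}\ti{12}, Y\ti{2}]$ gives zero, since $\Tr\ti{1}([M\ti{1}, X\ti{1}]) = 0$ in the first slot and similarly for the second. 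Hence only the $c^{(nm)}\ti{12}$ term survives.

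The main computation is then to show that $\Tr\ti{12}(c^{(nm)}\ti{12}) = nm\,\Tr\ti{12}(c\ti{12} X\ti{1}^{n-1} Y\ti{2}^{m-1})$. Using the definition $c^{(nm)}\ti{12} = \sum_{k,l} X\ti{1}^k Y\ti{2}^l \, c\ti{12}\, X\ti{1}^{n-1-k} Y\ti{2}^{m-1-l}$, I would exploit cyclicity of the trace in each tensor slot separately: the factor $X\ti{1}^k$ sitting on the left can be cycled around to combine with $X\ti{1}^{n-1-k}$ on the right, giving $X\ti{1}^{n-1}$, and likewise $Y\ti{2}^l$ combines with $Y\ti{2}^{m-1-l}$ to give $Y\ti{2}^{m-1}$. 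Since this holds independently of the summation indices $k$ and $l$, each of the $nm$ summands contributes the same quantity $\Tr\ti{12}(c\ti{12} X\ti{1}^{n-1} Y\ti{2}^{m-1})$, producing the overall factor $nm$.

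For the second identity, I would specialise Lemma \ref{Lem:PBPow} to $n=1$, so that $\{X\ti{1}, Y^m\ti{2}\} = [a^{(1m)}\ti{12}, X\ti{1}] + [b^{(1m)}\ti{12}, Y\ti{2}] + c^{(1m)}\ti{12}$, and then take the partial trace $\Tr\ti{2}$ over the second factor only. The middle term $[b^{(1m)}\ti{12}, Y\ti{2}]$ dies under $\Tr\ti{2}$ because it is a commutator in the second slot. The first term survives as a commutator in the first slot, and by the same cyclicity argument $\Tr\ti{2}(a^{(1m)}\ti{12}) = m\,\Tr\ti{2}(a\ti{12} Y\ti{2}^{m-1})$, while $\Tr\ti{2}(c^{(1m)}\ti{12}) = m\,\Tr\ti{2}(c\ti{12} Y\ti{2}^{m-1})$. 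Collecting these yields the stated formula.

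I do not anticipate a genuine obstacle here, since the result is a direct corollary. The only point requiring mild care is the careful bookkeeping of which tensor slot each trace acts on and verifying that $\Tr\ti{1}([M\ti{1}, X\ti{1}])=0$ and $\Tr\ti{2}([M\ti{2}, Y\ti{2}])=0$ hold slotwise; this is immediate from cyclicity of the ordinary matrix trace applied in the relevant copy of the tensor product. The cleanest organisation is therefore to first record the vanishing of partial traces of commutators, then apply the two cyclicity reductions, and finally assemble the two claimed identities.
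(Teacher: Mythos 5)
Your proposal is correct and follows exactly the paper's route: the paper proves this corollary by applying Lemma \ref{Lem:PBPow} and then invoking cyclicity of the trace together with the vanishing of (partial) traces of commutators, which is precisely what you do, just written out in full detail. The slot-wise identities you flag for care, such as $\Tr\ti{2}\bigl([b^{(1m)}\ti{12}, Y\ti{2}]\bigr)=0$ and $\Tr\ti{2}\bigl([a^{(1m)}\ti{12}, X\ti{1}]\bigr)=\bigl[\Tr\ti{2}(a^{(1m)}\ti{12}), X\bigr]$, indeed hold and close the argument.
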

\begin{proof}
Starting with Lemma \ref{Lem:PBPow}, the corollary follows from the cyclicity of the trace and the vanishing of traces of commutators.
\end{proof}

Let us now apply these results to the Lax matrix $\Lc$. We work in the framework described in subsection \ref{Sec:Model}. We define
\begin{equation}\label{Eq:DefS}
S_n(\lambda,x) = \varphi(\lambda)^n \Lc(\lambda,x)^n
\end{equation}
and
\begin{equation}\label{Eq:DefT}
\Tc_n(\lambda,x) = \Tr\bigl( S_n(\lambda,x) \bigr).
\end{equation}
Starting with the Poisson bracket \eqref{Eq:PBR} and the expression \eqref{Eq:DefR} of the $\Rc$-matrix, we apply Corollary \ref{Cor:PBTr}. We find that
\begin{equation}\label{Eq:PBT}
\left\lbrace \Tc_n(\lambda,x), \Tc_m(\mu,y) \right\rbrace = -nm \, \Tr\ti{12} \Bigl( U\ti{12}(\lambda,\mu) S_{n-1}(\lambda,x)\ti{1}S_{m-1}(\mu,y)\ti{2} \Bigr) \delta'_{xy},
\end{equation}
with
\begin{equation}\label{Eq:DefU}
U\ti{12}(\lambda,\mu) = \varphi(\lambda)\Rc^0\ti{12}(\lambda,\mu) + \varphi(\mu)\Rc^0\ti{21}(\mu,\lambda).
\end{equation}

\section{Charges at non-cyclotomic zeros}
\label{Sec:NonCycZero}

The purpose of this section is to describe the procedure for extracting local charges in involution from non-cyclotomic regular zeros of the twist function $\varphi$. Let us first explain what we mean here by a \emph{non-cyclotomic} point. If $T=1$, \textit{i.e.} if $\s=\Id$ and there 
is no cyclotomic invariance, we define any point as being non-cyclotomic. If $T\in\Z_{>1}$, a non-cyclotomic point is a point which is not fixed by the action of the cyclic group $\Z_T$, \textit{i.e.} which is not the origin or infinity.\\

Throughout this section we fix a non-cyclotomic regular zero $\lambda_0$. We will focus here on the case where $\lambda_0$ is different from infinity. The case $\lambda_0=\infty$ is treated by the same method, just replacing $\Lc$ by $\Lc^\infty$ and $\varphi$ by $\psi$ (cf. subsection \ref{Sec:Infinity}). The fact that $\lambda_0$ is a regular zero implies that $S_n(\lambda,x)$ and $\Tc_n(\lambda,x)$, defined in equations \eqref{Eq:DefS} and \eqref{Eq:DefT}, are both holomorphic at $\lambda=\lambda_0$. Thus, we can define the current
\begin{equation}\label{Eq:DefJNonCyc}
\J_n^{\lambda_0}(x) = \Tc_n(\lambda_0,x).
\end{equation}

Let us briefly comment on the explicit expression of these currents in the case of the PCM. As explained in paragraph \ref{Sec:Examples}, the PCM has two regular zeros at $+1$ and $-1$. The corresponding currents are
\begin{equation*}
\J_{n,\text{PCM}}^{\pm 1}(x) = \Tr\bigl(j_\pm^n(x)\bigr),
\end{equation*}
where $j_\pm(x)=\frac{1}{2}\bigl(j_1(x) \pm j_0(x)\bigr)$. These currents are the one investigated in \cite{Evans:1999mj}, from which local charges in involution for the PCM are constructed. In this section, we will follow the method developed in \cite{Evans:1999mj}, generalising it to any current \eqref{Eq:DefJNonCyc} associated with a non-cyclotomic regular zero $\lambda_0$ of the model.

\subsection{Poisson algebra of the currents}

We begin by computing the Poisson bracket of the currents $\J_n^{\lambda_0}(x)$ and $\J_m^{\lambda_0}(y)$. Specifically, we would like to evaluate equation \eqref{Eq:PBT} at $\lambda=\mu=\lambda_0$. Since $\lambda_0$ is a regular zero, $S_{n-1}(\lambda_0,x)$ and $S_{m-1}(\lambda_0,y)$ are well defined. Thus, it remains to determine $U\ti{12}(\lambda_0,\lambda_0)$. Starting with the definition \eqref{Eq:DefU} of $U$ and using $\varphi(\lambda_0)=0$, one has
\begin{equation*}
U\ti{12}(\lambda,\lambda_0) = \varphi(\lambda)\Rc^0\ti{12}(\lambda,\lambda_0).
\end{equation*}
Recall from equation \eqref{Eq:RCyc} that $\Rc^0\ti{12}(\lambda,\lambda_0)$ is not regular at $\lambda=\lambda_0$, so that we cannot simply evaluate the above equation at $\lambda=\lambda_0$. However, as $\lambda_0$ is a non-cyclotomic point, the matrix $\Rc^0$ has the following local behaviour
\begin{equation}\label{Eq:RAsymptotic}
\Rc^0\ti{12}(\lambda,\lambda_0) = -\frac{1}{T}\frac{C\ti{12}}{\lambda-\lambda_0} + A^{\lambda_0}\ti{12}(\lambda),
\end{equation}
where $A^{\lambda_0}\ti{12}(\lambda)$ is regular at $\lambda=\lambda_0$. Using again $\varphi(\lambda_0)=0$, we then obtain
\begin{equation*}
U\ti{12}(\lambda_0,\lambda_0) = -\frac{\varphi'(\lambda_0)}{T} C\ti{12},
\end{equation*}
where $\varphi'$ denotes the derivative of $\varphi$ with respect to the spectral parameter $\lambda$. Thus, one has
\begin{equation}\label{Eq:PBJ}
\left\lbrace \J^{\lambda_0}_n(x), \J^{\lambda_0}_m(y) \right\rbrace = \frac{nm}{T} \varphi'(\lambda_0) \, \Tr\ti{12} \Bigl( C\ti{12} S_{n-1}(\lambda_0,x)\ti{1}S_{m-1}(\lambda_0,y)\ti{2} \Bigr) \delta'_{xy}.
\end{equation}
Recall the completeness relation
\begin{equation}\label{Eq:CompRel}
\Tr\ti{2}( C\ti{12} Z\ti{2} ) = Z,
\end{equation}
for $Z$ in $\g$. We cannot directly apply this identity to equation \eqref{Eq:PBJ} as $S_{m-1}(\lambda_0,y)$ does not belong to $\g$ in general (recall that $S_{m-1}$ is defined as the $(m-1)^{\rm st}$ power of a matrix in $\g$).

Following~\cite{Evans:1999mj}, we will show in the next subsections how to circumvent this difficulty. We will treat separately the case where $\g$ is of type B, C or D and the case where $\g$ is of type A.

\subsection{Type B, C and D algebras}
\label{Sec:NonCycZeroBCD}

Let us first consider the case where $\g$ is of type B, C or D, \textit{i.e.} where $\g$ is an orthogonal or a symplectic algebra (cf. Table \ref{Tab:Alg}). One can check that, for these algebras, if $X$ belongs to $\g$, $X^n$ also belongs to $\g$ if $n$ is odd. Moreover, all matrices in $\g$ are traceless. We then deduce that the currents $\J_n^{\lambda_0}$ are zero for $n$ odd. Thus, we will only extract local charges from the traces of even powers of $\Lc$, \textit{i.e.} from the currents $\J^{\lambda_0}_{2n}$.

The Poisson bracket of such currents is given by equation \eqref{Eq:PBJ}. The right hand side contains $\Tr\ti{2}\bigl(C\ti{12} S_{2m-1}(\lambda_0,y)\ti{2} \bigr)$, and since $2m-1$ is odd we have $S_{2m-1}(\lambda_0,y) \in \g$. Hence, we can apply the completeness relation \eqref{Eq:CompRel}, which yields
\begin{equation*}
\left\lbrace \J^{\lambda_0}_{2n}(x), \J^{\lambda_0}_{2m}(y) \right\rbrace = 4nm\frac{\varphi'(\lambda_0)}{T} \, \Tr \Bigl( S_{2n-1}(\lambda_0,x) S_{2m-1}(\lambda_0,y) \Bigr) \delta'_{xy}.
\end{equation*}
Using the definition \eqref{Eq:DefS} of $S$, one has
\begin{equation}\label{Eq:DerS}
\Tr\bigl(S_p(\lambda,x)\p_xS_q(\lambda,x)\bigr) = \frac{q}{p+q} \p_x \Tc_{p+q}(\lambda,x).
\end{equation}
Using the identities $f(y)\delta'_{xy}=\p_x \bigl(f(x)\bigr)\delta_{xy}+f(x)\delta'_{xy}$ and \eqref{Eq:DerS}, we obtain
\begin{equation}\label{Eq:PBJTypeBCD}
\left\lbrace \J^{\lambda_0}_{2n}(x), \J^{\lambda_0}_{2m}(y) \right\rbrace = 4nm\frac{\varphi'(\lambda_0)}{T} \left( \J^{\lambda_0}_{2n+2m-2}(x) \delta'_{xy} + \frac{2m-1}{2n+2m-1} \p_x \bigl( \J^{\lambda_0}_{2n+2m-2}(x) \bigr) \delta_{xy} \right).
\end{equation}
Define the local charges
\begin{equation}\label{Eq:DefQJ}
\Q^{\lambda_0}_{2n} = \int \dd x \; \J^{\lambda_0}_{2n}(x),
\end{equation}
where the integration is over the whole domain of the spatial coordinate $x$ (\textit{i.e.} the real line $\R$ or the circle $S^1$).
Once integrated over $y$, the right hand side of \eqref{Eq:PBJTypeBCD} is a total derivative with respect to $x$. Assuming the periodicity of the fields if $x\in S^1$ or that they decrease at infinity if $x\in\R$, we then conclude that
\begin{equation*}
\left\lbrace \Q^{\lambda_0}_{2n}, \Q^{\lambda_0}_{2m} \right\rbrace = 0.
\end{equation*}
In conclusion, we have constructed a tower of local charges $\Q^{\lambda_0}_{2n}$ in involution, as integrals of the currents $\J^{\lambda_0}_{2n}(x)$. These currents are polynomials in the fields appearing in the Lax matrix $\Lc(\lambda,x)$. More precisely, the current $\J^{\lambda_0}_{2n}$ is a homogeneous polynomial of degree $2n$.\\

Up to a global factor, the Poisson bracket \eqref{Eq:PBJTypeBCD} is the same as the bracket (4.16) of~\cite{Evans:1999mj}. Thus, we can apply the methods developed in~\cite{Evans:1999mj}. In particular, this allows to construct a more general tower of local charges $\Q^{\lambda_0}_{2n}(\xi)$ in involution, depending on a free parameter $\xi\in\R$. These charges are defined as integrals
\begin{equation*}
\Q^{\lambda_0}_{2n}(\xi) = \int \dd x \; \K_{2n}^{\lambda_0}(\xi,x)
\end{equation*}
of some currents $\K_{2n}^{\lambda_0}(\xi)$. These currents are given by homogeneous polynomials in the $\J^{\lambda_0}_{2k}$'s, depending on the free parameter $\xi\in\R$. In particular, the first currents $\K_{2n}^{\lambda_0}(\xi)$ are given by:
\begin{align}\label{Eq:KAlpha}
&\K^{\lambda_0}_2(\xi) = \J^{\lambda_0}_2, \;\;\;\;\;\; \K^{\lambda_0}_4(\xi)=\J^{\lambda_0}_4-\frac{3\xi}{2}(\J^{\lambda_0}_2)^2,  \notag \\
&\K^{\lambda_0}_6(\xi) = \J^{\lambda_0}_6 - \frac{15\xi}{4}\J^{\lambda_0}_2\J^{\lambda_0}_4 + \frac{25\xi^2}{8}(\J^{\lambda_0}_2)^3.
\end{align}
The expression of the current $\K_{2n}^{\lambda_0}(\xi)$ is determined (up to a global factor) recursively from equation \eqref{Eq:PBJTypeBCD} by demanding that the charge $\Q_{2n}^{\lambda_0}(\xi)$ be in involution with all the charges $\Q_{2m}^{\lambda_0}(\xi)$ ($m=2,\ldots,n-1$) constructed thus far. It can also be found without recursion with the help of a generating function, which allows a general proof of the involution of the charges $\Q_{2n}^{\lambda_0}(\xi)$: we refer the reader to the subsection \ref{Sec:GenNonCyc} for more details.

Taking $\xi=0$ in equation \eqref{Eq:KAlpha}, we get $\K_{2n}^{\lambda_0}(\xi=0)=\J_{2n}^{\lambda_0}$. Hence, we recover the local charges $\Q^{\lambda_0}_{2n}$ introduced in equation \eqref{Eq:DefQJ} as a special case of this one-parameter family of local charges. For different parameters $\xi$ and $\xi'$, the towers of charges $\Q_{2n}^{\lambda_0}(\xi)$ and $\Q_{2n}^{\lambda_0}(\xi')$ 
do not commute with one another. We thus have to work with a fixed value of $\xi$: in the rest of this article, we will mainly focus on the simplest case $\xi=0$. This choice is justified first by simplicity, but also because the proof of the existence of an integrable hierarchy associated to the charges $\Q_{2n}^{\lambda_0}(\xi)$, presented in section \ref{Sec:IntHierZeroCurv}, works only for the case $\xi=0$.

\subsection{Type A algebras}
\label{Sec:TypeANonCyc}

Let us now consider the case where $\g$ is of type A, \textit{i.e.} where $\g=\sl(d,\C)$ for some $d \in \Z_{\geq 2}$ (see Table \ref{Tab:Alg}). If $X\in\g$, we have $\Tr(X)=0$ by definition, but in general $X^n\notin\g$ and $\Tr(X^n)\neq 0$ for $n \geq 2$. Thus, we consider the currents $\J^{\lambda_0}_n$ for $n\geq 2$. The Poisson bracket between two such currents is given by equation \eqref{Eq:PBJ}. Since in general $S_{m-1}(\lambda_0,y)$ does not belong to $\g$, we cannot use the completeness relation \eqref{Eq:CompRel} to simplify this equation. However, a variant of the identity \eqref{Eq:CompRel} exists for any matrix $Z\in M_d(\C)$. Indeed, using the facts that $Z-\frac{1}{d}\Tr(Z)\Id$ belongs to $\g$ and that $\Tr\ti{2}(C\ti{12})=0$, we find that
\begin{equation}\label{Eq:CompRelSl}
\Tr\ti{2}\bigl(C\ti{12}Z\ti{2}\bigr) = Z - \frac{1}{d}\Tr(Z)\Id.
\end{equation}
Applying this relation to equation \eqref{Eq:PBJ} and using the identities $f(y)\delta'_{xy}=\p_x \bigl(f(x)\bigr)\delta_{xy}+f(x)\delta'_{xy}$ and \eqref{Eq:DerS}, we obtain
\begin{align}\label{Eq:PBJTypeA}
\left\lbrace \J_n^{\lambda_0}(x), \J_m^{\lambda_0}(y) \right\rbrace
&=  nm \frac{\varphi'(\lambda_0)}{T} \left( \J_{n+m-2}^{\lambda_0}(x) \delta'_{xy}  - \frac{1}{d}\J_{n-1}^{\lambda_0}(x)\J_{m-1}^{\lambda_0}(x)\delta'_{xy} \right.  \\
 & \hspace{30pt} \left. + \frac{m-1}{n+m-2} \p_x \left( \J_{n+m-2}^{\lambda_0}(x) \right) \delta_{xy} - \frac{1}{d}\J_{n-1}^{\lambda_0}(x) \p_x \left( \J_{m-1}^{\lambda_0}(x) \right) \delta_{xy} \right). \notag
\end{align}
Integrating both sides over $x$ and $y$, we see that the right hand side does not vanish identically as it did in subsection \ref{Sec:NonCycZeroBCD}. Nevertheless, following the method of~\cite{Evans:1999mj} we will be able to construct new currents $\K_n^{\lambda_0}$ such that the charges
\begin{equation}\label{Eq:DefQK}
\Q^{\lambda_0}_n = \int \dd x \; \K_n^{\lambda_0}(x)
\end{equation}
Poisson commute with one another.\\

The Poisson bracket \eqref{Eq:PBJTypeA} is to be compared to equation (4.5) of~\cite{Evans:1999mj}, from which it differs only by an overall factor. We can therefore directly apply the procedure developed in~\cite{Evans:1999mj} to the present case so as to construct the desired currents $\K^{\lambda_0}_n$'s. The expression for the first $\K^{\lambda_0}_n$'s read
\begin{align}\label{Eq:KJ}
&\K^{\lambda_0}_2 = \J^{\lambda_0}_2, \;\;\; \K^{\lambda_0}_3=\J^{\lambda_0}_3, \;\;\; \K^{\lambda_0}_4=\J^{\lambda_0}_4-\frac{3}{2d}(\J^{\lambda_0}_2)^2,  \;\;\; \K^{\lambda_0}_5 = \J^{\lambda_0}_5 - \frac{10}{3d} \J^{\lambda_0}_2\J^{\lambda_0}_3, \notag \\
&\K^{\lambda_0}_6 = \J^{\lambda_0}_6 - \frac{5}{3d}(\J^{\lambda_0}_3)^2 - \frac{15}{4d}\J^{\lambda_0}_2\J^{\lambda_0}_4 + \frac{25}{8d^2}(\J^{\lambda_0}_2)^3.
\end{align}
These currents are similar to the currents $\K^{\lambda_0}_n(\xi)$ described in \eqref{Eq:KAlpha} for $\g$ of type B, C or D. More precisely, the current \eqref{Eq:KJ} coincide with the currents $\K^{\lambda_0}_n\left(\frac{1}{d}\right)$, recalling that for type B, C and D, the $\J^{\lambda_0}_{2k+1}$'s vanish. As for $\K^{\lambda_0}_n(\xi)$ in type B, C and D, the expression of the current $\K_n^{\lambda_0}$ for type A is determined (up to a global factor) recursively from equation \eqref{Eq:PBJTypeA} by demanding that the charge $\Q_n^{\lambda_0}$ be in involution with all the charges $\Q_m^{\lambda_0}$ ($m=2,\ldots,n-1$) constructed thus far. However, in the present case, one does not have the freedom of a free parameter $\xi$ in the definition of $\K_n^{\lambda_0}$: there is a unique tower of charges in involution $\Q^{\lambda_0}_n$.

As in the case of type B, C and D algebras, the current $\K_n^{\lambda_0}(x)$ is a homogeneous polynomial of degree $n$ in the fields appearing in the Lax matrix $\Lc(\lambda,x)$. And as explained in~\cite{Evans:1999mj}, the degrees $n$ for which the current $\K_n^{\lambda_0}(x)$ is non-zero are the exponents of the untwisted affine Kac-Moody algebra $\widehat{\g}$ plus one.

At this stage, we do not have a proof that the recursive algorithm described above can be applied indefinitely. We shall now recall from~\cite{Evans:1999mj} how to construct explicitly the current $\K_n^{\lambda_0}$ without a recursive algorithm, using generating functions.

\subsection{Generating functions}
\label{Sec:GenNonCyc}

In the previous subsections \ref{Sec:NonCycZeroBCD} and \ref{Sec:TypeANonCyc}, we introduced currents $\K_n^{\lambda_0}(\xi)$ (for types B, C and D) and $\K_n^{\lambda_0}$ (for type A), constructed recursively from the currents $\J^{\lambda_0}_n$ (and which depended on a free parameter $\xi$ for types B, C and D). In this subsection, we will show how to construct these currents using generating functions.

We will mainly focus on the case where $\g$ is of type A and will briefly comment on types B, C and D at the end of the subsection. Let us then suppose that $\g=\sl(d,\C)$, so that we can use the notations and results of subsection \ref{Sec:TypeANonCyc}. We introduce
\begin{equation}\label{Eq:DefF}
F(\lambda,\mu,x)= \Tr \log\bigl( \Id- \mu \varphi(\lambda)\Lc(\lambda,x) \bigr)
\end{equation}
and
\begin{equation}\label{Eq:DefA}
A(\lambda,\mu,x) = \det\bigl( \Id- \mu \varphi(\lambda)\Lc(\lambda,x) \bigr),
\end{equation}
so that $A(\lambda,\mu,x) = \exp\bigl(F(\lambda,\mu,x)\bigr)$. By expanding the matricial logarithm in \eqref{Eq:DefF} as a power series in $\mu$ one finds
\begin{equation} \label{Eq:PowF}
F(\lambda,\mu,x) = - \sum_{k=2}^{\infty} \frac{\mu^k}{k} \Tc_k(\lambda,x),
\end{equation}
with $\Tc_n(\lambda,x)$ defined in equation \eqref{Eq:DefT}. We are interested in the evaluations of $F(\lambda,\mu,x)$ and $A(\lambda,\mu,x)$ at $\lambda=\lambda_0$, which are well defined as $\lambda_0$ is a regular zero. Following~\cite{Evans:1999mj}, we look for $\K_n^{\lambda_0}(x)$ in the form of
\begin{equation}\label{Eq:KGen1}
\K^{\lambda_0}_n(x) = A(\lambda_0,\mu,x)^{p_n} \Bigr|_{\mu^{n}}
\end{equation}
for some rational number $p_n$, where $f(\mu)|_{\mu^n}$ denotes the coefficient of $\mu^n$ in the power series expansion of $f(\mu)$.

The Poisson brackets of the currents $\Tc_n(\lambda_0,x)=\J_n^{\lambda_0}(x)$ are given by equation \eqref{Eq:PBJTypeA}. This allows one to compute $\left\lbrace F(\lambda_0,\mu,x), F(\lambda_0,\nu,y) \right\rbrace$ and $\left\lbrace A(\lambda_0,\mu,x), A(\lambda_0,\nu,y) \right\rbrace$. As equation \eqref{Eq:PBJTypeA} coincides with the equation (4.5) of~\cite{Evans:1999mj} up to a global factor, these Poisson brackets are the same as in~\cite{Evans:1999mj} (equations (4.13) and (4.14)), still up to the global factor. Thus, the procedure of~\cite{Evans:1999mj} applies and we conclude that the Poisson bracket of the local charges \eqref{Eq:DefQK} defined in terms of the currents \eqref{Eq:KGen1} is
\begin{equation*}
\left\lbrace \Q^{\lambda_0}_n, \Q^{\lambda_0}_m \right\rbrace
=  p_n p_m \mu\nu \frac{\varphi'(\lambda_0)}{T} \int \dd x \; A(\lambda_0,\mu,x)^{p_n} \p_x \bigl( A(\lambda_0,\nu,x)^{p_m} \bigr) h_{mn}(\mu,\nu) \Big|_{\mu^n \nu^m},
\end{equation*}
where
\begin{equation} \label{Eq:hnm def}
h_{nm}(\mu,\nu) = \left[ \left( \frac{n-1}{p_n}\nu - \frac{m-1}{p_m}\mu  \right) \frac{1}{\mu-\nu} + \frac{1}{d}\frac{(n-1)(m-1)}{p_n p_m} \right].
\end{equation}
It follows that the charges $\Q_n^{\lambda_0}$ are in involution if we choose, for any $k\in\Z_{\geq 2}$, $p_k=\frac{k-1}{d}$. The corresponding currents are given by
\begin{equation}\label{Eq:KGen2}
\K_n^{\lambda_0}(x) = \left. \exp \left( - \frac{n-1}{d} \sum_{k=2}^{\infty} \frac{\mu^k}{k} \J_k^{\lambda_0}(x) \right) \right|_{\mu^n}.
\end{equation}
One can check that the first currents defined by this generating function are given by equation \eqref{Eq:KJ}, up to overall global factors. The current $\K_n^{\lambda_0}(x)$ is the evaluation at $\lambda=\lambda_0$ of the more general current
\begin{equation}\label{Eq:DefW}
\W_n(\lambda,x) = A(\lambda,\mu,x)^{(n-1)/d} \Bigr|_{\mu^{n}},
\end{equation}
which we will need later. The equation
\begin{equation}\label{Eq:SerieW}
\W_n(\lambda,x) = \left. \exp \left( - \frac{n-1}{d} \sum_{k=2}^{\infty} \frac{\mu^k}{k} \Tc_k(\lambda,x) \right) \right|_{\mu^n}
\end{equation}
allows one to compute $\W_n(\lambda,x)$ as a polynomial in the $\Tc_k(\lambda,x)$. More precisely, $\W_n$ is related to the $\Tc_k$'s in the same way that $\K^{\lambda_0}_n$ is related to the $\J^{\lambda_0}_k$'s.\\

We end this subsection by saying a few words on Lie algebras $\g$ of type B, C or D. In this case, we saw in subsection \ref{Sec:NonCycZeroBCD} that the local charges in involution can be taken as integrals of currents $\K_{2n}^{\lambda_0}(\xi)$, depending on a free parameter $\xi$ (see equation \eqref{Eq:KAlpha}). These currents can be obtained from the $\J^{\lambda_0}_{2k}$'s using a generating function, similar to the one presented above for type A. We will not enter into details here and will just present the final result, based on reference~\cite{Evans:1999mj}. The current $\K^{\lambda_0}_{2n}(\xi)$ can be computed as:
\begin{equation}
\K_{2n}^{\lambda_0}(\xi,x) = \left. \exp \left( - \frac{\xi(n-1)}{2} \sum_{k=1}^{\infty} \frac{\mu^k}{k} \J_{2k}^{\lambda_0}(x) \right) \right|_{\mu^n}.
\end{equation}
Starting from the Poisson bracket \eqref{Eq:PBJTypeBCD}, one can show that the corresponding charges $\Q^{\lambda_0}_{2n}(\xi)$ are in involution, using similar techniques as above for type A. We refer the interested reader to reference~\cite{Evans:1999mj} for details on the proof. An explicit computation shows that the first currents $\K^{\lambda_0}_{2n}(\xi)$ obtained from the above equation are given by equation \eqref{Eq:KAlpha}, up to overall global factors.

\subsection{Summary}
\label{Sec:SummaryNonCyc}

To conclude this section, let us summarise the results that we obtained. In particular, we will use this as an opportunity to extend the notations $\K_n^{\lambda_0}$ and $\W_n$, defined for a type A algebra in the previous subsections, to other types. This will serve to uniformise the notation in the rest of the article.

When $\g$ is of type A, the currents $\K_n^{\lambda_0}(x)$ are given in subsection \ref{Sec:GenNonCyc} through equation \eqref{Eq:KGen2}. We also defined a current $\W_n(\lambda,x)$ depending on the spectral parameter $\lambda$ in equation \eqref{Eq:SerieW}. For a Lie algebra $\g$ of type B, C or D (as treated in subsection \ref{Sec:NonCycZeroBCD}), we introduced currents $\K^{\lambda_0}_n(\xi)$, depending on a free parameter $\xi$. However, as explained at the end of susbection \ref{Sec:NonCycZeroBCD}, we will only use the currents $\J_n^{\lambda_0}(x)=\K_n^{\lambda_0}(\xi=0,x)$ in the rest of this paper. In order to employ uniform notations throughout the paper, we shall define in this case $\K_n^{\lambda_0}(x)=\J_n^{\lambda_0}(x)$ and $\W_n(\lambda,x)=\Tc_n(\lambda,x)$.

With these conventions, independently of the type of $\g$, the current $\K^{\lambda_0}_n(x)$ is the evaluation of $\W_n(\lambda,x)$ at $\lambda=\lambda_0$ and the charge $\Q^{\lambda_0}_n$ is given by
\begin{equation}
\Q_n^{\lambda_0} = \int \dd x \; \K^{\lambda_0}_n(x).
\end{equation}

Recall also that we restrict the degrees $n$ of the currents $\K_n^{\lambda_0}$ to some subset $\E_{\lambda_0}$ of $\Z_{\geq 2}$. In fact, independently of the type of $\g$, $\E_{\lambda_0}$ can (almost) be seen as the set of exponents of the affine algebra $\widehat{\g}$ plus one. This was already observed for type A in subsection \ref{Sec:TypeANonCyc}, based on the results of~\cite{Evans:1999mj}. For types B, C and D, we saw in subsection \ref{Sec:NonCycZeroBCD} that $\E_{\lambda_0}$ is the set of all even numbers, which turns out to coincide with the exponents of $\widehat{\g}$ plus one for types B and C~\cite{Evans:1999mj}. For type D, there are some exponents missing in this construction (the rank modulo the Coxeter number), which are related to the Pfaffian (see~\cite{Evans:1999mj}). Although we do not consider the Pfaffian here, we expect that it should be possible to construct a corresponding local charge in the present framework too.

Having introduced these type-independent notations, we can summarise the results of this section by the following theorem.

\begin{theorem} \label{thm: involution of Qs}
Let $\lambda_0$ be a non-cyclotomic regular zero of the model. Then, for any $m$ and $n$ in $\E_{\lambda_0}$, the charges $\Q^{\lambda_0}_n$ and $\Q^{\lambda_0}_m$ are in involution, \textit{i.e.} we have
\begin{equation*}
\lbrace \Q_n^{\lambda_0}, \Q_m^{\lambda_0} \rbrace = 0.
\end{equation*}
\end{theorem}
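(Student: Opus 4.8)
The plan is to prove involution by treating the two cases (type B, C, D versus type A) separately, exactly as the preceding subsections set up the currents $\K^{\lambda_0}_n$. In both cases, the fundamental input is the Poisson bracket of the basic currents $\J^{\lambda_0}_n$ at a single regular zero, computed in equations \eqref{Eq:PBJTypeBCD} and \eqref{Eq:PBJTypeA}, whose right-hand sides are proportional to $\delta'_{xy}$ and total-derivative terms multiplying $\delta_{xy}$. The key structural observation is that once integrated over both $x$ and $y$, any term containing $\delta'_{xy}$ or a $\p_x$ acting on a local density integrates to a boundary term, which vanishes by the assumed periodicity or decay of the fields.

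For types B, C and D, where we take $\K^{\lambda_0}_{2n}=\J^{\lambda_0}_{2n}$, I would argue directly. Starting from \eqref{Eq:PBJTypeBCD}, integrate $\lbrace \J^{\lambda_0}_{2n}(x), \J^{\lambda_0}_{2m}(y)\rbrace$ over $x$ and $y$. Using the identity $\int f(y)\delta'_{xy}\,\dd y = \p_x f(x)$ and the fact that the remaining term already carries an explicit $\p_x$, both contributions become total $x$-derivatives of the local density $\J^{\lambda_0}_{2n+2m-2}$. These integrate to zero, giving $\lbrace \Q^{\lambda_0}_{2n},\Q^{\lambda_0}_{2m}\rbrace=0$ immediately. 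This part is essentially already carried out in subsection \ref{Sec:NonCycZeroBCD} and requires no new idea.

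For type A, the analogous direct approach fails, because \eqref{Eq:PBJTypeA} contains the extra terms proportional to $\frac{1}{d}\J^{\lambda_0}_{n-1}\J^{\lambda_0}_{m-1}\delta'_{xy}$ and $\frac{1}{d}\J^{\lambda_0}_{n-1}\p_x(\J^{\lambda_0}_{m-1})\delta_{xy}$, which are \emph{not} total derivatives and so do not integrate away. Here I would use the generating-function machinery of subsection \ref{Sec:GenNonCyc}. The strategy is to package all the charges into $A(\lambda_0,\mu,x)^{p_n}$ via \eqref{Eq:KGen1}, compute the Poisson bracket $\lbrace A(\lambda_0,\mu,x)^{p_n}, A(\lambda_0,\nu,y)^{p_m}\rbrace$ using \eqref{Eq:PBJTypeA} together with the chain rule for the derivation property of the bracket, and then integrate over $x$. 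The resulting expression factors as a function of $\mu,\nu$ times $\int \dd x\, A(\lambda_0,\mu,x)^{p_n}\p_x\bigl(A(\lambda_0,\nu,x)^{p_m}\bigr)$, multiplied by the kernel $h_{nm}(\mu,\nu)$ of \eqref{Eq:hnm def}. The crucial point is the choice $p_k=\frac{k-1}{d}$, which makes $h_{nm}(\mu,\nu)$ symmetric in a way that allows the cross-terms to be rewritten as a total $x$-derivative, so that the integral vanishes after extracting the coefficient of $\mu^n\nu^m$.

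The main obstacle is precisely this type A computation: organizing the Poisson bracket of two generating functions and identifying the algebraic condition on $p_n,p_m$ under which the non-total-derivative contributions conspire to integrate to zero. The cleanest route is to follow \cite{Evans:1999mj}, noting that \eqref{Eq:PBJTypeA} agrees with equation (4.5) of that reference up to an overall factor $\varphi'(\lambda_0)/T$; consequently the bracket $\lbrace A(\lambda_0,\mu,x)^{p_n}, A(\lambda_0,\nu,y)^{p_m}\rbrace$ reproduces the brackets (4.13)–(4.14) of \cite{Evans:1999mj} up to that same factor, and the vanishing argument there applies verbatim once $p_k=\frac{k-1}{d}$ is imposed. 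Since the overall scalar factor $\varphi'(\lambda_0)/T$ does not affect whether the right-hand side is a total derivative, involution follows. The statement for general $n\in\E_{\lambda_0}$ is then inherited, since $\K^{\lambda_0}_n$ vanishes identically outside $\E_{\lambda_0}$.
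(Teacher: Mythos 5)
Your proposal follows the paper's own proof essentially verbatim: types B, C and D are handled by the direct total-derivative argument of subsection \ref{Sec:NonCycZeroBCD}, and type A by the generating-function construction of subsection \ref{Sec:GenNonCyc} with the choice $p_k=\frac{k-1}{d}$, deferring to \cite{Evans:1999mj} exactly as the paper does. One small precision: with that choice of $p_k$ the kernel $h_{nm}(\mu,\nu)$ of \eqref{Eq:hnm def} vanishes identically (the two terms become $-d$ and $+d$), so the bracket is zero outright, rather than through any further rewriting of the cross-terms as a total $x$-derivative.
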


The notations and results summarised above will be generalised to the case of 
cyclotomic zeros in the following section. Let us note here that there will be some 
subtlety in the definition of the current $\W_n$ for an algebra of type A in the case 
when the automorphism $\s$ is inner, compared to the definition 
given above. We shall discuss this in subsection \ref{Sec:SummaryCyc}.

\section{Charges at cyclotomic zeros}
\label{Sec:CycZero}

In this section, we explain how to construct towers of local charges in involution attached to cyclotomic regular zeros of the twist function $\varphi$. Recall that a cyclotomic point is a point fixed by the action of the cyclic group $\Z_T$, \textit{i.e.} the origin or infinity. Suppose we are considering a model with a regular zero at infinity. As explained in subsection \ref{Sec:Infinity}, working in the new spectral parameter $\alpha=\lambda^{-1}$ and with the new Lax matrix $\Lc^\infty$ amounts 
to treating, instead, a model with  a regular zero at $\alpha=0$ and automorphism $\s^{-1}$. Hence it is sufficient to describe the extraction of local charges at the origin.

Throughout this section we therefore consider a  model with $T>1$ and 
a regular zero at the origin. We thus have $\varphi(0)=0$ and $\varphi(\lambda)\Lc(\lambda,x)$ regular at 0. Using the equivariance property \eqref{Eq:TwistEqui}, we see that the smallest power of $\lambda$ in $\varphi$ is of the form $\alpha T-1$, for some $\alpha\in\Z_{\geq 1}$. In terms of the function $\zeta$, defined in equation \eqref{Eq:DefZeta}, this implies that $\zeta(\lambda^T)=O(\lambda^{\alpha T})$. We will mostly need the fact that $\zeta(\lambda^T)=O(\lambda^T)$, \textit{i.e.} that $\varphi(\lambda)=O(\lambda^{T-1})$, and more precisely the asymptotic property
\begin{equation}\label{Eq:ZetaAsymptotic}
\zeta(\lambda^T)=\zeta'(0)\lambda^T + O(\lambda^{2T}).
\end{equation}

Recall that in the previous section we extracted local charges by evaluating the traces of powers of $\varphi(\lambda)\Lc(\lambda,x)$ at the regular zeros. In the case of a cyclotomic point, this method is not sufficient to extract all charges, as such traces can vanish. To understand how to construct the whole algebra of local charges, we will first need to establish equivariance properties of $S_n(\lambda,x)$.

\subsection{Equivariance properties}
\label{Sec:EquivT}

Recall the equivariance properties \eqref{Eq:TwistEqui} and \eqref{Eq:EquiL} of $\varphi$ and $\Lc$. In this subsection, we look for a similar relation for $S_n(\lambda,x)$. In general, $S_n(\lambda,x)$ does not belong to the Lie algebra $\g$ since it is defined as the power of an element of $\g$ seen in the fundamental representation. Thus, one cannot consider directly the action of $\s$ on $S_n(\lambda,x)$.

We refer here to the discussion of appendix \ref{App:ExtSigma}. We will restrict to the case where $\s$ is not one of the special automorphisms of $D_4 = \so(8,\C)$. In this case, we can extend $\s$ to a linear endomorphism on the space $F$ of all matrices acting on the fundamental representation, that we shall still denote $\s$ (see details in appendix \ref{App:ExtSigma}). Note that this new endomorphism $\s$ of $F$ is still of order $T$. We will also need the following properties of $\s$. For any $Z\in F$ we have
\begin{subequations}
\begin{align}
\s(Z^n) &= \epsilon^{n-1} \s(Z)^n, \label{Eq:SigmaPow}\\
\Tr\bigl(\s(Z)\bigr) &= \epsilon \Tr(Z), \label{Eq:SigmaTr}
\end{align}
\end{subequations}
for some $\epsilon$ in $\lbrace 1,-1 \rbrace$. Note that $\epsilon$ is always 1 except when $\g=\sl(d,\C)$ and $\s$ has a non-trivial outer part, in which case $T$ is even. We shall write $\epsilon=\omega^{\frac{\eta T}{2}}$, with $\eta$ in $\lbrace 0,1 \rbrace$.

From equations \eqref{Eq:TwistEqui} and \eqref{Eq:EquiL} and the identity \eqref{Eq:SigmaPow}, we deduce that $S_n$ satisfies the equivariance property
\begin{equation}\label{Eq:EquiS}
\s\bigl(S_n(\lambda,x)\bigr) = \omega^{\kappa(n-1)+1} S_n(\omega\lambda,x),
\end{equation}
with $\kappa=1+\frac{\eta T}{2}$. Let us consider the power series expansion
\begin{equation}\label{Eq:PowS}
S_n(\lambda,x) = \sum_{r=0}^{\infty} A_{n,r}(x) \lambda^r.
\end{equation}
We then find
\begin{equation}\label{Eq:EquiA}
\s(A_{n,r})=\omega^{r+\kappa(n-1)+1} A_{n,r}.
\end{equation}
Taking the trace and using equation \eqref{Eq:SigmaTr}, we find
\begin{equation}\label{Eq:EquiTrA}
\Tr(A_{n,r})=\omega^{r+n\kappa} \Tr(A_{n,r}).
\end{equation}
Thus, $\Tr(A_{n,r})$ vanishes except if $r\equiv r_n \, [T]$, where $r_n$ is the remainder of the euclidian division of $-n\kappa$ by $T$. We define
\begin{equation*}
\J^0_n(x) = \Tr \bigl( A_{n,r_n}(x) \bigr).
\end{equation*}
In particular, the first term in the power series expansion of $\Tc_n(\lambda,x)$ is $\lambda^{r_n} \J^0_n(x)$. Note that $\J^0_n(x)$ is the evaluation of $\Tc_n(\lambda,x)$ at $\lambda=0$ if and only if $r_n=0$, \textit{i.e.} if $T$ divides $n\kappa$. Note also, as $-2\kappa \equiv -2 \, [T]$, that $r_2=T-2$. Thus, we find
\begin{equation*}
\J^0_2(x) = \zeta'(0) \res_{\lambda=0} \varphi(\lambda) \Tr \bigl( \Lc(\lambda,x)^2 \bigr),
\end{equation*}
where $\zeta$ was defined in equation \eqref{Eq:DefZeta}. Finally, let us remark that equation \eqref{Eq:EquiTrA} implies that $\Tc_n(\lambda,x)$ has the following equivariance property
\begin{equation}\label{Eq:EquiT}
\Tc_n(\omega\lambda,x) = \omega^{r_n} \Tc_n(\lambda,x).
\end{equation}

\subsection{Poisson algebra of the currents}
\label{Sec:PBCurrentsCyc}

One can extract the Poisson brackets of the currents $\J^0_n(x)$ and $\J^0_m(y)$ as the coefficient of $\lambda^{r_n+r_m}$ in the power series expansion of $\left\lbrace \Tc_n(\lambda,x), \Tc_m(\lambda,y) \right\rbrace$. The latter can be computed from equation \eqref{Eq:PBT}. Specifically, using the identity \eqref{Eq:RCas} we find
\begin{equation*}
\lambda\mu U\ti{12}(\lambda,\mu)
= - \frac{\zeta(\lambda^T)-\zeta(\mu^T)}{\lambda^T-\mu^T} \sum_{k=0}^{T-1} \lambda^k \mu^{T-k} C^{(k)}\ti{12} + \zeta(\mu^T) C^{(0)}\ti{12},
\end{equation*}
with $\zeta$ defined in equation \eqref{Eq:DefZeta}. Taking the limit $\mu \to \lambda$ we obtain
\begin{equation}\label{Eq:UAround0}
U\ti{12} (\lambda,\lambda) = - \lambda^{T-2} \zeta'(\lambda^T) C\ti{12} + \lambda^{-2} \zeta(\lambda^T) C\ti{12}^{(0)},
\end{equation}
so that
\begin{align}\label{Eq:PBTCyc}
\left\lbrace \Tc_n(\lambda,x), \Tc_m(\lambda,y) \right\rbrace
&  = nm \lambda^{T-2} \zeta'(\lambda^T) \Tr\ti{12}\Bigl( C\ti{12} S_{n-1}(\lambda,x)\ti{1} S_{m-1}(\lambda,y)\ti{2} \Bigr) \delta'_{xy} \\
& \hspace{40pt} -nm \lambda^{-2} \zeta(\lambda^T) \Tr\ti{12}\Bigl( C^{(0)}\ti{12} S_{n-1}(\lambda,x)\ti{1} S_{m-1}(\lambda,y)\ti{2} \Bigr) \delta'_{xy}. \notag
\end{align}

The first term of this Poisson bracket has the same structure as the 
Poisson bracket \eqref{Eq:PBJ}. The main difference coming from cyclotomy is thus the second term, which involves the partial Casimir $C\ti{12}^{(0)}$. We recall that we have the partial completeness relation
\begin{equation}\label{Eq:CompRelPart}
\Tr\ti{2}(C^{(0)}\ti{12} Z\ti{2}) = \pi^{(0)} (Z),
\end{equation}
for any $Z\in\g$.\\

The second term in \eqref{Eq:PBTCyc} will therefore involve the projection $S^{(0)}_{n-1}(\lambda,x)$ of $S_{n-1}(\lambda,x)$ onto the grading zero $F^{(0)} = \left\lbrace Z \in F \, | \, \s(Z)=Z \right\rbrace$. To determine these projections, we can make use of the power series expansion \eqref{Eq:PowS} and equation \eqref{Eq:EquiA}. In particular, one finds that $A_{n-1,r+T}$ is in $F^{(0)}$ if and only if $A_{n-1,r}$ also belongs to $F^{(0)}$. Let us then define $q_n$ to be the unique integer between $0$ and $T-1$ such that $A_{n-1,r}$ belongs to $F^{(0)}$ if and only if $r \equiv q_n \, [T]$. Using equation \eqref{Eq:EquiA} we find $q_n \equiv r_n+1 \, [T]$. So $q_n = r_n +1$ if $r_n \leq T-2$ and $q_n=0$ if $r_n=T-1$.\\

\noi To simplify the Poisson bracket \eqref{Eq:PBTCyc}, we will need to distinguish between three cases:
\begin{itemize}
\item $\g$ is of type B, C or D,
\item $\g$ is of type A and $\s$ is inner,
\item $\g$ is of type A and $\s$ is not inner.
\end{itemize}

\subsection{Algebra of type B, C or D}
\label{Sec:CycBCD}

We first consider $\g$ to be of type B, C or D.  Recall that in this case $S_{2n-1}(\lambda,x)$ belongs to the Lie algebra so that $\Tc_{2n-1}(\lambda,x)$ is zero and hence we consider only the currents $\J^0_{2n}(x)$. Moreover, we can use the completeness relations \eqref{Eq:CompRel} and \eqref{Eq:CompRelPart} in \eqref{Eq:PBTCyc}. We then find
\begin{align} \label{Eq:PBTCycBCD}
\left\lbrace \Tc_{2n}(\lambda,x), \Tc_{2m}(\lambda,y) \right\rbrace
& = 4nm \lambda^{T-2} \zeta'(\lambda^T) \Tr\Bigl( S_{2n-1}(\lambda,x) S_{2m-1}(\lambda,y) \Bigr) \delta'_{xy}  \\
& \hspace{40pt} - 4nm \lambda^{-2} \zeta(\lambda^T) \Tr\Bigl( S^{(0)}_{2n-1}(\lambda,x) S^{(0)}_{2m-1}(\lambda,y) \Bigr) \delta'_{xy}. \notag
\end{align}
After integration over $y$, the first term becomes a total derivative with respect to $x$ by virtue of equation \eqref{Eq:DerS} and thus vanishes when integrated over $x$.\\

Recall moreover that the Poisson bracket of $\J^0_{2n}(x)$ with $\J^0_{2m}(y)$ is obtained from \eqref{Eq:PBTCycBCD} by keeping only the term $\lambda^{r_{2n}+r_{2m}}$ in the power series expansion. We note that the smallest power of $\lambda$ in the second term of \eqref{Eq:PBTCycBCD} is $\alpha T-2+q_{2n}+q_{2m}$ (cf. equation \eqref{Eq:ZetaAsymptotic} and above). As we saw in the previous subsection, $q_k=r_k+1$ if $r_k \leq T-2$ and $q_k = 0$ if $r_k=T-1$. In the case where $r_{2n}$ and $r_{2m}$ are different from $T-1$, the smallest power of $\lambda$ is then $r_{2n}+r_{2m}+\alpha T$ so the second term of \eqref{Eq:PBTCycBCD} does not contribute to the Poisson bracket of $\J^0_{2n}(x)$ with $\J^0_{2m}(y)$, as $\alpha \geq 1$.

If $r_{2n}$ or $r_{2m}$ is equal to $T-1$ then there will be a contribution from this term involving other objects than only the $\J
^0_k$'s, preventing us from constructing charges in involution. Thus, we will only consider the currents $\J^0_{2k}(x)$ such that $r_{2k} \neq T-1$. We then have
\begin{equation*}
\left\lbrace \Q^0_{2n}, \Q^0_{2m} \right\rbrace = 0,
\end{equation*}
where $\Q^0_{2k}$ is the integral of the current $\J^0_{2k}(x)$.\\

We have thus extracted a tower of local charges in involution from the Lax matrix around the origin. Just as in the non-cyclotomic case, these charges are integrals of some polynomials of even degrees in the fields appearing in the Lax matrix. The main difference with the non-cyclotomic case is the fact that, in general, we do not have a current of any even degree. More precisely, we `dropped' the currents of degree $2n$, for all $n$ such that $r_{2n}= T-1$. Recall from appendix \ref{App:ExtSigma} that in the case of an algebra of type B, C or D, we have $\epsilon=1$ and $\kappa=1$. Thus $r_{2n}$ is the remainder of the euclidian division of $-2n$ by $T$, which means that $r_{2n}=T-1$ if and only if $2n \equiv 1 \, [T]$. In particular, we see that there is no drop of any degrees if $T$ is even.

\subsection{Algebra of type A and $\s$ inner}
\label{Sec:TypeATrivial}

Let us now suppose that $\g$ is $\sl(d,\C)$ and $\s$ is inner. In this case, we have the generalised completeness relation 
\eqref{Eq:CompRelSl}. Moreover, we also have a similar identity for the partial Casimir $C\ti{12}^{(0)}$, derived as follows. Recall that for any $Z\in F$, $Z-\frac{1}{d}\Tr(Z)\Id$ 
belongs to $\g$. Moreover, we note that the identity $\Id$ is in the 
grading zero $F^{(0)}$ for $\s$ inner (cf. appendix \ref{App:ExtSigma}). Using equation \eqref{Eq:CompRelPart}, we then have
\begin{equation}\label{Eq:CompRelPartSl}
\Tr\ti{2} \left( C^{(0)}\ti{12} Z\ti{2} \right) = \pi^{(0)}(Z) - \frac{1}{d} \Tr(Z) \Id.
\end{equation}
Using equations \eqref{Eq:CompRelSl} and \eqref{Eq:CompRelPartSl} in the Poisson bracket \eqref{Eq:PBTCyc}, we obtain
\begin{align} \label{Eq:PBTCycA1}
& \left\lbrace \Tc_n(\lambda,x), \Tc_m(\lambda,y) \right\rbrace
 = nm \lambda^{T-2} \zeta'(\lambda^T) \Tr\Bigl( S_{n-1}(\lambda,x) S_{m-1}(\lambda,y) \Bigr) \delta'_{xy} \\
& \hspace{6pt} - nm \lambda^{-2} \zeta(\lambda^T) \Tr\Bigl( S^{(0)}_{n-1}(\lambda,x) S^{(0)}_{m-1}(\lambda,y) \Bigr) \delta'_{xy}
 + \frac{nm}{d} \frac{ \zeta(\lambda^T) - \lambda^{T} \zeta'(\lambda^T)}{\lambda^2} \Tc_{n-1}(\lambda,x) \Tc_{m-1}(\lambda,y) \delta'_{xy} \notag
\end{align}

The Poisson bracket of $\J^0_n(x)$ with $\J^0_m(y)$ is obtained by extracting the coefficient of $\lambda^{r_n+r_m}$ in the above equation.
To treat the second term on the right hand side of this equation, we follow the discussion of the previous subsection \ref{Sec:CycBCD}. The smallest power of $\lambda$ appearing in this term is $\alpha T-2+q_n+q_m$ and if we restrict to $n$ and $m$ such that $r_n$ and $r_m$ are different from $T-1$, this power is strictly greater than $r_n+r_m$. The term then does not contribute to the Poisson bracket $\left\lbrace \J^0_n(x), \J^0_m(y) \right\rbrace$. Let us turn to the third term on the right hand side of equation \eqref{Eq:PBTCycA1}. It can be seen from equation \eqref{Eq:ZetaAsymptotic} that $\zeta(\lambda^T)-\lambda^T \zeta'(\lambda^T) = O(\lambda^{2T})$. The smallest power of $\lambda$ that can appear in this term is thus $2T-2+r_{n-1}+r_{m-1}$, which is always greater than $2T-2$ and therefore strictly greater than $r_n+r_m$ if $r_n$ and $r_m$ are different from $T-1$.\\

In conclusion, only the first term of the right hand side of \eqref{Eq:PBTCycA1} contributes to the Poisson bracket $\left\lbrace \J^0_n(x), \J^0_m(y) \right\rbrace$, which then has the same structure as in the previous subsection. Integrating this bracket over $x$ and $y$, we recognise the integral of a total derivative proportional to $\p_x\Tc_{n+m-2}(\lambda,x)$, which then vanishes, assuming appropriate boundary conditions. Thus, for any $n$ and $m$ such that $r_n$ and $r_m$ are different from $T-1$, we have
\begin{equation*}
\left\lbrace \Q^0_n, \Q^0_m \right\rbrace = 0
\end{equation*}
with $\Q^0_k$ the integral of the current $\J^0_k(x)$. As in the subsection 
\ref{Sec:CycBCD}, we have $\epsilon=1$ and $\kappa=1$ for $\s$ inner. It follows that the integers $n$ such that $r_n=T-1$ (for which we do not consider the charge $\Q^0_n$) are the ones equal to 1 modulo $T$.

\subsection{Algebra of type A and $\s$ not inner}
\label{Sec:TypeANonTrivial}

Finally, let us treat the case where $\g=\sl(d,\C)$ and $\s$ not inner. 
In particular, this implies that $T$ is even and we shall write $T=2S$ in this subsection. We still 
have the generalised completeness relation \eqref{Eq:CompRelSl}. As $\s$ is not inner, 
we have $\s(\Id)=-\Id$ and hence $\pi^{(0)}(\Id)=0$. We deduce that in this case, the partial completeness relation \eqref{Eq:CompRelPart} actually holds for any $Z\in F$. Equation \eqref{Eq:PBTCyc} then gives
\begin{align} \label{Eq:PBTCycA2}
& \hspace{-5pt}\left\lbrace \Tc_n(\lambda,x), \Tc_m(\lambda,y) \right\rbrace  = nm \lambda^{T-2} \zeta'(\lambda^T) \Tr\Bigl( S_{n-1}(\lambda,x) S_{m-1}(\lambda,y) \Bigr) \delta'_{xy} \\
& \hspace{28pt} - nm \lambda^{-2} \zeta(\lambda^T) \Tr\Bigl( S^{(0)}_{n-1}(\lambda,x) S^{(0)}_{m-1}(\lambda,y) \Bigr) \delta'_{xy}  - \frac{nm}{d} \lambda^{T-2} \zeta'(\lambda^T) \Tc_{n-1}(\lambda,x) \Tc_{m-1}(\lambda,y) \delta'_{xy}. \notag
\end{align}
We follow the method of the previous subsections and look for the power $r_n+r_m$ of $\lambda$ in the right hand side of this bracket. As explained in subsection \ref{Sec:CycBCD}, the second term does not contribute when we restrict to $r_n$ and $r_m$ different from $T-1$.\\

The first term is treated as in the case of a non-cyclotomic point: using the identity $f(y)\delta'_{xy}=f(x)\delta'_{xy}+\p_x\bigl(f(x)\bigr) \delta_{xy}$ and the equation \eqref{Eq:DerS}, we find
\begin{equation}\label{Eq:SxS}
\Tr\Bigl( S_{n-1}(\lambda,x) S_{m-1}(\lambda,y) \Bigr) \delta'_{xy} = \Tc_{n+m-2}(\lambda,x) \delta'_{xy} + \frac{m-1}{n+m-2} \p_x \bigl( \Tc_{n+m-2}(\lambda,x) \bigr) \delta_{xy}.
\end{equation}
The powers of $\lambda$ appearing in the power series expansion of the first term are then of the form $r_{n+m-2} - 2 + aT$, with $a \in \Z_{\geq 1}$. One has $r_{n+m-2} \equiv r_n +r_m + 2 \, [T]$ and $0 \leq r_{n+m-2} \leq T-1$. Moreover, $r_n+r_m+2$ is always between $0$ and $2T-2$ if we suppose $r_n$ and $r_n$ different from $T-1$. If $ 0 \leq r_n+r_m+2 < T $, we have $r_{n+m-2}=r_n+r_m+2$ and the powers $r_{n+m+2}-2+ aT$ are then all strictly greater than $r_n+r_m$. If $T \leq r_n+r_m+2 \leq 2T-2$ then $r_{n+m-2}=r_n+r_m+2-T$ and the power $r_{n+m-2}-2+aT$ is equal to $r_n+r_m$ if and only if $a=1$.

Finally, let us consider the third term on the right hand side of \eqref{Eq:PBTCycA2}. The powers of $\lambda$ in its power series expansion are of the form $r_{n-1}+r_{m-1}-2+ aT$, with $a \in \Z_{\geq 1}$. Note that $r_{k-1} \equiv r_k + 1 + S \equiv r_k + 1 - S \, [T]$. Thus, $r_{k-1} = r_k + 1 + S$ if $0 \leq r_k < S-1$ and $r_{k-1} = r_k +1 - S$ if $S-1 \leq r_k \leq T -1$. We then conclude that the power $r_{n-1}+r_{m-1}-2+aT$ is equal to $r_n+r_m$ if and only if $r_n + 1 -S \geq 0$, $r_m + 1 - S \geq 0$ and $a=1$.

Combining all the above results, we find a closed expression for the Poisson bracket of the currents $\J^0_n(x)$ and $\J^0_m(y)$ when $r_n$ and $r_m$ are different form $T-1$, specifically
\begin{align}\label{Eq:PBJCycA}
& \left\lbrace \J^0_n(x), \J^0_m(y) \right\rbrace
= \theta_{r_n+r_m+2-T} \, nm\zeta'(0) \left( \J^0_{n+m-2}(x) \delta'_{xy} + \frac{m-1}{n+m-2} \p_x \left( \J_{n+m-2}^{0}(x) \right) \delta_{xy}  \right) \notag \\
 & \hspace{20pt} - \theta_{r_n+1-S}\, \theta_{r_m+1-S} \, \frac{nm}{d}\zeta'(0)  \Bigl(  \J_{n-1}^{0}(x)\J_{m-1}^{0}(x)\delta'_{xy} + \J_{n-1}^{0}(x) \p_x \left( \J_{m-1}^{0}(x) \right) \delta_{xy} \Bigr),
\end{align}
where $\theta_k=1$ if $k\in\Z_{\geq 0}$ and $\theta_k=0$ if $k\in\Z_{<0}$.\\

As in the case of a non-cyclotomic zero, one can construct charges in involution by taking integrals of new currents $\K^0_n(x)$, constructed as polynomials of the currents $\J^0_m(x)$ for $m \leq n$. The method in the present case is similar: one can construct the currents $\K^0_n$ recursively by asking that the corresponding charge $\Q^0_n$ Poisson commutes with $\Q^0_m$ for all $m<n$. One of the main difference with the non-cyclotomic case is the fact that we do not consider a current $\K^0_n$ when $r_n=T-1$ (we say that such a degree $n$ drops from the construction). The second difference is the presence of the terms $\theta_k$ in the Poisson bracket \eqref{Eq:PBJCycA} compared to \eqref{Eq:PBJTypeA}. Since these terms depend on the numbers $r_k$, the construction of the $\K^0_n$'s will depend on $T$. As an illustration, we give here the expression of the first $\K^0_n$'s for $T$ equal to 2 and 4.

In the case $T=2$, we have $\kappa=2$ hence $r_n=0$ for all $n\geq 2$. Thus, there is no drop of any current due to the condition $r_n\neq T-1$ and the current $\J^0_n(x)$ is simply the evaluation of $\Tc_n(\lambda,x)$ at $\lambda=0$. Moreover, since $2-T$ and $1-S$ are both zero for $T=2$, we note that all $\theta_k$ terms in the Poisson bracket \eqref{Eq:PBJCycA} are equal to 1. The construction for the $\K^0_n$'s is therefore the same as in the non-cyclotomic case and their expression is given by \eqref{Eq:KJ}.

Let us now consider the case $T=4$. We find $\kappa=3$ and $r_n \equiv n \, [4]$ and therefore drop the currents $\J^0_{4k+3}(x)$. Constructing the $\K^0_n$'s recursively we find
\begin{align}\label{Eq:KJCyc4}
&\K^0_2 = \J^0_2, \;\;\; \K^0_4=\J^0_4, \;\;\; \K^0_5=\J^0_5, \\
&\K^0_6 = \J^0_6 - \frac{15}{4d}\J^0_2\J^0_4, \;\;\; \K^0_8 = \J^0_8 - \frac{7}{4d}\bigl(\J_4^0\bigr)^2 \notag
\end{align}
where we dropped the currents $\K^0_3$ and $\K^0_7$ and more generally all currents $\K^0_{4k+3}$.\\

Comparing these currents to the ones constructed in the non-cyclotomic case \eqref{Eq:KJ}, one can observe a pattern in the cyclotomic procedure. Here also, the current $\K^0_n$ is constructed by correcting $\J^0_n$ with monomials $\J^0_{m_1} \ldots \J^0_{m_p}$ such that $m_1+\ldots+m_p=n$. We observe that not all the monomials in the non-cyclotomic corrections appear among the cyclotomic ones but the ones that do have the same coefficients (for example $15/4d$ for the $\J_2\J_4$ correction of $\J_6$). Moreover, we note that a monomial $\J^0_{m_1} \ldots \J^0_{m_p}$ appearing in the non-cyclotomic procedure is also present in the cyclotomic expression if and only if $r_{m_1} + \ldots + r_{m_p} = r_n$.

The above observations are still found to hold for larger values of $T$ and $n$ (although we do not include the corresponding expression of $\K^0_n$ for conciseness). This allows one to find the currents $\K^0_n(x)$ without going through the recursive procedure if one already knows the result for the non-cyclotomic case. A more systematic approach to constructing higher conserved charges in involution in the cyclotomic case would be to find a generating function for the $\K^0_n$'s, generalising the results 
of subsection \ref{Sec:GenNonCyc}. This will be the subject of the next subsection.

\subsection{Generating function for type A with $\s$ not inner}
\label{Sec:CycGenerating}

In subsection \ref{Sec:GenNonCyc} we presented, following~\cite{Evans:1999mj}, the generating function for constructing the currents $\K_n(x)$ in the non-cyclotomic setting. In particular, we found that the relation between the $\K_n(x)$'s and the $\J_m(x)$'s is given by equation \eqref{Eq:KGen2}. In the previous subsection we showed how the currents $\K^0_n(x)$ could be constructed in the cyclotomic case from the knowledge of the corresponding result in the non-cyclotomic case. In particular, starting from the expression of $\K_n$ as a polynomial of the $\J_m$'s, we observed that $\K^0_n$ can be constructed in the same way by keeping monomials $\J^0_{m_1} \ldots \J^0_{m_p}$ with the same coefficient if and only if $r_n=r_{m_1}+\ldots+r_{m_p}$. This procedure for going from the non-cyclotomic to the cyclotomic setting has a natural interpretation in terms of equation \eqref{Eq:KGen2}. Indeed, the current $\K^0_n$ constructed above is equal to
\begin{equation}\label{Eq:KGenCyc1}
\hspace{-2pt}\K_n^{0}(x) = \left. \exp \left( - \frac{n-1}{d} \sum_{k=2}^{\infty} \frac{\mu^k}{k} \lambda^{r_k}\J_k^{0}(x) \right) \right|_{\mu^n \, \lambda^{r_n}},
\end{equation}
where the projection onto the term $\lambda^{r_n}$ ensures that we keep only the monomials satisfying the condition $r_n=r_{m_1}+\ldots+r_{m_p}$.\\

Recall that $\lambda^{r_k} \J_k^0(x)$ is the first term in the power series expansion of $\Tc_k(\lambda,x)$. Moreover, the next terms start with a $(r_k+T)^{\rm th}$ power of $\lambda$. Since $r_n < T$, such terms can be added to the exponent in equation \eqref{Eq:KGenCyc1} without changing the left hand side as they cannot contribute to a $\lambda^{r_n}$-term. We may therefore also write
\begin{equation}\label{Eq:KGenCyc2}
\hspace{-2pt}\K_n^{0}(x) = \left. \exp \left( - \frac{n-1}{d} \sum_{k=2}^{\infty} \frac{\mu^k}{k} \Tc_k(\lambda,x) \right) \right|_{\mu^n \, \lambda^{r_n}}.
\end{equation}
In terms of the definitions \eqref{Eq:DefF} and \eqref{Eq:DefA} of $F(\lambda,\mu,x)$ and $A(\lambda,\mu,x)$ and equation \eqref{Eq:PowF}, we can further re-express equation \eqref{Eq:KGenCyc2} as
\begin{equation}\label{Eq:KGenCyc3}
\K_n^{0}(x) = \left. A(\lambda,\mu,x)^{(n-1)/d} \right|_{\mu^n \, \lambda^{r_n}},
\end{equation}
or again
\begin{equation*}
\K_n^{0}(x) = \left. \exp \left( \frac{n-1}{d} F(\lambda,\mu,x) \right) \right|_{\mu^n \, \lambda^{r_n}}.
\end{equation*}
Starting from equation \eqref{Eq:PBTCycA2}, using equation \eqref{Eq:SxS} and the identity $f(y)\delta'_{xy}=f(x)\delta'_{xy}+\p_x\bigl(f(x)\bigr) \delta_{xy}$, we get
\begin{equation}\label{Eq:PBTTypeA}
\hspace{-4pt}\left\lbrace \Tc_n(\lambda,x), \Tc_m(\lambda,y) \right\rbrace = nm \, \Omega_{nm}(\lambda,x,y) + nm \, \Delta_{nm}(\lambda,x,y)
\end{equation}
where
\begin{subequations}
\begin{align}
\Omega_{nm}(\lambda,x,y)
&=  \lambda^{T-2}\zeta'(\lambda) \left( \Tc_{n+m-2}(\lambda,x) \delta'_{xy}  - \frac{1}{d}\Tc_{n-1}(\lambda,x)\Tc_{m-1}(\lambda,x)\delta'_{xy}\right. \label{Eq:DefOmega} \\
 & \hspace{40pt} \left. + \frac{m-1}{n+m-2} \p_x \left( \Tc_{n+m-2}(\lambda,x) \right) \delta_{xy} - \frac{1}{d}\Tc_{n-1}(\lambda,x) \p_x \left( \Tc_{m-1}(\lambda,x) \right) \delta_{xy} \right), \notag \\[2mm]
 \Delta_{nm}(\lambda,x,y) &=\lambda^{-2} \zeta(\lambda^T) \Tr\Bigl( S^{(0)}_{n-1}(\lambda,x) S^{(0)}_{m-1}(\lambda,y) \Bigr) \delta'_{xy}. \label{Eq:DefDelta}
\end{align}
\end{subequations}

We want to compute the Poisson brackets between the charges $\Q^0_n$ defined as integrals of the currents \eqref{Eq:KGenCyc3}. To begin with, note that the Poisson bracket between $F(\lambda,\mu,x)$ and $F(\lambda,\nu,y)$ can be obtained from equations \eqref{Eq:PowF} and \eqref{Eq:PBTTypeA}. We then find that
\begin{equation}\label{Eq:PBApAq}
\left\lbrace A(\lambda,\mu,x)^p, A(\lambda,\nu,y)^q \right\rbrace = pq \, A(\lambda,\mu,x)^p A(\lambda,\nu,y)^q \sum_{k,l=2}^{\infty} \bigl(\Omega_{kl}(\lambda,x,y)+\Delta_{kl}(\lambda,x,y) \bigr) \mu^k \nu^l.
\end{equation}
Up to a global factor and treating the spectral parameter $\lambda$ as an external parameter, $\Omega_{nm}(\lambda,x,y)$ has the same structure as the right hand side of equation \eqref{Eq:PBJTypeA}, which as we saw already had the same structure as equation (4.5) of~\cite{Evans:1999mj}. This equation (4.5) is used in~\cite{Evans:1999mj} to compute the Poisson brackets of the generating functions (equations (4.13) to (4.15)). The method developed in~\cite{Evans:1999mj} for computing these Poisson brackets then applies to the terms $\Omega_{kl}$ in equation \eqref{Eq:PBApAq} and gives a similar result, up to the global factor and the dependence on $\lambda$ that we keep. Specifically, we find
\begin{align} \label{Eq:PBGenCyc}
& \hspace{-30pt}\int \dd x \int \dd y \left\lbrace A(\lambda,\mu,x)^{p_n} \Bigr|_{\mu^n}, A(\lambda,\nu,y)^{p_m} \Bigr|_{\nu^m} \right\rbrace \\
& \hspace{30pt} = \, \left.  p_n p_m \, \lambda^{T-2}\zeta'(\lambda) \, \mu\nu \int \dd x \; A(\lambda,\mu,x)^{p_n} \p_x \bigl( A(\lambda,\nu,x)^{p_m} \bigr) h_{nm}(\mu,\nu) \;\; \right|_{\mu^n\nu^m} \notag  \\
& \hspace{50pt} + \left. p_n p_m \int \dd x \int \dd y  \; \sum_{k=2}^n \sum_{l=2}^m \Delta_{kl}(\lambda,x,y) \, A(\lambda,\mu,x)^{p_n} \Bigl|_{\mu^{n-k}} A(\lambda,\nu,y)^{p_m} \;\; \right|_{\nu^{n-l}}, \notag
\end{align}
where $h_{nm}(\mu,\nu)$ was defined in equation \eqref{Eq:hnm def}.

The first term on the right hand side of \eqref{Eq:PBGenCyc}, proportional to $h_{nm}(\mu,\nu)$, vanishes when $p_k = \frac{k-1}{d}$ for all $k \in \Z_{\geq 2}$, as expected. It therefore remains to show that the second term also does not contribute when we restrict to the $(r_n+r_m)$-th power of $\lambda$. From equation \eqref{Eq:DefDelta}, we see that the powers of $\lambda$ appearing in the power series expansion of $\Delta_{kl}(\lambda,x,y)$ are of the form $q_k+q_l-2+aT$, with $a\,\geq\alpha\geq\, 1$ and $q_n$ defined in subsection \ref{Sec:PBCurrentsCyc}.

The equivariance property \eqref{Eq:EquiT} can be rewritten as $\Tc_n(\omega\lambda,x)=\omega^{-n\kappa} \Tc_n(\lambda,x)$. In terms of the generating function $F(\lambda,\mu,x)$, this can be re-expressed as $F(\omega\lambda,\mu,x)=F(\lambda,\omega^{-\kappa} \mu,x)$. Thus, we also have $A(\omega\lambda,\mu,x)^p=A(\lambda,\omega^{-\kappa} \mu,x)^p$. Finally, we deduce that
\begin{equation}\label{Eq:EquiAGen}
A(\omega\lambda,\mu,x)^p \Bigr|_{\mu^k} = \omega^{r_k} A(\lambda,\mu,x)^p \Bigr|_{\mu^k}
\end{equation}
In particular, this implies that the powers of $\lambda$ appearing in $A(\lambda,\mu,x)^p \bigr|_{\mu^k}$ are of the form $r_k+bT$ with $b\geq 0$. In conclusion, the powers of $\lambda$ in
\begin{equation}\label{Eq:DeltaAA}
\Delta_{kl}(\lambda,x,y) \, A(\lambda,\mu,x)^p \Bigl|_{\mu^{n-k}} A(\lambda,\nu,y)^q \Bigr|_{\nu^{n-l}}
\end{equation}
are of the form $q_k+r_{n-k}+q_l+r_{m-l}-2+cT$, with $c\geq 1$.

Recall from subsection \ref{Sec:PBCurrentsCyc} that $q_k \equiv r_k+1 \, [T]$, and therefore $q_k+r_{n-k} \equiv r_n + 1 \, [T]$. Suppose now that $r_n$ and $r_m$ are different from, and so in particular strictly less than, $T-1$. As $q_k+r_{n-k}$ is always positive and congruent to $r_n+1$ modulo $T$, which is strictly less than $T$, it then follows that $q_k+r_{n-k} \geq r_n +1$. Similarly, we have $q_l+r_{m-l} \geq r_m+1$ and we thus deduce that $q_k+r_{n-k}+q_l+r_{m-l}-2+cT$ is greater than or equal to $r_n + r_m + T$. We deduce that the term \eqref{Eq:DeltaAA} cannot contribute to the $(r_n+r_m)^{\rm th}$ power of $\lambda$, as required.\\

In conclusion, we have found that, for any $n$ and $m$ such that $r_n$ and $r_m$ are different from $T-1$, one has $\left\lbrace \Q^0_n, \Q^0_m \right\rbrace = 0$, where the charge $\Q^0_k$ is defined as the integral of the current $\K_k^0(x)$ given by \eqref{Eq:KGenCyc3}.
Recall that the current $\J^0_n(x)$ is constructed as the coefficient of $\lambda^{r_n}$ in the power series expansion of $\Tc_n(\lambda,x)$. Similarly, one can rewrite equation \eqref{Eq:KGenCyc3} as
\begin{equation}\label{Eq:KW}
\K^0_n(x) = \W_n(\lambda,x)\Bigr|_{\lambda^{r_n}},
\end{equation}
with $\W_n$ defined in equation \eqref{Eq:DefW}.

\subsection{Summary}
\label{Sec:SummaryCyc}

Let us summarise the results of this section, as we did for non-cyclotomic zeros in subsection \ref{Sec:SummaryNonCyc}. In general, we define a charge $\Q_n^0$, associated with a current $\K_n^0(x)$, by
\begin{equation}\label{Eq:SummaryCyc}
\Q_n^0 = \int \dd x \; \K_n^0(x), \hspace{30pt} \text{ with } \hspace{30pt}
\K_n^0(x) = \W_n(\lambda,x) \Bigr|_{\lambda^{r_n}},
\end{equation}
where the definition of $\W_n(\lambda,x)$ and $r_n$ depends on the type of 
$\g$ and whether $\s$ is inner or not.

For $\g$ of type B, C or D (see subsection \ref{Sec:CycBCD}) and for $\g$ of type A and $\s$ 
 inner (see subsection \ref{Sec:TypeATrivial}), we simply choose 
 $\W_n(\lambda,x)=\Tc_n(\lambda,x)$, so that $\K_n^0(x)=\J_n^0(x)$. In this case, 
 we consider $r_n$ as the remainder of the euclidian division of $-n$ by $T$. When 
 $\g$ is of type A and $\s$ is not inner (the case discussed in subsections \ref{Sec:TypeANonTrivial} and \ref{Sec:CycGenerating}), we choose $\W_n(\lambda,x)$ as given by equation \eqref{Eq:SerieW}. In this case, $r_n$ is defined as the remainder of the euclidian division of $-n\big( 1+\frac{T}{2} \big)$ by $T$. \\

The equivariance properties \eqref{Eq:EquiT} and \eqref{Eq:EquiAGen} imply that, 
independently of the type of $\g$ and  of $\s$ being inner or not, we have
\begin{equation}
\W_n(\omega\lambda,x) = \omega^{r_n}\W_n(\lambda,x),
\end{equation}
for $\W_n$ defined as above. It therefore follows that the powers of $\lambda$ appearing in $\W_n(\lambda,x)$ are of the form $r_n+kT$, with $k\in\Z_{\geq 0}$. In particular, the current $\K^0_n(x)$ of equation \eqref{Eq:SummaryCyc} is the coefficient of the smallest power of $\lambda$ in $\W_n(\lambda,x)$.\\

As in the non-cyclotomic case, we restrict the degree $n$ of the currents $\K^0_n(x)$ to some subset $\E_0$ of $\Z_{\geq 2}$. More precisely, $n$ belongs to $\E_0$ if $n-1$ is an exponent of the affine algebra $\widehat{\g}$ and $r_n$ is different from $T-1$ (with the exception of the exponents related to the Pfaffian in type D, as in subsection \ref{Sec:SummaryNonCyc}). The results of this section can be summarised as the following theorem.

\begin{theorem} \label{thm: involution of Q0s}
Let $n,m\in\E_0$. Then the charges $\Q^0_n$ and $\Q^0_m$ are in involution, \textit{i.e.} we have
\begin{equation*}
\left\lbrace \Q^0_n, \Q^0_m \right\rbrace = 0.
\end{equation*}
\end{theorem}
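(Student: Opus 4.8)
The plan is to establish $\{\Q^0_n,\Q^0_m\}=0$ by splitting into the three cases distinguished in subsection \ref{Sec:PBCurrentsCyc}: (i) $\g$ of type B, C or D; (ii) $\g$ of type A with $\s$ inner; and (iii) $\g$ of type A with $\s$ not inner. In every case the common starting point is the Poisson bracket \eqref{Eq:PBTCyc} of the traces $\Tc_n$. Applying the relevant completeness relations splits it into a \emph{regular} part built from the full Casimir $C\ti{12}$ --- reproducing the non-cyclotomic bracket \eqref{Eq:PBJ} --- and one or more \emph{cyclotomic} corrections built from the partial Casimir $C^{(0)}\ti{12}$, involving the graded projections $S^{(0)}_{n-1}$, together with residual $\tfrac1d$-trace terms in type A. Since $\K^0_n$ is the coefficient of $\lambda^{r_n}$ in $\W_n$, the bracket $\{\Q^0_n,\Q^0_m\}$ is controlled by the coefficient of $\lambda^{r_n+r_m}$ in the bracket of these generating objects.

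First I would isolate the mechanism shared by all three cases. The regular part, after applying the completeness relation and \eqref{Eq:DerS}, reduces at the relevant order to a total $x$-derivative that integrates to zero under the assumed boundary conditions. The genuinely delicate contributions are the cyclotomic corrections, and the key point dictating the restriction to $\E_0$ is a power-counting argument: these terms carry an extra factor $\lambda^T$ (through $\zeta(\lambda^T)=O(\lambda^T)$, see \eqref{Eq:ZetaAsymptotic}), and since $q_k\equiv r_k+1\,[T]$ (subsection \ref{Sec:PBCurrentsCyc}) their lowest power of $\lambda$ is at least $r_n+r_m+T$ whenever $r_n,r_m\neq T-1$. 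Hence for $n,m\in\E_0$ they never reach the order $\lambda^{r_n+r_m}$ and drop out entirely.

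For cases (i) and (ii) this observation, already recorded in \eqref{Eq:PBTCycBCD} and \eqref{Eq:PBTCycA1}, closes the argument at once, since only the regular part survives and it integrates to zero. The hard part will be case (iii), where the charges are built through the generating function \eqref{Eq:KGenCyc3}, so one cannot argue term by term but must propagate the cyclotomic contributions $\Delta_{kl}$ through the full generating-function computation \eqref{Eq:PBGenCyc}. Here I would first check that the $h_{nm}$-proportional regular term vanishes for the choice $p_k=(k-1)/d$, and then show that the remaining sum of $\Delta_{kl}$ terms \eqref{Eq:DeltaAA} cannot contribute at order $\lambda^{r_n+r_m}$. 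This last step is the crux: it relies on the refined equivariance \eqref{Eq:EquiAGen}, which forces the powers of $\lambda$ in $A(\lambda,\mu,x)^p|_{\mu^k}$ to lie in $r_k+T\Z_{\geq 0}$, together with the congruence $q_k+r_{n-k}\equiv r_n+1\,[T]$ and positivity, giving $q_k+r_{n-k}\geq r_n+1$ and likewise for $m$. Combining these bounds shows \eqref{Eq:DeltaAA} contributes only at order $\geq r_n+r_m+T$, so it too drops out. Assembling the three cases then yields $\{\Q^0_n,\Q^0_m\}=0$ for all $n,m\in\E_0$.
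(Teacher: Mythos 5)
Your proposal is correct and follows essentially the same route as the paper: the same three-case split (B/C/D; type A with $\s$ inner; type A with $\s$ not inner), the same power-counting argument showing that the partial-Casimir and $\tfrac{1}{d}$-trace corrections start at order at least $r_n+r_m+T$ once $r_n,r_m\neq T-1$ (using $q_k\equiv r_k+1\,[T]$ and \eqref{Eq:ZetaAsymptotic}), the regular term integrating to a total derivative, and, in the non-inner type A case, the generating-function computation \eqref{Eq:PBGenCyc} with $p_k=(k-1)/d$ killing the $h_{nm}$ term and the equivariance \eqref{Eq:EquiAGen} plus the bound $q_k+r_{n-k}\geq r_n+1$ excluding the $\Delta_{kl}$ contributions at order $\lambda^{r_n+r_m}$.
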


There is a subtlety in the definition of $\W_n(\lambda,x)$ for $\g$ of type A 
and $\s$ inner. Indeed, in this case the current $\K_n^0(x)$ is extracted just from $\Tc_n(\lambda,x)$ as recalled above. Yet in section \ref{Sec:NonCycZero}, the current $\K^{\lambda_0}_n$ at a non-cyclotomic point was extracted instead from $\W_n(\lambda,x)$ which differs from $\Tc_n(\lambda,x)$. Thus, for this case, we  choose the appropriate definition of $\W_n(\lambda,x)$ depending on whether the regular zeros of the considered model are cyclotomic or not.\\

We end this section by an open question. For a non-cyclotomic regular zero $\lambda_0$ and $\g$ of type B, C or D, we considered local charges in involution $\Q_n^{\lambda_0}$ built as the integral of the currents $\J_n^{\lambda_0}(x)$ (see subsection \ref{Sec:NonCycZeroBCD}). However, based on the results of~\cite{Evans:1999mj}, we also exhibited a more general family of local charges in involution $\Q^{\lambda_0}_n(\xi)$, depending on a free parameter $\xi\in\R$ and whose corresponding currents $\K_n^{\lambda_0}(\xi,x)$ are constructed as polynomials in the $\J^{\lambda_0}_k$'s.

In the subsection \ref{Sec:CycBCD} of the present section, where we deal with a cyclotomic regular zero at the origin for $\g$ of type B, C or D, the charges $\Q^0_n$ are also constructed simply as integrals of the currents $\J^0_n(x)$. It is thus natural to ask if there exists in this case a more general family of charges $\Q^0_n(\xi)$, depending on a free parameter $\xi$, as for the non-cyclotomic case. Moreover, for $\g$ of type A and $\s$ inner (as treated in subsection \ref{Sec:TypeATrivial}), the charges $\Q^0_n$ are also integrals of the currents $\J_n^0(x)$ (we do not need to construct more complicated currents to obtain charges in involution). It is thus also natural to look for a more general family of charges $\Q^0_n(\xi)$. This would be an interesting result, as it would exhibit an important qualitative difference between the non-cyclotomic case and the cyclotomic one (with $\s$ inner), for $\g$ of type A.

We expect these one-parameter families (for $\g$ of type B, C and D or for $\g$ of type A with $\s$ inner) to exist. More precisely, we expect them to be given by the first non-zero coefficient in the power series expansions of a suitable generating function, depending on $\xi$, around the cyclotomic regular zero $\lambda=0$. As for the non-cyclotomic case, we will focus in this article on the local charges which do not depend on a free parameter $\xi$ (as described at the beginning of this subsection), for the same reasons as the ones discussed at the end of subsection \ref{Sec:NonCycZeroBCD} for a non-cyclotomic zero.

\section{Properties of the local charges} \label{Sec:PrOfLocCha}

\subsection{Algebra of local charges in involution}
\label{Sec:AlgebraLoc}

In the previous sections, we constructed a tower of local charges in involution at every regular zero $\lambda_0 \in \Zc$. More precisely, we constructed currents $\K_n^{\lambda_0}(x)$, with degrees $n$ in some subset $\E_{\lambda_0}$ of $\Z_{\geq 2}$, such that the charges $\Q^{\lambda_0}_n$ defined as the integral of $\K^{\lambda_0}_n(x)$ are in involution with one another. In this subsection, we study the whole algebra of local charges in involution, formed by all the $\Q^{\lambda_0}_n$ for $\lambda_0\in \Zc$ and $n\in\E_{\lambda_0}$. More precisely, we prove that currents $\K_n^{\lambda_0}(x)$ and $\K_m^{\mu_0}(y)$ extracted at different regular zeros are in involution. We establish this result for regular zeros in $\Zc$, excluding the point at infinity. If infinity is a regular zero then one can also extract charges in involution $\Q^\infty_n$, using the results of subsection \ref{Sec:Infinity}. The Poisson brackets of the corresponding currents with the currents at finite regular zeros involve some subtleties and will be treated separately at the end of the subsection.\\

In general, the currents $\K_n^{\lambda_0}(x)$ and $\K_m^{\mu_0}(y)$ are constructed as polynomials of the currents $\J_n^{\lambda_0}(x)$ and $\J_m^{\mu_0}(y)$. It is therefore sufficient to prove that the Poisson bracket of $\J_n^{\lambda_0}(x)$ and $\J_m^{\mu_0}(y)$ is zero. The currents $\J_n^{\lambda_0}(x)$ and $\J_m^{\mu_0}(y)$ are extracted from $\Tc_n(\lambda,x)$ and $\Tc_m(\mu,y)$, whose Poisson bracket is given by equation \eqref{Eq:PBT}. We can suppose that $\mu_0$ is different from $0$ and thus is a non-cyclotomic point, so that $\J^{\mu_0}_m(y) = \Tc_m(\mu_0,y)$. Using the fact that $U\ti{12}(\lambda,\mu_0)=\varphi(\lambda)\Rc^0\ti{12}(\lambda,\mu_0)$ since $\varphi(\mu_0)=0$, we can evaluate equation \eqref{Eq:PBT} at $\mu=\mu_0$ to find
\begin{equation}\label{Eq:PBTJ}
\left\lbrace \Tc_n(\lambda,x), \J^{\mu_0}_m(y) \right\rbrace = -nm \, \varphi(\lambda) \Tr\ti{12} \Bigl( \Rc^0\ti{12}(\lambda,\mu_0) S_{n-1}(\lambda,x)\ti{1}S_{m-1}(\mu_0,y)\ti{2} \Bigr) \delta'_{xy}.
\end{equation}
We will now treat separately the cases $\lambda_0$ cyclotomic or $\lambda_0$ non-cyclotomic.\\

Suppose that $\lambda_0$ is non-cyclotomic so that $\J^{\lambda_0}_n(x)$ is simply the evaluation of $\Tc_n(\lambda,x)$ at $\lambda=\lambda_0$. Recall from paragraph \ref{Sec:Equi} that, by construction of the set $\Zc$, as $\lambda_0$ and $\mu_0$ are different elements of $\Zc$, the cyclotomic orbits $\Z_T\lambda_0$ and $\Z_T\mu_0$ are disjoint. In particular, this means that $\Rc^0\ti{12}(\lambda,\mu_0)$ is regular at $\lambda=\lambda_0$. Indeed, by equation \eqref{Eq:RCyc} the poles of $\Rc^0\ti{12}(\lambda,\mu_0)$ are the points of the orbit $\Z_T\mu_0$. Moreover, $S_{n-1}(\lambda,x)$ is regular at $\lambda=\lambda_0$ and we have $\varphi(\lambda_0)=0$. Thus, evaluating equation \eqref{Eq:PBTJ} at $\lambda=\lambda_0$ we find that the currents $\J^{\lambda_0}_n(x)$ and $\J^{\mu_0}_m(y)$ are in involution, as expected.\\

Let us now treat the cyclotomic case where $\lambda_0=0$. In this case, $\J^0_n(x)$ is the coefficient of $\lambda^{r_n}$ in the power series expansion of $\Tc_n(\lambda,x)$ (cf. subsection \ref{Sec:EquivT}). The Poisson bracket of $\J^0_n(x)$ with $\J^{\mu_0}_m(y)$ is then the coefficient of $\lambda^{r_n}$ in equation \eqref{Eq:PBTJ}. Recall from section \ref{Sec:CycZero} that for $n\in\E_0$, we have $r_n < T-1$. Yet, in equation \eqref{Eq:PBTJ}, $\Rc^0\ti{12}(\lambda,\mu_0)$ and $S_{n-1}(\lambda,x)$ are regular at $\lambda=0$ and $\varphi(\lambda)=O(\lambda^{T-1})$, hence the involution of $\J^0_n(x)$ and $\J^{\mu_0}_m(y)$. In conclusion, we have proved

\begin{theorem}\label{Thm:DiffZeros}
Let $\lambda_0,\mu_0 \in \Zc$ and let $n\in\E_{\lambda_0}$ and $m\in\E_{\mu_0}$. Then, if $\lambda_0\neq\mu_0$, we have
\begin{equation*}
\left\lbrace \J^{\lambda_0}_n(x), \J^{\mu_0}_m(y) \right\rbrace = 0.
\end{equation*}
\end{theorem}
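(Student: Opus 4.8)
The plan is to reduce the statement to the vanishing of $\left\lbrace \J^{\lambda_0}_n(x), \J^{\mu_0}_m(y) \right\rbrace$: since each current $\K^{\lambda_0}_n$ is a polynomial in the $\J^{\lambda_0}_k$'s, and likewise for $\K^{\mu_0}_m$, the bracket of the $\K$'s follows by the Leibniz rule once the brackets of the $\J$'s are controlled. The natural starting point is the master formula \eqref{Eq:PBT} for $\left\lbrace \Tc_n(\lambda,x), \Tc_m(\mu,y) \right\rbrace$, which is proportional to $U\ti{12}(\lambda,\mu)$ with $U$ given by \eqref{Eq:DefU}. The structural input I would exploit is the defining property of $\Zc$ recalled in subsection \ref{Sec:Equi}: for distinct $\lambda_0,\mu_0\in\Zc$ the orbits $\Z_T\lambda_0$ and $\Z_T\mu_0$ are disjoint. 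In particular at most one of $\lambda_0,\mu_0$ can equal the origin, so without loss of generality I may assume $\mu_0\neq 0$, i.e. $\mu_0$ is non-cyclotomic and hence $\J^{\mu_0}_m(y)=\Tc_m(\mu_0,y)$.

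The first real step is to specialise \eqref{Eq:PBT} to $\mu=\mu_0$. Because $\varphi(\mu_0)=0$, the second summand of \eqref{Eq:DefU} drops out and $U\ti{12}(\lambda,\mu_0)=\varphi(\lambda)\Rc^0\ti{12}(\lambda,\mu_0)$, so that $\left\lbrace \Tc_n(\lambda,x), \J^{\mu_0}_m(y) \right\rbrace$ becomes proportional to $\varphi(\lambda)\,\Tr\ti{12}\bigl( \Rc^0\ti{12}(\lambda,\mu_0)\, S_{n-1}(\lambda,x)\ti{1} S_{m-1}(\mu_0,y)\ti{2} \bigr)\delta'_{xy}$. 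Everything now hinges on the behaviour of this expression in $\lambda$ near the point from which $\J^{\lambda_0}_n$ is extracted, and I would split into two subcases according to whether $\lambda_0$ is cyclotomic.

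If $\lambda_0$ is non-cyclotomic then $\J^{\lambda_0}_n(x)=\Tc_n(\lambda_0,x)$, and I simply evaluate the previous bracket at $\lambda=\lambda_0$. By \eqref{Eq:RCyc} the poles of $\Rc^0\ti{12}(\lambda,\mu_0)$ in $\lambda$ lie exactly on the orbit $\Z_T\mu_0$, which is disjoint from $\Z_T\lambda_0$; hence $\Rc^0\ti{12}(\lambda,\mu_0)$ is regular at $\lambda=\lambda_0$, as is $S_{n-1}(\lambda,x)$ since $\lambda_0$ is a regular zero, and the prefactor $\varphi(\lambda_0)=0$ forces the bracket to vanish. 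If instead $\lambda_0=0$, then by subsection \ref{Sec:EquivT} the current $\J^0_n(x)$ is the coefficient of $\lambda^{r_n}$ in the expansion of $\Tc_n(\lambda,x)$, so $\left\lbrace \J^0_n(x), \J^{\mu_0}_m(y) \right\rbrace$ is the coefficient of $\lambda^{r_n}$ in the same bracket. Here $\Rc^0\ti{12}(\lambda,\mu_0)$ and $S_{n-1}(\lambda,x)$ are both regular at $\lambda=0$ while $\varphi(\lambda)=O(\lambda^{T-1})$, so the whole expression is $O(\lambda^{T-1})$; since $n\in\E_0$ enforces $r_n<T-1$, the coefficient of $\lambda^{r_n}$ is zero. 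I expect the cyclotomic subcase to be the delicate one: it is where the degree bookkeeping (the bound $r_n<T-1$ built into $\E_0$) does the work, and where one must check that the regularity of $\Rc^0\ti{12}(\lambda,\mu_0)$ at the origin, together with the order of vanishing of $\varphi$, genuinely outpaces the exponent $r_n$.
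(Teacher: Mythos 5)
Your proposal is correct and follows essentially the same route as the paper's own proof: the same specialisation of \eqref{Eq:PBT} at $\mu=\mu_0$ using $\varphi(\mu_0)=0$, the same disjoint-orbit argument for regularity of $\Rc^0\ti{12}(\lambda,\mu_0)$, and the same case split with the bound $r_n<T-1$ doing the work at the cyclotomic point. No gaps.
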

Combining this theorem with the results of previous sections, we conclude that the local charges $\Q^{\lambda_0}_n$, for all $\lambda_0\in\Zc$ and $n\in\E_{\lambda_0}$, are in involution with one another.\\

We now turn to the case where one of the regular zeros is the point at infinity. In this case, the current $\J^\infty_m(y)$ is extracted from the Lax matrix $\Lc^\infty(\alpha,y)$. From the Poisson brackets \eqref{Eq:PBR} and \eqref{Eq:PBLC}, we get
\begin{align}\label{Eq:PBLLI}
& \left\lbrace \Lc(\lambda,x)\ti{1}, \Lc^\infty(\alpha,y)\ti{2} \right\rbrace = \left[ \Rct\ti{12}(\lambda,\alpha^{-1}), \Lc(\lambda,x)\ti{1} \right] \delta_{xy} - \left[ \Rc\ti{21}(\alpha^{-1},\lambda), \Lc^\infty(\alpha,y)\ti{2} \right] \delta_{xy} \\
& \hspace{60pt}  - \; \bigl( \Rct\ti{12}(\lambda,\alpha^{-1}) + \Rc\ti{21}(\alpha^{-1},\lambda) \bigr) \delta'_{xy}  + \alpha^{-1}\psi(\alpha)^{-1} \left[ \Rc\ti{21}(\alpha^{-1},\lambda), \Cc(x)\ti{2} \right] \delta_{xy}, \notag
\end{align}
with the matrix $\Rct$ defined in \eqref{Eq:DefRct}.

In the following, we will say that an equation is true \textit{weakly}, and we will then use the symbol $\approx$ instead of $=$, if the equation holds when one puts the field $\Cc(x)$ to zero. This denomination and its interest will be made clear when studying $\Z_T$-coset models, in which case the field $\Cc(x)$ is interpreted as a gauge constraint. Note, in particular, that the last term of equation \eqref{Eq:PBLLI} vanishes weakly. Using Corollary \ref{Cor:PBTr}, we may then compute the Poisson brackets of $\Tc_n(\lambda,x)$ with
\begin{equation*}
\Tc^\infty_m(\alpha,y)=\Tr\bigl(S^\infty_n(\alpha,x)\bigr)=\Tr\bigl( \psi(\alpha)^n \Lc^\infty(\alpha,x)^n \bigr)
\end{equation*}
weakly, to find
\begin{equation}\label{Eq:PBTTI}
\left\lbrace \Tc_n(\lambda,x), \Tc^\infty_m(\alpha,y) \right\rbrace \approx -nm \, \Tr\ti{12} \Bigl( V\ti{12}(\lambda,\alpha) S_{n-1}(\lambda,x)\ti{1}S^\infty_{m-1}(\alpha,y)\ti{2} \Bigr) \delta'_{xy},
\end{equation}
where
\begin{equation*}
V\ti{12}(\lambda,\alpha) = - \alpha^{-2} \varphi(\lambda) \Rct^0\ti{12}(\lambda,\alpha^{-1}) + \psi(\alpha) \Rc^0\ti{21}(\alpha^{-1},\lambda).
\end{equation*}
Suppose first that $\lambda_0\in\Zc$ is non-cyclotomic. We have $\varphi(\lambda_0)=0$, and hence
\begin{equation*}
\left\lbrace \J^{\lambda_0}_n(x), \Tc^\infty_m(\alpha,y) \right\rbrace \approx -nm \, \psi(\alpha)  \Tr\ti{12} \Bigl( \Rc^0\ti{21}(\alpha^{-1},\lambda_0) S_{n-1}(\lambda_0,x)\ti{1}S^\infty_{m-1}(\alpha,y)\ti{2} \Bigr) \delta'_{xy}.
\end{equation*}
The Poisson bracket between $\J^{\lambda_0}_m(x)$ and $\J^\infty_m(y)$ is then obtained, weakly, by extracting the coefficient of $\alpha^{r_m}$ in the equation above. For $m\in\E_\infty$, we have $r_m < T-1$. Yet $\psi(\alpha)=O(\alpha^{T-1})$ and $ \Rc^0\ti{21}(\alpha^{-1},\lambda_0)$ and $S^\infty_{m-1}(\alpha,y)$ are regular at $\alpha=0$. Thus $\J^{\lambda_0}_n(x)$ and $\J^\infty_m(y)$ are weakly in involution.

It remains to consider the case where $\lambda_0=0$. In this case, $\J^0_n(x)$ is the coefficient of $\lambda^{r_n}$ in $\Tc_n(\lambda,x)$ and we restrict to $n$ such that $r_n < T-1$. We have to find the coefficient of $\lambda^{r_n}\alpha^{r_m}$ in equation \eqref{Eq:PBTTI}. Due to the presence of $\varphi(\lambda)$ or $\psi(\alpha)$ in the two terms appearing in $V\ti{12}(\lambda,\alpha)$, we see that either the power of $\lambda$ or the power of $\alpha$ is greater than $T-1$ and thus cannot contribute to the term $\lambda^{r_n}\alpha^{r_m}$. In conclusion, we have the following theorem.

\begin{theorem}\label{Thm:InvolutionInfinity}
Suppose that infinity is a regular zero of the model. Let $\lambda_0\in\Zc$, $n\in\E_{\lambda_0}$ and $m\in\E_{\infty}$. Then we have
\begin{equation*}
\left\lbrace \J^{\lambda_0}_n(x), \J^{\infty}_m(y) \right\rbrace \approx 0.
\end{equation*}
\end{theorem}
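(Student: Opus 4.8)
The plan is to reduce everything to a single \emph{weak} Poisson bracket between the generating traces $\Tc_n(\lambda,x)$ and $\Tc^\infty_m(\alpha,y)$, and then to extract the two currents by power-series expansion, so that the vanishing is forced purely by how deeply the two twist functions vanish at their respective regular zeros. Since at each zero the currents $\K$ are polynomials in the corresponding $\J$'s, it is enough to prove the stated weak identity at the level of the $\J$-currents.

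First I would take as input the mixed Lax bracket \eqref{Eq:PBLLI} between $\Lc(\lambda,x)$ and $\Lc^\infty(\alpha,y)$. The only term there which is not of pure $r/s$-type is the last one, proportional to $[\Rc\ti{21}(\alpha^{-1},\lambda),\Cc(x)\ti{2}]$; this is precisely the term that disappears once $\Cc(x)$ is set to zero, and it is the sole reason the conclusion is only weak. Dropping it and applying Corollary \ref{Cor:PBTr} to the traces of powers produces the weak master bracket \eqref{Eq:PBTTI}, controlled by the kernel $V\ti{12}(\lambda,\alpha)$. The structural feature I would exploit is that $V\ti{12}(\lambda,\alpha)$ is a sum of two pieces, one carrying an overall factor $\varphi(\lambda)$ and the other an overall factor $\psi(\alpha)$, while the remaining factors $\Rct^0\ti{12}(\lambda,\alpha^{-1})$, $\Rc^0\ti{21}(\alpha^{-1},\lambda)$, $S_{n-1}(\lambda,x)$ and $S^\infty_{m-1}(\alpha,y)$ are all holomorphic at $\lambda=0$ and at $\alpha=0$.

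The heart of the argument is then a degree count. Because the origin and infinity are regular zeros of a model with $T>1$, the equivariance properties force $\varphi(\lambda)=O(\lambda^{T-1})$ and, by \eqref{Eq:EquiInfinity}, $\psi(\alpha)=O(\alpha^{T-1})$. I would split into two cases. If $\lambda_0$ is non-cyclotomic, I would set $\lambda=\lambda_0$ at once: $\varphi(\lambda_0)=0$ annihilates the $\varphi(\lambda)$-piece of $V\ti{12}(\lambda,\alpha)$, and extracting the coefficient of $\alpha^{r_m}$ from the surviving $\psi(\alpha)$-piece fails because $\psi(\alpha)=O(\alpha^{T-1})$ whereas $m\in\E_\infty$ forces $r_m<T-1$, the other factors being regular at $\alpha=0$. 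If instead $\lambda_0=0$, I would extract the coefficient of $\lambda^{r_n}\alpha^{r_m}$ directly from \eqref{Eq:PBTTI}: in each of the two pieces of $V\ti{12}(\lambda,\alpha)$ either the power of $\lambda$ is at least $T-1>r_n$ or the power of $\alpha$ is at least $T-1>r_m$ (using $n\in\E_0$ and $m\in\E_\infty$), so neither piece can reach the target monomial. In both cases the retained coefficient is zero weakly, which is the claim.

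The main obstacle I anticipate is not the final degree count, which is clean, but the bookkeeping needed to justify \eqref{Eq:PBTTI} weakly and to be certain that none of the ``regular'' factors secretly lowers the order in $\lambda$ or $\alpha$. Concretely, one must check that $S^\infty_{m-1}(\alpha,y)$ is genuinely holomorphic at $\alpha=0$: this is exactly what the inversion-of-spectral-parameter analysis of subsection \ref{Sec:Infinity} supplies, $\alpha=0$ being a regular zero of the inverted $r/s$-system with automorphism $\s^{-1}$. Once that regularity input is in place, together with the regularity of $\Rc^0\ti{21}(\alpha^{-1},\lambda)$ near $\alpha=0$ (where $\alpha^{-1}$ is far from the finite pole locus), the two vanishing twist functions $\varphi(\lambda)$ and $\psi(\alpha)$ carry out the entire argument.
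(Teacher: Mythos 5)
Your proposal is correct and follows essentially the same route as the paper: starting from the mixed bracket \eqref{Eq:PBLLI}, discarding the $\Cc$-term weakly, applying Corollary \ref{Cor:PBTr} to obtain \eqref{Eq:PBTTI} with kernel $V\ti{12}(\lambda,\alpha)$, and then running the same degree count ($\varphi(\lambda)=O(\lambda^{T-1})$, $\psi(\alpha)=O(\alpha^{T-1})$, with $r_n, r_m < T-1$) in the two cases $\lambda_0$ non-cyclotomic and $\lambda_0=0$. The regularity inputs you flag ($S^\infty_{m-1}(\alpha,y)$ and $\Rc^0\ti{21}(\alpha^{-1},\lambda)$ holomorphic at $\alpha=0$) are exactly those the paper invokes from subsection \ref{Sec:Infinity}, so there is no gap.
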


Combining this theorem with the results of previous sections, we see that the local charges $\Q^{\lambda_0}_n$, for all $\lambda_0\in\Zc\cup\lbrace\infty\rbrace$ and $n\in\E_{\lambda_0}$, are (at least weakly) in involution with one another.\\

We thus constructed a large algebra of local charges in involution, composed of the charges $\Q^{\lambda_0}_n$, with $\lambda_0$ regular zeros and $n\in\E_{\lambda_0}$. Since these charges are local, they are also in involution with the momentum $\Pc$ of the theory, whose Poisson bracket generates the derivative with respect to the spatial coordinate $x$. We have not yet discussed the conservation properties of these charges. For the models that we consider as examples in this article (listed in paragraph \ref{Sec:Examples}), we will see in section \ref{Sec:Applications} that the Hamiltonian $\Hc$ of the theory always belongs to the algebra of local charges described above. It therefore follows that all these charges are conserved. More precisely, we will see that $\Hc$ is always a linear combination of the quadratic charges $\Q^{\lambda_0}_2$ and the momentum $\Pc$.

\subsection{Gauge invariance}
\label{Sec:Gauge}

In this subsection, we anticipate the application of the construction developed in the previous sections to integrable $\s$-models on $\Z_T$-coset spaces. In those models, infinity is a regular zero and the corresponding field $\Cc(x)$ (cf. subsection \ref{Sec:Infinity}) is a gauge constraint. We prove here that the currents $\K_n^{\lambda_0}(x)$ constructed at regular zeros $\lambda_0$ in the previous sections are gauge invariant, in the sense that they Poisson commute with the constraint $\Cc(y)$. As the $\K_n^{\lambda_0}$'s are polynomials of the $\J_n^{\lambda_0}$'s, it is enough to prove the following theorem.

\begin{theorem}\label{Thm:Gauge}
Suppose that infinity is a regular zero. Let $\lambda_0 \in \Zc\cup\lbrace \infty \rbrace$ and $n\in\E_{\lambda_0}$. Then, we have
\begin{equation*}
\left\lbrace \J^{\lambda_0}_n(x), \Cc(y) \right\rbrace = 0.
\end{equation*}
\end{theorem}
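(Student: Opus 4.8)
The plan is to reduce, as the paper already notes, to the elementary currents $\J^{\lambda_0}_n$, since each $\K^{\lambda_0}_n(x)$ is a polynomial in them; it then suffices to compute the single ``master'' Poisson bracket $\left\lbrace \Tc_n(\lambda,x), \Cc(y) \right\rbrace$ once and for all, and only afterwards specialise the spectral point. First I would apply the Leibniz rule to $\Tc_n = \varphi(\lambda)^n \Tr(\Lc^n)$, inserting the fundamental bracket \eqref{Eq:PBLC}. The contribution proportional to $\delta_{xy}$ is $\varphi(\lambda)^n \sum_{k=0}^{n-1}\Tr\ti{1}\bigl(\Lc^k [C^{(0)}\ti{12},\Lc\ti{1}]\Lc^{n-1-k}\bigr)\delta_{xy}$; expanding the commutator this telescopes, and by cyclicity of the partial trace $\Tr\ti{1}$ each summand reduces to $\Tr\ti{1}(\Lc^n\ti{1}C^{(0)}\ti{12})$ independently of $k$, so the two pieces cancel term by term. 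The contribution proportional to $\delta'_{xy}$ is $-n\,\varphi(\lambda)^n\,\Tr\ti{1}(\Lc^{n-1}\ti{1}C^{(0)}\ti{12})\delta'_{xy}$, again $k$-independent by cyclicity; recognising $\varphi(\lambda)^n\Lc^{n-1}=\varphi(\lambda)\,S_{n-1}$ yields the compact formula
\begin{equation*}
\left\lbrace \Tc_n(\lambda,x), \Cc(y) \right\rbrace = -n\,\varphi(\lambda)\,\Tr\ti{1}\bigl( S_{n-1}(\lambda,x)\ti{1}\, C^{(0)}\ti{12}\bigr)\,\delta'_{xy}.
\end{equation*}
The essential feature is the single factor $\varphi(\lambda)$ multiplying a quantity that is holomorphic at every regular zero.

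With this formula in hand the three cases close quickly. For a finite non-cyclotomic $\lambda_0$ we have $\J^{\lambda_0}_n(x)=\Tc_n(\lambda_0,x)$ and $\varphi(\lambda_0)=0$, so the bracket vanishes outright. For the cyclotomic zero $\lambda_0=0$, the current $\J^0_n(x)$ is the coefficient of $\lambda^{r_n}$ in $\Tc_n(\lambda,x)$; since $\varphi(\lambda)=O(\lambda^{T-1})$ and $S_{n-1}(\lambda,x)$ is regular at $\lambda=0$, the right-hand side above is $O(\lambda^{T-1})$, and as membership $n\in\E_0$ forces $r_n<T-1$ the relevant coefficient is zero.

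For $\lambda_0=\infty$ I would first show that $\Lc^\infty$ obeys exactly the same bracket with $\Cc$ as $\Lc$ does. Starting from $\Lct(\lambda,x)=\Lc(\lambda,x)-\lambda^{-1}\varphi(\lambda)^{-1}\Cc(x)$ and combining \eqref{Eq:PBLC} with \eqref{Eq:PBCC}, the two $[C^{(0)}\ti{12},\,\cdot\,]$ terms recombine, giving
\begin{equation*}
\left\lbrace \Lct(\lambda,x)\ti{1}, \Cc(y)\ti{2} \right\rbrace = \bigl[ C^{(0)}\ti{12}, \Lct(\lambda,x)\ti{1}\bigr]\delta_{xy} - C^{(0)}\ti{12}\,\delta'_{xy},
\end{equation*}
which is \eqref{Eq:PBLC} verbatim with $\Lc$ replaced by $\Lct$. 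Hence the master formula holds identically with $\Lc\to\Lc^\infty$ and $\varphi\to\psi$, and since $\psi(\alpha)=O(\alpha^{T-1})$ while $m\in\E_\infty$ forces $r_m<T-1$, the same power counting gives $\left\lbrace \J^\infty_m(x),\Cc(y)\right\rbrace=0$. Note that nothing here requires setting $\Cc$ to zero, so the equality is strong rather than weak, as stated.

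The hard part is not any single calculation but the bookkeeping that makes everything collapse: the cyclicity of the partial trace $\Tr\ti{1}$ that produces both the telescoping cancellation of the $\delta_{xy}$ term and the $k$-independence of the $\delta'_{xy}$ term, and the structural observation that $\left\lbrace \Lct\ti{1},\Cc\ti{2}\right\rbrace$ retains the very form of $\left\lbrace \Lc\ti{1},\Cc\ti{2}\right\rbrace$, which is precisely what lets the point at infinity be treated by the identical argument. Everything then rests on the single surviving factor $\varphi(\lambda)$ (respectively $\psi(\alpha)$) together with the restriction $r_n<T-1$ guaranteed by membership in $\E_{\lambda_0}$.
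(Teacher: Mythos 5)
Your proposal is correct and follows essentially the same route as the paper: you reduce to the currents $\J^{\lambda_0}_n$, derive the master formula $\left\lbrace \Tc_n(\lambda,x), \Cc(y) \right\rbrace = -n\,\varphi(\lambda)\,\Tr\ti{1}\bigl( C^{(0)}\ti{12} S_{n-1}(\lambda,x)\ti{1}\bigr)\delta'_{xy}$ (the paper obtains it by citing Corollary \ref{Cor:PBTr}, you rederive it via Leibniz and cyclicity), handle the finite cases by evaluation at $\lambda_0$ and by power counting with $r_n<T-1$ and $\varphi(\lambda)=O(\lambda^{T-1})$, and treat infinity by showing $\left\lbrace \Lc^\infty\ti{1},\Cc\ti{2}\right\rbrace$ has the same structure as \eqref{Eq:PBLC}, exactly as the paper does.
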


\begin{proof}
Let us first suppose that $\lambda_0$ is different from infinity. The current $\J^{\lambda_0}_n(x)$ is extracted from $\Tc_n(\lambda,x)=\varphi(\lambda)^n\Lc(\lambda,x)^n$. Recall the Poisson bracket between the Lax matrix $\Lc(\lambda,x)$ and $\Cc(y)$, given by equation \eqref{Eq:PBLC}. By Corollary \ref{Cor:PBTr} we then have
\begin{equation*}
\left\lbrace \Tc_n(\lambda,x), \Cc(y) \right\rbrace = - n\, \varphi(\lambda) \, \Tr\ti{1} \bigl( C^{(0)}\ti{12} S_{n-1}(\lambda,x)\ti{1} \bigr) \delta'_{xy}.
\end{equation*}

If $\lambda_0\neq 0$, $\J_n^{\lambda_0}(x)$ is simply $\Tc_n(\lambda_0,x)$. And since $\lambda_0$ is a regular zero, $S_{n-1}(\lambda,x)$ is regular at $\lambda=\lambda_0$ and $\varphi(\lambda_0)=0$. Evaluating the above Poisson bracket at $\lambda=\lambda_0$, we get the involution of $\J^{\lambda_0}_n(x)$ and $\Cc(y)$. If $\lambda_0=0$, $\J^0_n(x)$ is the coefficient of $\lambda^{r_n}$ in $\Tc_n(\lambda,x)$ and, since $n\in\E_0$, we have $r_n < T-1$. Moreover, since $S_{n-1}(\lambda,x)$ is regular at $\lambda=0$ and $\varphi(\lambda)=O(\lambda^{T-1})$, the $\lambda^{r_n}$-term in the Poisson bracket above is then zero, as required.

Finally, let us treat the case $\lambda_0=\infty$, for which $\J^\infty_n(x)$ is given by the coefficient of $\alpha^{r_n}$ in $\Tc^\infty_n(\alpha,x)= \Tr\big( \psi(\alpha)^n \Lc^\infty(\alpha,x)^n \big)$. Using the definition of $\Lc^\infty$ and the Poisson brackets \eqref{Eq:PBLC} and \eqref{Eq:PBCC}, we find
\begin{equation*}
\left\lbrace \Lc^\infty(\alpha,x)\ti{1}, \Cc(y)\ti{2} \right\rbrace = \bigl[ C^{(0)}\ti{12}, \Lc^\infty(\alpha,x)\ti{1} \bigr] \delta_{xy} - C^{(0)}\ti{12}\delta'_{xy}.
\end{equation*}
This bracket has the same structure as equation \eqref{Eq:PBLC}. Therefore, the case $\lambda_0=\infty$ is treated exactly in the same way than the case $\lambda_0=0$, which ends the proof.
\end{proof}

\subsection{Reality conditions}
\label{Sec:Reality}

To close this section let us discuss the reality conditions on the charges $\Q^{\lambda_0}_n$ extracted at regular zeros in the previous sections. In the examples of paragraph \ref{Sec:Examples}, we consider integrable $\s$-models with target space $G_0$ or a quotient of $G_0$, where $G_0$ is a real Lie group. If $\g_0$ is the Lie algebra of $G_0$, then the Lax matrix of the model is a $\g$-valued field, where $\g$ is the complexification of $\g_0$. In other words, $\g_0$ is a real form of the complex Lie algebra $\g$. The fact that the $\s$-models we consider are on the real form $G_0$ (or one of its quotient) is encoded on the Lax matrix as the following reality condition:
\begin{equation}\label{Eq:Reality}
\tau\bigl(\Lc(\lambda,x)\bigr) = \Lc(\bar{\lambda},x),
\end{equation}
where the bar denotes complex conjugation and $\tau$ is the involutive semi-linear automorphism of $\g$ characterizing the real form $\g_0$.
Moreover, we also have a reality condition on the twist function, which simply reads
\begin{equation}\label{Eq:TwistReal}
\overline{\varphi(\lambda)}=\varphi(\bar{\lambda}).
\end{equation}
In particular, if $\lambda_0$ is a zero of $\varphi$, its conjugate $\bar{\lambda}_0$ is also a zero of $\varphi$. Combining the reality conditions \eqref{Eq:Reality} and \eqref{Eq:TwistReal}, we also see that if $\lambda_0$ is a regular zero (see paragraph \ref{Sec:Equi}), $\bar\lambda_0$ is also a regular zero.
Thus, the regular zeros can be of two types: real ones $\lambda_0\in\R$ and conjugate pairs $\lambda_0$, $\bar\lambda_0$.
\\

We will use the reality condition \eqref{Eq:Reality} in a similar way to the way we used the equivariance property \eqref{Eq:EquiL} in subsection \ref{Sec:EquivT}. In particular, as we consider powers of the Lax matrix, which are not in the Lie algebra $\g$ in general, we will need to extend ``naturally'' the automorphism $\tau$ to the whole algebra $F$ of matrices acting on the defining representation of $\g$. This was done for the automorphism $\s$ in subsection \ref{Sec:EquivT} and appendix \ref{App:ExtSigma}. One can apply similar ideas to $\tau$, using the classification of real forms of the classical Lie algebras A, B, C and D. We do not present the details here and just summarise the results.

There exists an extension of $\tau$ on the whole algebra of matrices $F$, which coincides with $\tau$ when restricted to the Lie algebra $\g$, and that we shall still denote $\tau$. This extension is still an involutive semi-linear map of $F$ to itself. However, it is not in general an algebra homomorphism. The main properties of the extension $\tau$ that we will need are the following. There exists $\gamma\in\lbrace 1, -1 \rbrace$ such that
\begin{subequations}\label{Eq:PropTau}
\begin{align}
\tau(Z^n) &= \gamma^{n-1} \, \tau(Z)^n, \label{Eq:TauPow}\\
\Tr\bigl(\tau(Z)\bigr) &= \gamma\, \overline{\Tr(Z)}, \label{Eq:TauTr}
\end{align}
\end{subequations}
for any $Z\in F$. For every real form $\g_0$ of a classical algebra $\g$ we have $\gamma = 1$, except for the real forms $\mathfrak{su}(p,q,\R)$ of $\sl(d,\C)$ (with $p+q=d$), for which $\gamma=-1$. Using the properties \eqref{Eq:PropTau} with the reality conditions \eqref{Eq:Reality} and \eqref{Eq:TwistReal}, one finds that $\tau\bigl(S_n(\lambda,x)\bigr) = \gamma^{n-1} S_n(\bar{\lambda},x)$ and that
\begin{equation}\label{Eq:TReal}
\overline{\Tc_n(\lambda,x)} = \gamma^n \Tc_n(\bar{\lambda},x).
\end{equation}

Consider a regular zero $\lambda_0$. Suppose first that $\lambda_0$ is complex: its conjugate $\bar{\lambda}_0$ is then also a regular zero. According to the previous sections, we can extract two towers of (possibly complex) currents $\J^{\lambda_0}_n(x)$ and $\J^{\bar\lambda_0}_n(x)$ by evaluating $\Tc_n(\lambda,x)$ at $\lambda=\lambda_0$ or $\lambda=\bar\lambda_0$ (note that $\lambda_0$ cannot be a cyclotomic point as it is complex). However, according to equation \eqref{Eq:TReal}, these currents are not independent. Indeed, they are related by the reality condition
\begin{equation*}
\J^{\bar\lambda_0}_n(x) = \gamma^n \overline{ \J^{\lambda_0}_n(x) }.
\end{equation*}
Thus, considering linear combination of $\Q^{\lambda_0}_n$ and $\Q^{\bar\lambda_0}_n$, we extract from each pair $\lambda_0$, $\bar\lambda_0$ of complex regular zeros two towers of real charges in involution: $\Q^{\lambda_0}_n+\gamma^n\Q^{\bar\lambda_0}_n$ and $i\bigl(\Q^{\lambda_0}_n-\gamma^n\Q^{\bar\lambda_0}_n\bigr)$.\\

Suppose now that $\lambda_0$ is a real and non-cyclotomic regular zero. Equation \eqref{Eq:TReal} then imposes the reality condition
\begin{equation}\label{Eq:RealZero}
\J^{\lambda_0}_n(x) = \gamma^n  \overline{ \J^{\lambda_0}_n(x)}.
\end{equation}
Thus, the current $\J^{\lambda_0}_n$ is either real or pure imaginary. In each case, we can extract only one tower of real local charges.
Consider now the case where $\lambda_0$ is the origin and thus a cyclotomic real point. The current $\J^0_n(x)$ is then the coefficient of $\lambda^{r_n}$ in the power series expansion of $\Tc_n(\lambda,x)$. Yet, this coefficient is also the one of $\bar{\lambda}^{r_n}$ in the power series expansion of $\Tc_n(\bar{\lambda},x)$. The reality condition \eqref{Eq:TReal} then implies that equation \eqref{Eq:RealZero} also holds for $\lambda_0=0$.

Finally, let us discuss the case where $\lambda_0$ is infinity, which we consider as a real point. From the reality conditions \eqref{Eq:Reality} and \eqref{Eq:TwistReal}, we find that the field $\Cc(x)$ defined in subsection \ref{Sec:Infinity} is real, in the sense that $\tau\bigl(\Cc(x)\bigr)=\Cc(x)$. We then obtain reality conditions on the Lax matrix $\Lc^\infty(\alpha,x)$ and the twist function $\psi(\alpha)$ similar to equations \eqref{Eq:Reality} and \eqref{Eq:TwistReal}. As a result we can apply the above discussion, since the point at infinity in the variable $\lambda$ corresponds to the origin in the variable $\alpha$, and conclude that equation \eqref{Eq:RealZero} also holds for $\lambda_0=\infty$.\\

\noi To summarise this subsection, we have shown that one can extract:
\begin{itemize}\setlength\itemsep{0.2em}
\item one tower of real local charges for each real regular zero $\lambda_0$,
\item two towers of real local charges for each pair $\lambda_0$, $\bar\lambda_0$ of complex regular zeros.
\end{itemize}
In other words, one can extract as many towers of real charges as there are regular zeros.

\section{Integrable hierarchies and zero curvature equations}
\label{Sec:IntHierZeroCurv}

In the previous sections, we constructed a infinite set of local charges $\Q^{\lambda_0}_n$ in involution, with $\lambda_0$ regular zeros. It induces an infinite set of commuting Hamiltonian flows, defined by $\left\lbrace \Q^{\lambda_0}_n, \cdot \right\rbrace$. In this section, we show that these flows generate a hierarchy of integrable equations. More precisely, we associate with each charge $\Q^{\lambda_0}_n$ a connection
\begin{equation*}
\nabla^{\lambda_0}_n = \left\lbrace \Q^{\lambda_0}_n, \cdot \right\rbrace + \M^{\lambda_0}_n (\lambda,x)
\end{equation*}
which commutes with the connection $\nabla_x = \p_x+\Lc(\lambda,x)$. We show that the connections $\nabla^{\lambda_0}_n$ also commute with one another for finite regular zeros $\lambda_0$. The commutativity of these connections takes the form of zero curvature equations. In particular, we will use the zero curvature equations involving $\Lc(\lambda,x)$ and the $\M^{\lambda_0}_n(\lambda,x)$'s to prove that the local charges $\Q^{\lambda_0}_n$ are in involution with the non-local charges extracted from the monodromy of $\Lc(\lambda,x)$.

\subsection{Zero curvature equations with $\Lc$}
\label{Sec:ZCEL}

The starting point of this article is an integrable system with Lax matrix $\Lc(\lambda,x)$ and Hamiltonian $\Hc$. The dynamical equations of this system are generated by the Poisson bracket with $\Hc$. They are encoded in the form of a zero curvature equation
\begin{equation}\label{Eq:ZCEH}
\left\lbrace \Hc, \Lc(\lambda,x) \right\rbrace - \p_x \M(\lambda,x) + \left[ \M(\lambda,x), \Lc(\lambda,x) \right] = 0
\end{equation}
on the Lax matrix $\Lc(\lambda,x)$ and a $\g$-valued matrix $\M(\lambda,x)$. In this subsection, we study the dynamics of the Lax matrix under the Hamiltonian flows generated by the local charges $\Q^{\lambda_0}_n$ constructed in the previous sections. More precisely, we show that these dynamics also take the form of a zero curvature equation on $\Lc(\lambda,x)$:

\begin{theorem}\label{Thm:ZCEL}
Let $\lambda_0\in\Zc$ and $n\in\E_{\lambda_0}$. There exists a matrix $\M^{\lambda_0}_n(\lambda,x)$ such that we have the zero curvature equation
\begin{equation*}
\left\lbrace \Q^{\lambda_0}_n, \Lc(\lambda,x) \right\rbrace - \p_x \M^{\lambda_0}_n(\lambda,x) + \left[ \M^{\lambda_0}_n(\lambda,x), \Lc(\lambda,x) \right] = 0.
\end{equation*}
\end{theorem}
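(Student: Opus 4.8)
The plan is to reduce the statement to a direct evaluation of $\{\Q^{\lambda_0}_n,\Lc(\lambda,x)\}$ and to show that the non-ultralocal ($\delta'_{xy}$) contributions reassemble into a single total $x$-derivative precisely at a regular zero. Since $\Q^{\lambda_0}_n=\int \dd y\,\K^{\lambda_0}_n(y)$ and, in every case, $\K^{\lambda_0}_n$ is obtained from the $\lambda$-dependent current $\W_n(\nu,y)$ (a fixed polynomial in the $\Tc_k(\nu,y)$, cf. \eqref{Eq:DefW} and \eqref{Eq:SummaryCyc}) either by evaluation at $\nu=\lambda_0$ or by extracting the coefficient of $\nu^{r_n}$, and since the Poisson bracket is a derivation, it suffices to control $\int \dd y\,\{\Tc_k(\nu,y),\Lc(\lambda,x)\}$ for each $k$ and reassemble through the chain rule $\{\W_n,\Lc\}=\sum_k \tfrac{\p\W_n}{\p\Tc_k}\{\Tc_k,\Lc\}$. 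First I would apply the second identity of Corollary \ref{Cor:PBTr} to the Maillet bracket \eqref{Eq:PBR}, written with $\Lc(\lambda,x)$ in the first tensor slot and $S_k(\nu,y)=\varphi(\nu)^k\Lc(\nu,y)^k$ in the second, reading off the data $a\ti{12}=\Rc^0\ti{12}(\lambda,\nu)\delta_{xy}$ (using $\varphi(\nu)\Rc\ti{12}(\lambda,\nu)=\Rc^0\ti{12}(\lambda,\nu)$ from \eqref{Eq:DefR}) and $c\ti{12}=-\varphi(\nu)\bigl(\Rc\ti{12}(\lambda,\nu)+\Rc\ti{21}(\nu,\lambda)\bigr)\delta'_{xy}$.

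This produces two kinds of contributions. The $\delta_{xy}$ term localises under $\int \dd y$ to a commutator $-[N_n(\lambda,\nu,x),\Lc(\lambda,x)]$, with $N_n=\sum_k k\,\tfrac{\p\W_n}{\p\Tc_k}\,\Tr\ti{2}\bigl(\Rc^0\ti{12}(\lambda,\nu)S_{k-1}(\nu,x)\ti{2}\bigr)$, the scalar factors $\tfrac{\p\W_n}{\p\Tc_k}$ passing freely through the commutator. The $\delta'_{xy}$ term, after integration by parts (using $\int \dd y\,g(y)\delta'_{xy}=\p_x g(x)$, with the $y$-derivative acting on the full coincident-point coefficient), becomes a total derivative $\p_x G_n(\lambda,\nu,x)$, where $G_n$ is the analogous sum with $\Rc^0\ti{12}(\lambda,\nu)$ replaced by $\varphi(\nu)\bigl(\Rc\ti{12}(\lambda,\nu)+\Rc\ti{21}(\nu,\lambda)\bigr)$. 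Using $\varphi(\nu)\Rc\ti{12}(\lambda,\nu)=\Rc^0\ti{12}(\lambda,\nu)$ once more, the difference collapses to
\begin{equation*}
G_n(\lambda,\nu,x)-N_n(\lambda,\nu,x)=\varphi(\nu)\sum_k k\,\tfrac{\p\W_n}{\p\Tc_k}\,\Tr\ti{2}\bigl(\Rc\ti{21}(\nu,\lambda)S_{k-1}(\nu,x)\ti{2}\bigr),
\end{equation*}
which is proportional to $\varphi(\nu)$, all remaining factors being regular at the zero. At this stage one has $\int \dd y\,\{\W_n(\nu,y),\Lc(\lambda,x)\}=-[N_n,\Lc(\lambda,x)]+\p_x G_n$, and the entire argument hinges on forcing $G_n=N_n$.

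This is exactly where regularity of the zero enters. For a non-cyclotomic $\lambda_0\in\Zc$ one sets $\nu=\lambda_0$: then $\varphi(\lambda_0)=0$ while $S_{k-1}(\lambda_0,x)$ and $\Rc\ti{21}(\lambda_0,\lambda)$ are finite (the latter because $\lambda$ generically avoids the orbit $\Z_T\lambda_0$), so $G_n=N_n$ and $\M^{\lambda_0}_n(\lambda,x):=N_n(\lambda,\lambda_0,x)$ yields the claimed equation. For the cyclotomic zero $\lambda_0=0$ one instead extracts the coefficient of $\nu^{r_n}$, an operation commuting with both $\p_x$ and $[\,\cdot\,,\Lc(\lambda,x)]$; since $\varphi(\nu)=O(\nu^{T-1})$ (see \eqref{Eq:ZetaAsymptotic}) and $r_n<T-1$ for $n\in\E_0$ (subsection \ref{Sec:SummaryCyc}), the difference $G_n-N_n$ cannot contribute at order $\nu^{r_n}$, so again the commutator and derivative coefficients agree and $\M^0_n(\lambda,x):=N_n(\lambda,\nu,x)\big|_{\nu^{r_n}}$. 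In both cases $\M$ is genuinely $\g$-valued: from $C\ti{12}=T^a\otimes T_a$ and the form \eqref{Eq:RCyc} of $\Rc^0$, the partial trace $\Tr\ti{2}\bigl(\Rc^0\ti{12}(\lambda,\nu)Z\ti{2}\bigr)$ is a $\C$-linear combination of the $\s^j(T^a)\in\g$. In the simplest cases ($\g$ of type B, C or D, or type A with $\s$ inner) one has $\W_n=\Tc_n$, whence $\tfrac{\p\W_n}{\p\Tc_k}=\delta_{kn}$ and the formulae simplify, but the polynomial case is handled uniformly.

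The main obstacle is the careful bookkeeping of the non-ultralocal $\delta'_{xy}$ terms: one must verify both that they integrate to a \emph{single} total $x$-derivative $\p_x G_n$ rather than leaving a genuinely non-local remainder, and that the resulting potential $G_n$ matches the commutator coefficient $N_n$ after evaluation at the zero. The algebraic heart of the matter, the identity $G_n-N_n\propto\varphi$, is a direct consequence of the twist structure \eqref{Eq:DefR}; the analytic heart is the vanishing (non-cyclotomic) or the order counting in $\nu^{r_n}$ with $r_n<T-1$ (cyclotomic) guaranteed by the defining properties of regular zeros and of the index sets $\E_{\lambda_0}$. The one point requiring care beyond the model computation with a single $\Tc_n$ is the extension to general $\W_n$: because the $\delta'_{xy}$-derivative acts on the entire $y$-dependent coefficient, the reassembly $\sum_k\tfrac{\p\W_n}{\p\Tc_k}(\nu,x)(\dots)$ reproduces the product rule verbatim, the formal derivatives $\tfrac{\p\W_n}{\p\Tc_k}$ being $\C$-valued and hence commuting through the trace and the commutator alike.
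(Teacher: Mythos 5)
Your proposal is correct and follows essentially the same route as the paper's own proof: you apply the second identity of Corollary \ref{Cor:PBTr} to the Maillet bracket to obtain the paper's equation \eqref{Eq:PBLT}, observe that the only obstruction term carries a factor of $\varphi(\nu)$, which vanishes upon evaluation at a non-cyclotomic regular zero and cannot contribute at order $\nu^{r_n}$ at the cyclotomic one (since $\varphi(\nu)=O(\nu^{T-1})$ and $r_n<T-1$ for $n\in\E_0$), and then pass to the polynomial currents $\K^{\lambda_0}_n$ by the Leibniz rule. Your chain-rule assembly $N_n=\sum_k k\,\tfrac{\p\W_n}{\p\Tc_k}\,\Tr\ti{2}\bigl(\Rc^0\ti{12}(\lambda,\nu)S_{k-1}(\nu,x)\ti{2}\bigr)$ is precisely the paper's construction of $\M^{\lambda_0}_n$ from the matrices $\Nc^{\lambda_0}_{m_k}$ attached to the monomials of $\K^{\lambda_0}_n$, so the two proofs coincide up to packaging.
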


\begin{proof}
Let us apply the second result of Corollary \ref{Cor:PBTr} to the $r/s$-system \eqref{Eq:PBR}. Using the form \eqref{Eq:DefR} of the $\Rc$-matrix, we find
\begin{align}\label{Eq:PBLT}
& \left\lbrace \Lc(\lambda,x), \Tc_n(\mu,y) \right\rbrace = n \left[ \Tr\ti{2}\Bigl( \Rc^0\ti{12}(\lambda,\mu) S_{n-1}(\mu,y)\ti{2} \Bigr), \Lc(\lambda,x) \right] \delta_{xy}  \\
& \hspace{75pt} - n \, \Tr\ti{2} \Bigl( \Rc^0\ti{12}(\lambda,\mu) S_{n-1}(\mu,y)\ti{2} \Bigr) \delta'_{xy}  - n \frac{\varphi(\mu)}{\varphi(\lambda)} \, \Tr\ti{2} \Bigl( \Rc^0\ti{21}(\mu,\lambda)  S_{n-1}(\mu,y)\ti{2} \Bigr) \delta'_{xy}. \notag
\end{align}

Consider first the case where $\lambda_0$ is a non-cylotomic regular zero. Evaluating the equation above at $\mu=\lambda_0$ and using $\varphi(\lambda_0)=0$, we have
\begin{equation}\label{Eq:ZCEJ}
\left\lbrace \Lc(\lambda,x), \J^{\lambda_0}_n(y) \right\rbrace
 = \left[ \mathcal{N}^{\lambda_0}_n(\lambda,y), \Lc(\lambda,x) \right] \delta_{xy}  - \mathcal{N}^{\lambda_0}_n(\lambda,y) \delta'_{xy},
\end{equation}
where
\begin{equation}\label{Eq:DefNNonCyc}
\mathcal{N}^{\lambda_0}_n(\lambda,x) = n \, \Tr\ti{2} \Bigl( \Rc^0\ti{12} (\lambda,\lambda_0) S_{n-1}(\lambda_0,x)\ti{2} \Bigr).
\end{equation}

Suppose now that $\lambda_0$ is the origin, which is a cyclotomic point, in which case $\J^0_n(y)$ is constructed as the coefficient of $\mu^{r_n}$ in the power series expansion of $\Tc_n(\mu,y)$. Moreover, as $n\in\E_0$, we have $r_n<T-1$ (see section \ref{Sec:CycZero}). 
The Poisson bracket $\left\lbrace \Lc(\lambda,x), \J^0_n(y) \right\rbrace$ is thus the $\mu^{r_n}$-term in equation \eqref{Eq:PBLT}. We have $\varphi(\mu)=O(\mu^{T-1})$ and $r_n <T-1$, thus the last term of equation \eqref{Eq:PBLT} cannot contribute to $\mu^{r_n}$. Thus, we also have equation \eqref{Eq:ZCEJ} for $\lambda_0=0$, with
\begin{equation*}
\Nc^0_n(\lambda,x) = n \, \Tr\ti{2} \Bigl( \Rc^0\ti{12} (\lambda,\mu) S_{n-1}(\mu,x)\ti{2} \Bigr) \Bigr|_{\mu^{r_n}}.
\end{equation*}

We will say that $\mathcal{N}^{\lambda_0}_n$ is the Lax matrix associated with the charge defined as the integral of the current $\J_n^{\lambda_0}$. Equation \eqref{Eq:ZCEJ} implies a zero curvature equation for the evolution of $\Lc(\lambda,x)$ under the Hamiltonian flow of this charge. In general, the charge $\Q^{\lambda_0}_n$ is not the integral of $\J_n^{\lambda_0}$ but of $\K_n^{\lambda_0}$ (see previous sections). Recall that $\K_n^{\lambda_0}$ is a polynomial in the $\J_m^{\lambda_0}$'s. We construct the Lax matrix $\M^{\lambda_0}_n(\lambda,x)$ associated with $\Q_n^{\lambda_0}$ by assigning any monomial $\J^{\lambda_0}_{m_1} \ldots \J^{\lambda_0}_{m_p}$ in this polynomial to the matrix
\begin{equation*}
\sum_{k=1}^p \Bigl( \prod_{j\neq k} \J_{m_j}^{\lambda_0}(x) \Bigr) \mathcal{N}^{\lambda_0}_{m_k} (\lambda,x).
\end{equation*}
Using the fact that the Poisson bracket is a derivation, we find from equation \eqref{Eq:ZCEJ} that
\begin{equation}\label{Eq:ZCEK}
\left\lbrace \Lc(\lambda,x), \K^{\lambda_0}_n(y) \right\rbrace
 = \left[ \M^{\lambda_0}_n(\lambda,y), \Lc(\lambda,x) \right] \delta_{xy} - \M^{\lambda_0}_n(\lambda,y) \delta'_{xy}.
\end{equation}
After integration over $y$, we get the required zero curvature equation.
\end{proof}

Thus, the Hamiltonian flows of the charges $\Q^{\lambda_0}_n$ generate dynamical equations that can be recast in the form of zero curvature equations. In conclusion, we have constructed a hierarchy of integrable systems with Lax matrix $\Lc(\lambda,x)$ and Hamiltonians $\Q^{\lambda_0}_n$. The zero curvature equations of Theorem \ref{Thm:ZCEL} can be seen as the commutativity of the connections
\begin{equation}\label{Eq:Nabla}
\nabla^{\lambda_0}_n = \left\lbrace \Q^{\lambda_0}_n, \cdot \right\rbrace + \M^{\lambda_0}_n (\lambda,x)
\end{equation}
with $\nabla_x = \p_x + \Lc(\lambda,x)$. This connection $\nabla_x$ can be thought of as the connection associated with the local momentum $\Pc$ of the theory. As already mentioned, we will see in section \ref{Sec:Applications} that for the models we consider, the Hamiltonian is given by a linear combination $\Hc=\sum_{\lambda_0\in\Zc} a_{\lambda_0} \Q^{\lambda_0}_2 + b \Pc$ of the quadratic charges $\Q^{\lambda_0}_2$ and the momentum $\Pc$. Therefore, the matrix $\M(\lambda,x)$ of equation \eqref{Eq:ZCEH} can be constructed as $\sum_{\lambda_0\in\Zc} a_{\lambda_0} \M^{\lambda_0}_2(\lambda,x) + b \Lc(\lambda,x)$.\\

Theorem \ref{Thm:ZCEL} only treats the case of finite regular zeros $\lambda_0$. Let us also briefly discuss what happens when $\lambda_0=\infty$. In this case, $\J^\infty_n(x)$ is extracted from the Lax matrix $\Lc^\infty(\alpha,x)$. Since this matrix satisfies an $r/s$-system with twist function $\psi(\alpha)$, one can apply the method developed here. Doing so we find that the dynamics of $\Lc^\infty(\alpha,x)$ under the Hamiltonian flow of $\Q^\infty_n$ takes the form of a zero curvature equation. Moreover, starting with the Poisson bracket \eqref{Eq:PBLLI} and working weakly, we also find a weak curvature equation
\begin{equation*}
\left\lbrace \Q^{\infty}_n, \Lc(\lambda,x) \right\rbrace - \p_x \M^{\infty}_n(\lambda,x) + \left[ \M^{\infty}_n(\lambda,x), \Lc(\lambda,x) \right] \approx 0,
\end{equation*}
where the matrix $\M^{\infty}_n(\lambda,x)$ is constructed from
\begin{equation*}
\Nc^\infty_n(\lambda,x) = - n \, \alpha^{-2} \, \Tr\ti{2} \Bigl( \Rct^0\ti{12} (\lambda,\alpha^{-1}) S^\infty_{n-1}(\alpha,x)\ti{2} \Bigr) \Bigr|_{\alpha^{r_n}}
\end{equation*}
in the same way as $\M^{\lambda_0}_n(\lambda,x)$ was built from $\Nc^{\lambda_0}_n(\lambda,x)$ for a finite regular zero $\lambda_0$. In other words, Theorem \ref{Thm:ZCEL} also applies for $\lambda_0=\infty$ when Poisson brackets are considered weakly.\\

Let us end this subsection by stating a few properties of the Lax matrix $\M^{\lambda_0}_n(\lambda,x)$. Using the equivariance property \eqref{Eq:EquiR}, we find that
\begin{equation*}
\s \left( \M^{\lambda_0}_n(\lambda,x) \right) = \M^{\lambda_0}_n(\omega\lambda,x).
\end{equation*}
The Lax matrix $\M^{\lambda_0}_n$ thus satisfies the same equivariance property \eqref{Eq:EquiL} as $\Lc$. Recall that the Lax matrix $\Nc^0_n(\lambda,x)$ is extracted as the $\mu^{r_n}$-term in
\begin{equation}\label{Eq:DefN}
\Nc_n(\mu \,;\lambda,x) =  n \, \Tr\ti{2} \Bigl( \Rc^0\ti{12} (\lambda,\mu) S_{n-1}(\mu,x)\ti{2} \Bigr).
\end{equation}
Consider the equivariance properties \eqref{Eq:EquiS} and
\begin{equation*}
\s\ti{2} \Rc^0\ti{12}(\lambda,\mu) = \omega\Rc^0\ti{12}(\lambda,\omega\mu).
\end{equation*}
Combining it with the fact that $\Tr\bigl(\s(Y)\s(Z)\bigr)=\Tr(YZ)$ for any matrices $Y,Z\in F$ (see appendix \ref{App:ExtSigma}), we find that
\begin{equation}\label{Eq:EquiN}
\Nc_n(\omega\mu\,;\lambda,x) = \omega^{r_n} \Nc_n(\mu\,;\lambda,x).
\end{equation}
Therefore, the power series expansion of $\Nc_n(\mu\,; \lambda,x)$ in $\mu$ contains powers of the form $r_n+kT$, with $k\in\Z_{\geq 0}$. In particular, $\Nc^0_n(\lambda,x)$ is the coefficient of the smallest power in this expansion, in the same way as $\J^0_n(x)$ is in the expansion of $\Tc_n(\mu,x)$.

Let us define $\M_n(\mu\,;\lambda,x)$ from $\Nc_n(\mu\,;\lambda,x)$ and $\Tc_n(\mu,x)$ in the same way we constructed $\M_n^{\lambda_0}(\lambda,x)$ from $\Nc^{\lambda_0}_n(\lambda,x)$ and $\J^{\lambda_0}_n(x)$. In particular, $\M^{\lambda_0}_n(\lambda,x)$ is the evaluation of $\M_n(\mu\,;\lambda,x)$ at $\mu=\lambda_0$. From equations \eqref{Eq:EquiTrA} and \eqref{Eq:EquiN}, we find the following equivariance property
\begin{equation}\label{Eq:EquiM}
\M_n(\omega\mu\,;\lambda,x) = \omega^{r_n} \M_n(\mu\,;\lambda,x).
\end{equation}
So $\M^0_n(\lambda,x)$ is the coefficient of the first term $\mu^{r_n}$ in the power series expansion of $\M_n(\mu\,;\lambda,x)$.

\subsection{Involution with non-local charges}
\label{Sec:NonLocal}

In this subsection, we use the result of the previous one to prove that the local charges $\Q^{\lambda_0}_n$ are in involution with the non-local charges extracted from the monodromy of the Lax matrix $\Lc(\lambda,x)$. This monodromy is defined as the path-ordered exponential
\begin{equation*}
T(\lambda) = \Pexp \left( -\int \dd z \, \Lc(\lambda,z) \right),
\end{equation*}
where the integral is taken on the real line $\R$ or the circle $S^1$, depending on the coordinate space of the model. Consider also the partial transfer matrices
\begin{equation*}
T(\lambda\;;x,y) = \Pexp \left( -\int_y^x \dd z \, \Lc(\lambda,z) \right).
\end{equation*}
These matrices satisfy the initial condition $T(\lambda\;;x,x)=\Id$ and the differential equations
\begin{subequations}\label{Eq:DerPexp}
\begin{align}
\p_x T(\lambda\;;x,y) &= -\Lc(\lambda,x) T(\lambda\;;x,y), \\
\p_y T(\lambda\;;x,y) &= T(\lambda\;;x,y) \Lc(\lambda,y).
\end{align}
\end{subequations}
Moreover, the variation of $T$ under a infinitesimal variation $\delta\Lc$ of $\Lc$ is given by
\begin{equation*}
\delta T(\lambda\;;x,y) = -\int_y^x \dd z \, T(\lambda\;;x,z) \delta\Lc(\lambda,z)T(\lambda\;;z,y).
\end{equation*}

This formula allows one to compute derivatives of $T$ and in particular its Poisson bracket with the local charge $\Q^{\lambda_0}_n$, for $\lambda_0\in\Zc$ and $n\in\E_{\lambda_0}$. Specifically, we have
\begin{equation}\label{Eq:PBPexp}
\left\lbrace \Q^{\lambda_0}_n, T(\lambda\;;x,y) \right\rbrace
 = -\int_y^x \dd z \, T(\lambda\;;x,z) \left\lbrace \Q^{\lambda_0}_n, \Lc(\lambda,z) \right\rbrace T(\lambda\;;z,y).
\end{equation}
The Poisson bracket of $\Q^{\lambda_0}_n$ and $\Lc(\lambda,z)$ is given by Theorem \ref{Thm:ZCEL}. Using this together with the equations \eqref{Eq:DerPexp} and \eqref{Eq:PBPexp}, we find
\begin{equation*}
\left\lbrace \Q^{\lambda_0}_n, T(\lambda\;;x,y) \right\rbrace = T(\lambda\;;x,y) \M^{\lambda_0}_n(\lambda,y) - \M^{\lambda_0}_n(\lambda,x) T(\lambda\;;x,y).
\end{equation*}
If the spatial coordinate is taken on the real line (from $-\infty$ to $\infty$) and the fields are assumed to be decreasing at infinity fast enough, we get
\begin{equation*}
\left\lbrace \Q^{\lambda_0}_n, T(\lambda) \right\rbrace = 0,
\end{equation*}
\textit{i.e.} the whole monodromy $T(\lambda)$ is in involution with $\Q^{\lambda_0}_n$. If the spatial coordinate is taken on the circle (from $0$ to $2\pi$) and the fields are assumed to be periodic, we get
\begin{equation*}
\left\lbrace \Q^{\lambda_0}_n, T(\lambda) \right\rbrace = \left[ T(\lambda), \M^{\lambda_0}_n(\lambda,0) \right].
\end{equation*}
In this case, $\Q^{\lambda_0}_n$ Poisson commutes with any central function of $T(\lambda)$, \textit{e.g.} the traces $\Tr\bigl(T(\lambda)^k\bigr)$ and the determinant $\det\bigl(T(\lambda)\bigr)$. Thus, we have

\begin{theorem}\label{Thm:NonLocal}
The monodromy $T(\lambda)$ (resp. the central functions of $T(\lambda)$) is in involution with the local charges $\Q^{\lambda_0}_n$ for $\lambda_0\in\Zc$ and $n\in\E_{\lambda_0}$, if the spatial coordinate is taken on the real line (resp. the circle). In particular, it is conserved.
\end{theorem}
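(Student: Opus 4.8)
The plan is to turn the Hamiltonian flow $\{\Q^{\lambda_0}_n, \cdot\}$ acting on the transfer matrix into a boundary term, using the zero curvature equation of Theorem \ref{Thm:ZCEL} to recast $\{\Q^{\lambda_0}_n, \Lc(\lambda,z)\}$ as $\p_z \M^{\lambda_0}_n(\lambda,z) - [\M^{\lambda_0}_n(\lambda,z), \Lc(\lambda,z)]$. The strategy is to insert this into the variational expression \eqref{Eq:PBPexp} for $\{\Q^{\lambda_0}_n, T(\lambda\,;x,y)\}$ and to show that the integrand is a total derivative in $z$, so that only the endpoints contribute.

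First I would compute the Poisson bracket of $\Q^{\lambda_0}_n$ with the partial transfer matrix $T(\lambda\,;x,y)$. Since the bracket is a derivation and $T$ depends on the fields only through $\Lc$ inside the path-ordered exponential, the variational formula gives precisely \eqref{Eq:PBPexp}, namely $\{\Q^{\lambda_0}_n, T(\lambda\,;x,y)\} = -\int_y^x \dd z \, T(\lambda\,;x,z)\{\Q^{\lambda_0}_n, \Lc(\lambda,z)\} T(\lambda\,;z,y)$. This step is routine once the variation of the path-ordered exponential is in hand.

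The crux of the argument is then the telescoping. Substituting Theorem \ref{Thm:ZCEL} and using the differential equations \eqref{Eq:DerPexp}, so that $\p_z T(\lambda\,;x,z) = T(\lambda\,;x,z)\Lc(\lambda,z)$ and $\p_z T(\lambda\,;z,y) = -\Lc(\lambda,z) T(\lambda\,;z,y)$, a direct computation shows that $T(\lambda\,;x,z)\bigl(\p_z\M^{\lambda_0}_n - [\M^{\lambda_0}_n, \Lc]\bigr) T(\lambda\,;z,y) = \p_z\bigl(T(\lambda\,;x,z)\M^{\lambda_0}_n(\lambda,z) T(\lambda\,;z,y)\bigr)$: the $\Lc$-terms generated by differentiating the two transfer matrices cancel exactly against the commutator $[\M^{\lambda_0}_n, \Lc]$, leaving the $\p_z\M^{\lambda_0}_n$ piece. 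Integrating this total derivative and using $T(\lambda\,;x,x)=T(\lambda\,;y,y)=\Id$ yields $\{\Q^{\lambda_0}_n, T(\lambda\,;x,y)\} = T(\lambda\,;x,y)\M^{\lambda_0}_n(\lambda,y) - \M^{\lambda_0}_n(\lambda,x) T(\lambda\,;x,y)$. I expect verifying this precise cancellation to be the main point of care, since it is exactly where the zero-curvature form is indispensable.

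Finally I would pass to the full monodromy by sending $x,y$ to the ends of the spatial domain. On the real line, the decay of the fields at infinity forces $\M^{\lambda_0}_n(\lambda,\pm\infty)$ to vanish, so both boundary terms drop and $\{\Q^{\lambda_0}_n, T(\lambda)\}=0$. On the circle, periodicity gives $\M^{\lambda_0}_n(\lambda, 2\pi) = \M^{\lambda_0}_n(\lambda,0)$, leaving $\{\Q^{\lambda_0}_n, T(\lambda)\} = [T(\lambda), \M^{\lambda_0}_n(\lambda,0)]$; then for any central function, for instance $\Tr(T(\lambda)^k)$ or $\det(T(\lambda))$, cyclicity of the trace gives $\{\Q^{\lambda_0}_n, \Tr(T^k)\} = k\,\Tr(T^{k-1}[T,\M^{\lambda_0}_n(\lambda,0)]) = 0$, and similarly for the determinant. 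The remaining secondary obstacle is the boundary-term analysis itself, which must be justified carefully from the decay or periodicity hypotheses. Conservation is then immediate, since $\Hc$ lies in the algebra of local charges (being a linear combination of the $\Q^{\lambda_0}_2$ and the momentum $\Pc$), so the monodromy, respectively its central functions, also Poisson commutes with $\Hc$.
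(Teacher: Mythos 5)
Your proposal is correct and follows essentially the same route as the paper: the variational formula \eqref{Eq:PBPexp}, substitution of the zero curvature equation of Theorem \ref{Thm:ZCEL} to recognise the integrand as the total derivative $\p_z\bigl(T(\lambda\,;x,z)\M^{\lambda_0}_n(\lambda,z)T(\lambda\,;z,y)\bigr)$, evaluation of the boundary terms under decay (real line) or periodicity (circle), and conservation from $\Hc$ being a linear combination of the quadratic charges and the momentum. The sign bookkeeping in the telescoping step and the final formula $\left\lbrace \Q^{\lambda_0}_n, T(\lambda\,;x,y) \right\rbrace = T(\lambda\,;x,y)\M^{\lambda_0}_n(\lambda,y) - \M^{\lambda_0}_n(\lambda,x)T(\lambda\,;x,y)$ both match the paper exactly.
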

\begin{proof}
It just remains to prove the conservation of the non-local charges. This follows from the fact that the Hamiltonian $\Hc$ can be expressed as a linear combination of the quadratic charges $\Q^{\lambda_0}_2$ and the momentum $\Pc$.
\end{proof}

Once again, this theorem applies only for finite regular zeros $\lambda_0$. Following a similar argument to the one given in the previous subsection, it also holds for the charges $\Q^\infty_n$ if we consider Poisson brackets only weakly.

\subsection{Zero curvature equations between the $\M^{\lambda_0}_n$'s} \label{sec: ZC eq}

In subsection \ref{Sec:ZCEL}, we showed that the dynamics of the Lax matrix $\Lc(\lambda,x)$ under the Hamiltonian flow of the local charge $\Q^{\lambda_0}_n$ takes the form of a zero curvature equation with a matrix $\M^{\lambda_0}_n(\lambda,x)$. We thus exhibited a hierarchy of integrable equations, corresponding to the commutativity of the connections $\nabla^{\lambda_0}_n$ with $\nabla_x$. This can be seen as the compatibility condition of the two auxiliary linear problems $\nabla_x\Psi=0$ and $\nabla^{\lambda_0}_n \Psi=0$, with $\Psi$ a function on the phase space, valued in the connected and simply connected Lie group with Lie algebra $\g$. In this subsection, we prove that the connections $\nabla^{\lambda_0}_n$ and $\nabla^{\mu_0}_m$ also commute with one another (except when $\lambda_0$ is finite and $\mu_0=\infty$). This can be seen as the simultaneous compatibility of all auxiliary linear problems $\nabla^{\lambda_0}_n \Psi=0$ and it takes the form of zero curvature equations:

\begin{theorem}\label{Thm:ZCEM}
Let $\lambda_0,\mu_0\in\Zc$, $n\in\E_{\lambda_0}$ and $m\in\E_{\mu_0}$. We have the zero curvature equation
\begin{equation*}
\left\lbrace \Q^{\lambda_0}_n, \M^{\mu_0}_m(\lambda,x) \right\rbrace - \left\lbrace \Q^{\mu_0}_m , \M^{\lambda_0}_n(\lambda,x) \right\rbrace  + \left[ \M^{\lambda_0}_n(\lambda,x), \M^{\mu_0}_m(\lambda,x) \right] = 0.
\end{equation*}
\end{theorem}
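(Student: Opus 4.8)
The assertion is equivalent to the flatness of the pair of connections, $[\nabla^{\lambda_0}_n,\nabla^{\mu_0}_m]=0$, so the plan is to expand this operator commutator and show that each contribution either cancels outright or is controlled by analyticity in $\lambda$. Writing $\nabla^{\lambda_0}_n = \{\Q^{\lambda_0}_n,\cdot\} + \M^{\lambda_0}_n$ and likewise for $(\mu_0,m)$, the bracket splits into a Hamiltonian-flow part and a purely matricial part. The flow part is the commutator of the two Hamiltonian vector fields, which by the Jacobi identity for the Poisson bracket equals $\{\{\Q^{\lambda_0}_n,\Q^{\mu_0}_m\},\cdot\}$; by the involution results (Theorems \ref{thm: involution of Qs}, \ref{thm: involution of Q0s} and \ref{Thm:DiffZeros}) this vanishes. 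Hence $[\nabla^{\lambda_0}_n,\nabla^{\mu_0}_m]$ reduces to multiplication by the single $\g$-valued matrix
\[
Z(\lambda,x) = \left\lbrace \Q^{\lambda_0}_n, \M^{\mu_0}_m(\lambda,x) \right\rbrace - \left\lbrace \Q^{\mu_0}_m , \M^{\lambda_0}_n(\lambda,x) \right\rbrace + \left[ \M^{\lambda_0}_n(\lambda,x), \M^{\mu_0}_m(\lambda,x) \right],
\]
and the theorem is precisely the statement that $Z\equiv 0$.

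To constrain $Z$ I would apply the Jacobi identity to the triple $\nabla_x,\nabla^{\lambda_0}_n,\nabla^{\mu_0}_m$. Since each higher connection is flat against $\nabla_x=\p_x+\Lc$ by Theorem \ref{Thm:ZCEL}, the two outer double commutators drop out and one is left with $[\nabla_x,Z]=0$, that is
\[
\p_x Z(\lambda,x) + \left[ \Lc(\lambda,x), Z(\lambda,x) \right] = 0 .
\]
Thus $Z$ is $\nabla_x$-covariantly constant along $x$ for every value of the spectral parameter, so it cannot be killed pointwise in $x$; the decisive extra input must therefore come from its dependence on $\lambda$.

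The heart of the argument is then an analyticity/Liouville step. Since $\M^{\lambda_0}_n$ is built from $\Nc^{\lambda_0}_n$ of \eqref{Eq:DefNNonCyc} (and its cyclotomic analogue), and $\Rc^0\ti{12}(\lambda,\nu)$ is singular in $\lambda$ only on the orbit $\Z_T\nu$, the matrix $Z(\lambda,x)$ is rational in $\lambda$ with poles confined to $\Z_T\lambda_0\cup\Z_T\mu_0$; these two orbits are disjoint by the very construction of $\Zc$. I would then compute the principal parts: near a point of $\Z_T\mu_0$ only $\{\Q^{\lambda_0}_n,\M^{\mu_0}_m\}$ and $[\M^{\lambda_0}_n,\M^{\mu_0}_m]$ are singular, and using $\Res_{\lambda=\mu_0}\Rc^0\ti{12}(\lambda,\mu_0)=-\tfrac1T C\ti{12}$ together with the completeness relation \eqref{Eq:CompRel} one checks that these singular contributions cancel (and symmetrically on $\Z_T\lambda_0$). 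Because $\Rc^0\ti{12}(\lambda,\nu)\to 0$ as $\lambda\to\infty$, all three terms of $Z$ vanish at infinity, so $Z$ extends to a holomorphic function on $\CP$ vanishing at $\infty$ and is therefore identically zero; the covariant-constancy equation above then serves as a consistency check, since each surviving Laurent coefficient would itself have to be $\nabla_x$-covariantly constant.

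The step I expect to be the main obstacle is the residue cancellation for $\g$ of type A. There the plain completeness relation \eqref{Eq:CompRel} is replaced by the trace-corrected relation \eqref{Eq:CompRelSl}, and the currents $\K^{\lambda_0}_n$ are genuine polynomials in the $\J^{\lambda_0}_k$, so each $\M^{\lambda_0}_n$ is a sum of products $\bigl(\prod_{j\neq k}\J^{\lambda_0}_{m_j}\bigr)\Nc^{\lambda_0}_{m_k}$ and the bookkeeping of principal parts becomes substantially heavier, with the $\tfrac1d$-corrections threatening to spoil the cancellation. For type B, C and D the argument closes uniformly as above; for type A I would instead fall back on verifying $Z=0$ explicitly for low values of $n$ and $m$, and on this basis conjecture the result in general.
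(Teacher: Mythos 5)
Your reduction of the theorem to the vanishing of the single matrix $Z(\lambda,x)$ is correct, and your route is genuinely different from the paper's: the paper proceeds by direct computation, splitting the bracket $\lbrace \Nc_n, \Tc_m\rbrace$ into an ultralocal part killed by the classical Yang--Baxter equation and a non-ultralocal part $\Xi^{\lambda\mu}_{nm}$ computed explicitly in Appendix B, and then showing that the surviving terms integrate to zero over $y$. Your two structural observations — that the flow part of $[\nabla^{\lambda_0}_n,\nabla^{\mu_0}_m]$ vanishes by involution plus the Jacobi identity, and that $\p_x Z + [\Lc,Z]=0$ follows from Theorem \ref{Thm:ZCEL} — are sound, and for two \emph{distinct} zeros your residue cancellation does close in types B, C, D: the residue of $Z$ at $\lambda=\mu_0$ is proportional to $\lbrace \Q^{\lambda_0}_n, S_{m-1}(\mu_0,x)\rbrace + \bigl[\M^{\lambda_0}_n(\mu_0,x), S_{m-1}(\mu_0,x)\bigr]$, and combining Theorem \ref{Thm:ZCEL} with the Leibniz rule gives $\lbrace \Q^{\lambda_0}_n, S_{m-1}(\lambda,x)\rbrace + \bigl[\M^{\lambda_0}_n(\lambda,x), S_{m-1}(\lambda,x)\bigr] = \varphi(\lambda)\sum_{k} S_k(\lambda,x)\bigl(\p_x \M^{\lambda_0}_n(\lambda,x)\bigr)S_{m-2-k}(\lambda,x)$, which vanishes at $\lambda=\mu_0$ precisely because $\varphi(\mu_0)=0$ and every factor is regular there by disjointness of the orbits.

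The genuine gap is the case $\lambda_0=\mu_0$, which the theorem allows and which is the heart of the matter: it is what makes the flows a hierarchy, and it occupies most of the paper's proof together with Appendix B. Your pole analysis explicitly assumes disjoint orbits (``these two orbits are disjoint by the very construction of $\Zc$''), which is false when $\lambda_0=\mu_0$, so your claim that for types B, C, D ``the argument closes uniformly as above'' is unjustified. In the equal-zero case all three terms of $Z$ are singular at the same points, the commutator $\bigl[\M^{\lambda_0}_n,\M^{\lambda_0}_m\bigr]$ has an a priori \emph{double} pole, and the residue of $Z$ involves the subleading regular parts of the $\M$'s, not just their leading Laurent coefficients. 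The argument can in fact be repaired there — the double pole cancels because the residues are proportional to $S_{n-1}(\lambda_0,x)$ and $S_{m-1}(\lambda_0,x)$, powers of one and the same matrix, and the simple pole cancels upon expanding the identity above to first order in $\lambda-\lambda_0$ — but none of this appears in your proposal, and it is exactly the content that distinguishes Theorem \ref{Thm:ZCEM} from Theorem \ref{Thm:DiffZeros}. Separately, when the zero is cyclotomic ($\lambda_0=0$) the matrix $\M^0_n(\lambda,x)$ has a pole at $\lambda=0$ of order up to $r_n+1$, so the Liouville step requires cancelling an entire principal part rather than a single residue; this case is also untreated. Your type A fallback (explicit checks at low degree plus a conjecture) coincides with the paper's own limitation, so that part is not a gap relative to the paper.
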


This subsection is entirely devoted to the proof of Theorem \ref{Thm:ZCEM}. After stating some general results, we will treat separately the cases $\lambda_0\neq\mu_0$ and $\lambda_0=\mu_0$. Note that for the latter, we only have a complete proof for an algebra $\g$ of type B, C and D. For $\g$ of type A, we verified Theorem \ref{Thm:ZCEM} for the first degrees $n$ and $m$ and conjecture that it holds more generally for any $n$ and $m$. To improve the clarity of the subsection, some technical details of the proof are presented in appendix \ref{App:Xi}.

Here also the theorem concerns the finite regular zeros $\lambda_0$ and $\mu_0$. The method presented in this subsection also applies for $\lambda_0=\mu_0=\infty$ as $\Lc^\infty(\lambda,x)$ also satisfies an $r/s$-system with twist function (Theorem \ref{Thm:PBLcI}). However, the theorem does not hold when $\lambda_0$ is finite and $\mu_0=\infty$, even if Poisson brackets are considered only weakly.

\subsubsection{Some general results}
\label{Sec:ZCEMGeneral}

Let us consider the Poisson bracket \eqref{Eq:PBLT}. It can be rewritten
\begin{align*}
& \hspace{-3pt}\left\lbrace S(\lambda,x), \Tc_m(\mu,y) \right\rbrace = - m \, \Tr\ti{2} \Bigl( U\ti{12}(\lambda,\mu) S_{m-1}(\mu,y)\ti{2} \Bigr) \delta'_{xy} \notag \\
& \hspace{140pt} + m \left[ \Tr\ti{2}\Bigl( \Rc^0\ti{12}(\lambda,\mu) S_{m-1}(\mu,x)\ti{2} \Bigr), S(\lambda,x) \right] \delta_{xy},
\end{align*}
with $S(\lambda,x)=S_1(\lambda,x)=\varphi(\lambda)\Lc(\lambda,x)$. Starting from this Poisson bracket, we elevate $S(\lambda,x)$ to the power $n-1$ and find, using the fact that the Poisson bracket and the commutator are derivations, that
\begin{align}\label{Eq:PBSnT}
&\left\lbrace S_{n-1}(\lambda,x), \Tc_m(\mu,y) \right\rbrace = m \left[ \Tr\ti{2}\Bigl( \Rc^0\ti{12}(\lambda,\mu) S_{m-1}(\mu,x)\ti{2} \Bigr), S_{n-1}(\lambda,x) \right] \delta_{xy} \\
&\hspace{150pt} - m \sum_{k=0}^{n-2} S_k(\lambda,x) \Tr\ti{2} \Bigl( U\ti{12}(\lambda,\mu) S_{m-1}(\mu,y)\ti{2} \Bigr)  S_{n-2-k}(\lambda,x) \delta'_{xy}. \notag
\end{align}
Recall the definition \eqref{Eq:DefN} of $\Nc_n(\lambda\, ; \rho,x)$. From the Poisson bracket \eqref{Eq:PBSnT}, using the cyclicity of the trace, we find
\begin{equation*}
\left\lbrace \Nc_n(\lambda\, ; \rho,x), \Tc_m(\mu,y) \right\rbrace = \Gamma^{\lambda\mu}_{nm}(\rho,x) \delta_{xy} + \Xi^{\lambda\mu}_{nm}(\rho,x,y),
\end{equation*}
where
\begin{align}
&\Gamma^{\lambda\mu}_{nm}(\rho,x) = nm \Tr\ti{23} \Bigl( \bigl[ \Rc^0\ti{12}(\rho,\lambda), \Rc^0\ti{23}(\lambda,\mu) \bigr]   S_{n-1}(\lambda,x) \ti{2} S_{m-1}(\mu,y)\ti{3} \Bigr), \label{Eq:DefGamma} \\
&\Xi^{\lambda\mu}_{nm}(\rho,x,y) = nm \Tr\ti{23} \Bigl( \Rc^0\ti{12}(\rho,\lambda) S_{m-1}(\mu,y)\ti{3} \sum_{k=0}^{n-2} S_k(\lambda,x)\ti{2} U\ti{23}(\lambda,\mu) S_{n-2-k}(\lambda,x)\ti{2} \Bigr) \delta'_{xy}, \label{Eq:DefXi}
\end{align}
with $U$ defined in equation \eqref{Eq:DefU}. Let us introduce
\begin{align*}
&\Y^{\lambda\mu}_{nm}(\rho,x,y) = \left[ \Nc_n(\lambda\, ; \rho,x), \Nc_m(\mu\,;\rho,x) \right]\delta_{xy} \\
& \hspace{90pt} + \left\lbrace \Tc_n(\lambda,y), \Nc_m(\mu\,;\rho,x) \right\rbrace  - \left\lbrace \Tc_m(\mu,y), \Nc_n(\lambda\, ; \rho,x) \right\rbrace . \notag
\end{align*}
It contains a term equal to $\delta_{xy}$ times
\begin{equation*}
\left[ \Nc_n(\lambda\, ; \rho,x), \Nc_m(\mu\,;\rho,x) \right] + \Gamma^{\lambda\mu}_{nm}(\rho,x) - \Gamma^{\mu\lambda}_{mn}(\rho,x).
\end{equation*}
One can show from equations \eqref{Eq:DefN} and \eqref{Eq:DefGamma} that this is equal to
\begin{equation*}
\Tr\ti{23}\Bigl( \Upsilon\ti{123}(\rho,\lambda,\mu) S_{n-1}(\lambda,x)\ti{2} S_{m-1}(\mu,x)\ti{3} \Bigr),
\end{equation*}
with
\begin{align*}
&\hspace{-3pt}\Upsilon\ti{123}(\rho,\lambda,\mu) = \left[ \Rc^0\ti{12}(\rho,\lambda), \Rc^0\ti{13}(\rho,\mu) \right]   + \left[ \Rc^0\ti{12}(\rho,\lambda), \Rc^0\ti{23}(\lambda,\mu) \right] + \left[ \Rc^0\ti{32}(\mu,\lambda), \Rc^0\ti{13}(\rho,\mu) \right].
\end{align*}
This terms vanishes as $\Rc^0$ is a solution of the classical Yang-Baxter equation \eqref{Eq:CYBE}. We are therefore simply left with
\begin{equation}\label{Eq:Y}
\Y^{\lambda\mu}_{nm}(\rho,x,y) = \Xi^{\lambda\mu}_{nm}(\rho,x,y) - \Xi^{\mu\lambda}_{mn}(\rho,x,y).
\end{equation}

The currents $\J_k$ are extracted from $\Tc_k$. But in general, the charges are constructed from currents $\K_k$ which are extracted from $\W_k$, where the definition of $\W_k$ depends on $\g$ and $\s$ (see  subsections \ref{Sec:SummaryNonCyc} and \ref{Sec:SummaryCyc}). In particular, we have $\K^{\lambda_0}_k(x)=\W_k(\lambda_0,x)$ for a non-cyclotomic regular zero $\lambda_0$. For the origin, which is cyclotomic, $\K^0_k(x)$ is the coefficient of $\lambda^{r_k}$ in $\W_k(\lambda,x)$. Let us define
\begin{align}
&\Zc^{\lambda\mu}_{nm}(\rho,x,y) = \left[ \M_n(\lambda\, ; \rho,x), \M_m(\mu\,;\rho,x) \right]\delta_{xy} \\
& \hspace{90pt} + \left\lbrace \W_n(\lambda,y), \M_m(\mu\,;\rho,x) \right\rbrace  - \left\lbrace \W_m(\mu,y), \M_n(\lambda\, ; \rho,x) \right\rbrace . \notag
\end{align}
Using the expression of $\W_k$ and $\M_k$ in terms of $\Tc_k$ and $\Nc_k$, we see that $\Zc^{\lambda\mu}_{nm}(\rho,x,y)$ contains several types of terms:
\begin{enumerate}
\item commutators $\left[\Nc_k(\lambda\,;\rho,x),\Nc_l(\mu\,;\rho,x)\right]$, multiplied by polynomials in the $\Tc_j$'s,
\item $\Gamma^{\lambda\mu}_{kl}(\rho,x)$ and $\Gamma^{\mu\lambda}_{lk}(\rho,x)$, multiplied by polynomials in the $\Tc_j$'s,
\item $\Xi^{\lambda\mu}_{kl}(\rho,x,y)$ and $\Xi^{\mu\lambda}_{lk}(\rho,x,y)$, multiplied by polynomials in the $\Tc_j$'s,
\item $\left\lbrace \Tc_k(\lambda,x), \Tc_l(\mu,y) \right\rbrace$ and $\left\lbrace \Tc_k(\mu,x), \Tc_l(\lambda,y) \right\rbrace$, multiplied by polynomials in $\Tc_j$'s and $\Nc_j$'s.
\end{enumerate}
Moreover, the terms of type 1 and 2 are always ultralocal, \textit{i.e.} proportional to $\delta_{xy}$. It can be seen that these terms always combine into polynomials of $\Tc_j$ multiplied by
\begin{equation*}
\left[ \Nc_k(\lambda\, ; \rho,x), \Nc_l(\mu\,;\rho,x) \right] + \Gamma^{\lambda\mu}_{kl}(\rho,x) - \Gamma^{\mu\lambda}_{lk}(\rho,x).
\end{equation*}
As explained above, this vanishes by virtue of the classical Yang-Baxter equation. Therefore, $\Zc^{\lambda\mu}_{nm}(\rho,x,y)$ is composed only of terms of type 3 and 4.

\subsubsection{Zero curvature equation at different regular zeros}

Let us now prove Theorem \ref{Thm:ZCEM} when $\lambda_0$ and $\mu_0$ are different regular zeros. Since we are not considering here the point at infinity (see discussion after Theorem \ref{Thm:ZCEM}), at least one of them is non-cyclotomic, say $\mu_0$. Recall that $U\ti{23}(\lambda,\mu_0)=\varphi(\lambda)\Rc^0(\lambda,\mu_0)$, as $\varphi(\mu_0)=0$.\\

Consider first the case where $\lambda_0$ is also non-cyclotomic. We will prove the that zero curvature equation of Theorem \ref{Thm:ZCEM} holds by showing that $\Zc^{\lambda_0\mu_0}_{nm}(\rho,x,y)$ vanishes. As explained above, it contains two types of terms. The ones of types 4 contain Poisson brackets between currents $\J^{\lambda_0}_k$ and $\J^{\mu_0}_l$. According to Theorem \ref{Thm:DiffZeros}, these brackets are all zeros. As $\lambda_0$ and $\mu_0$ are two distinct elements of $\Zc$, the cyclotomic orbits $\Z_T\lambda_0$ and $\Z_T\mu_0$ are disjoint and thus $\Rc^0(\lambda,\mu_0)$ is regular at $\lambda=\lambda_0$. We then have $U\ti{23}(\lambda_0,\mu_0)=0$, as $\varphi(\lambda_0)=0$. We deduce from this that $\Xi^{\lambda_0\mu_0}_{kl}(\rho,x,y) = 0$ and similarly $\Xi^{\mu_0\lambda_0}_{lk}(\rho,x,y) = 0$, \textit{i.e.} the terms of type 3 also vanish. Thus $\Zc^{\lambda_0\mu_0}_{nm}(\rho,x,y)=0$, as required. \\

Suppose now that $\lambda_0$ is the origin and hence a cyclotomic point. Recall that $\K^0_n(x)$ and $\M^0_n(\rho,x)$ are the coefficients of $\lambda^{r_n}$ in respectively $\W_n(\lambda,x)$ and $\M_n(\lambda\,;\rho,x)$. Thus, it is enough to show that there is no term $\lambda^{r_n}$ in $\Zc^{\lambda\mu_0}_{nm}(\rho,x,y)$ to prove Theorem \ref{Thm:ZCEM} in this case. Recall that $\Tc_k(\lambda,x)$ and $\Nc_k(\lambda\,;\rho,x)$ contain powers of $\lambda$ of the form $r_k + aT$ with $a\in\Z_{\geq 0}$. As $r_n \leq T-2 <T$ for $n\in\E_{\lambda_0}$, the powers with $a\geq 1$ cannot contribute to the $\lambda^{r_n}$-term. Following the discussion at the end of paragraph \ref{Sec:ZCEMGeneral}, the term $\lambda^{r_n}$ of $\Zc^{\lambda\mu_0}_{nm}(\rho,x,y)$ is thus composed of polynomials in the $\J^0_k$'s and $\Nc^0_k$'s times Poisson brackets of $\J^0_k$ with $\J^{\mu_0}_l$ or terms of the form
\begin{equation*}
\Xi^{\lambda\mu_0}_{kl}(\rho,x,y)\Bigr|_{\lambda^{r_k}} \;\;\; \text{ or } \;\;\; \Xi^{\mu_0\lambda}_{lk}(\rho,x,y)\Bigr|_{\lambda^{r_k}},
\end{equation*}
for $k$ such that $r_k < T-1$. According to Theorem \ref{Thm:DiffZeros}, the Poisson brackets of such $\J^0_k$ with $\J^{\mu_0}_l$ vanish. Moreover, $\Xi^{\lambda\mu_0}_{kl}(\rho,x,y)$ is proportional to $\varphi(\lambda)\Rc^0(\lambda,\mu_0)$. Yet, $\varphi(\lambda)=O(\lambda^{T-1})$ and $r_k<T-1$, hence $\Xi^{\lambda\mu_0}_{kl}(\rho,x,y)\Bigr|_{\lambda^{r_k}}=0$. Similarly $\Xi^{\mu_0\lambda}_{lk}(\rho,x,y)\Bigr|_{\lambda^{r_k}}=0$. Thus, the coefficient of $\lambda^{r_n}$ in $\Zc^{\lambda\mu_0}_{nm}(\rho,x,y)$ vanishes, as required. This ends the proof of Theorem \ref{Thm:ZCEM} for different regular zeros $\lambda_0$ and $\mu_0$.

\subsubsection{Zero curvature equations at a non-cyclotomic regular zero}

Let us now prove Theorem \ref{Thm:ZCEM} for $\lambda_0=\mu_0$. We start with the case where $\lambda_0$ is a non-cyclotomic point. We then want to show that $\displaystyle\int \dd y \,\Zc^{\lambda_0\lambda_0}_{nm}(\rho,x,y)=0$.\\

As in section \ref{Sec:NonCycZero}, we treat separately the Lie algebras of type B, C and D and the Lie algebras of type A. Suppose first that $\g$ is of type B, C or D. In this case, the currents $\K^{\lambda_0}_{2n}$ are equal to the currents $\J^{\lambda_0}_{2n}$ (see subsections \ref{Sec:NonCycZeroBCD} and \ref{Sec:SummaryNonCyc}) and the corresponding Lax matrices $\M^{\lambda_0}_{2n}$ are equal to the matrices $\Nc^{\lambda_0}_{2n}$. Thus, $\Zc^{\lambda_0\lambda_0}_{2n\,2m}(\rho,x,y)$ is simply equal to $\Y^{\lambda_0\lambda_0}_{2n\,2m}(\rho,x,y)$ (see paragraph \ref{Sec:ZCEMGeneral}). According to equation \eqref{Eq:Y}, we have
\begin{equation*}
\Y^{\lambda_0\lambda_0}_{2n\,2m}(\rho,x,y) = \Xi^{\lambda_0\lambda_0}_{2n\,2m}(\rho,x,y) - \Xi^{\lambda_0\lambda_0}_{2m\,2n}(\rho,x,y),
\end{equation*}
where $\Xi$ was defined in equation \eqref{Eq:DefXi}. To avoid cluttering the argument in the present paragraph with too many technicalities, we postpone the details of the computation of $\Xi^{\lambda_0\lambda_0}_{2n\,2m}(\rho,x,y)$ in appendix \ref{App:XiNonCyc}. We find
\begin{equation*}
\Xi^{\lambda_0\lambda_0}_{2n\,2m}(\rho,x,y) = \frac{\varphi'(\lambda_0)}{T} \frac{4nm(1-2n)}{2n+2m-2}\Nc^{\lambda_0}_{2n+2m-2}(\rho,x) \delta'_{xy} + f^{\lambda_0}_{2n\,2m}(\rho,x) \delta_{xy},
\end{equation*}
where the function $f^{\lambda_0}_{2n\,2m}$ satisfies $f^{\lambda_0}_{2n\,2m}=f^{\lambda_0}_{2m\,2n}$ (cf. appendix \ref{App:XiNonCyc}). It then follows that
\begin{equation*}
\Y^{\lambda_0\lambda_0}_{2n\,2m}(\rho,x,y) = \frac{\varphi'(\lambda_0)}{T} \frac{8nm(m-n)}{2n+2m-2}\Nc^{\lambda_0}_{2n+2m-2}(\rho,x) \delta'_{xy},
\end{equation*}
from which we deduce that $\displaystyle \int \dd y \; \Y^{\lambda_0\lambda_0}_{2n\,2m}(\rho,x,y) = 0$, as required.\\

Suppose now that $\g$ is of type A. In this case, the currents $\K^{\lambda_0}_n$ are different from the currents $\J^{\lambda_0}_n$ and we therefore have to consider $\Zc^{\lambda_0\lambda_0}_{nm}$ rather than simply $\Y^{\lambda_0\lambda_0}_{nm}$. According to the discussion at the end of paragraph \ref{Sec:ZCEMGeneral}, it contains polynomials in the $\J^{\lambda_0}_p$'s and $\Nc^{\lambda_0}_p$'s, multiplied by either $\Xi^{\lambda_0\lambda_0}_{kl}(\rho,x,y)$ or $\bigl\lbrace \J^{\lambda_0}_k(x), \J^{\lambda_0}_l(y) \bigr\rbrace$. This last Poisson bracket is given by equation \eqref{Eq:PBJTypeA} and is expressed in terms of the $\J^{\lambda_0}_p$'s. As for type B, C and D, we compute the expression of $\Xi^{\lambda_0\lambda_0}_{kl}$ in appendix \ref{App:XiNonCyc}. We find
\begin{align*}
&\Xi^{\lambda_0\lambda_0}_{k\,l}(\rho,x,y) = - \frac{\varphi'(\lambda_0)}{T} \frac{kl(k-1)}{k+l-2} \Nc^{\lambda_0}_{k+l-2}(\rho,x) \delta'_{xy} \\
& \hspace{145pt}+ \frac{\varphi'(\lambda_0)}{d T} kl \J^{\lambda_0}_{l-1}(y) \Nc^{\lambda_0}_{k-1}(\rho,x) \delta'_{xy} + f^{\lambda_0}_{kl}(\rho,x) \delta_{xy},
\end{align*}
for some function $f^{\lambda_0}_{kl}$ such that $f^{\lambda_0}_{kl}=f^{\lambda_0}_{lk}$.

Hence $\Zc^{\lambda_0\lambda_0}_{nm}$ can be expressed in terms of the $\J^{\lambda_0}_p$'s and $\Nc^{\lambda_0}_p$'s, up to terms involving $f^{\lambda_0}_{kl}$. The latter are always of the form
\begin{equation*}
\alpha \J^{\lambda_0}_{p_1}(x) \ldots \J^{\lambda_0}_{p_q}(x) f^{\lambda_0}_{kl}(\rho,x) \delta_{xy},
\end{equation*}
with $\alpha$ a constant. Moreover, one can check that for any such term, there is also a similar one but with an opposite sign and $k$ and $l$ interchanged. Using the symmetry property $f^{\lambda_0}_{kl}=f^{\lambda_0}_{lk}$, one can then conclude that these terms always vanish.

Therefore $\Zc^{\lambda_0\lambda_0}_{nm}$ can be expressed in terms of the $\J^{\lambda_0}_p$'s and $\Nc^{\lambda_0}_p$'s. Using the first explicit expressions \eqref{Eq:KJ} for the current $\K^{\lambda_0}_n$ and the corresponding expressions for the matrices $\M^{\lambda_0}_n$, it can be check directly that $\displaystyle \int \dd y \; \Zc^{\lambda_0\lambda_0}_{nm}(\rho,x,y)=0$ for the first few degrees $n, m$. Specifically, we have checked this for degrees $n$ and $m$ up to 7. In particular, we observed that we could not have chosen different coefficients in equation \eqref{Eq:KJ} for these zero curvature equations to hold (in the same way that these coefficients were uniquely fixed by requiring the involution of $\Q^{\lambda_0}_n$ and $\Q^{\lambda_0}_m$). Based on these strong observations, we conjecture that it holds for any $n,m\in\E_{\lambda_0}$.

\subsubsection{Zero curvature equations at a cyclotomic regular zero}

Finally, let us prove Theorem \ref{Thm:ZCEM} for $\lambda_0=\mu_0=0$, which is a cyclotomic point. Remember that $\K^0_n(x)$ and $\M^0_n(\rho,x)$ are extracted as the coefficient of $\lambda^{r_n}$ in the power series expansion of $\W_n(\lambda,x)$ and $\M_n(\lambda\,;\rho,x)$ where $r_n$ is the smallest power appearing in these expansions. That is, Theorem \ref{Thm:ZCEM} for $\lambda_0=\mu_0=0$ is equivalent to the statement that
\begin{equation*}
\int \dd y \; \Zc^{\lambda\lambda}_{nm} (\rho,x,y) \Bigr|_{\lambda^{r_n+r_m}} = 0.
\end{equation*}

Let us start with the case of a Lie algebra $\g$ of type B, C or D, for which $\Zc^{\lambda\lambda}_{2n\,2m}=\Y^{\lambda\lambda}_{2n\,2m}$. According to equation \eqref{Eq:Y}, we have
\begin{equation*}
\Y^{\lambda\lambda}_{2n\,2m}(\rho,x,y) = \Xi^{\lambda\lambda}_{2n\,2m}(\rho,x,y) - \Xi^{\lambda\lambda}_{2m\,2n}(\rho,x,y).
\end{equation*}
The computation of $\Xi^{\lambda\lambda}_{2n\,2m}\bigr|_{\lambda^{r_n+r_m}}$ is performed in appendix \ref{App:XiCyc}. The final result is
\begin{align*}
&\Xi^{\lambda\lambda}_{2n\,2m}(\rho,x,y)\Bigr|_{\lambda^{r_{2n}+r_{2m}}} = f^{(0)}_{2n\,2m}(\rho,x) \delta_{xy} - \theta_{r_{2n}+r_{2m}+2-T}\, \zeta'(0) \frac{4nm(2n-1)}{2n+2m-2}\Nc^0_{2n+2m-2}(\rho,x) \delta'_{xy}, \notag
\end{align*}
with $f^{(0)}_{2n\,2m}$ a function symmetric under the exchange of $n$ and $m$. By virtue of this symmetry we find that the terms involving $f$ disappear in $\Y^{\lambda\lambda}_{2n\,2m}(\rho,x,y)\bigr|_{\lambda^{r_n+r_m}}$, while the other terms vanish when integrated over $y$, as required.\\

Consider now $\g=\sl(d,\C)$ of type A. The construction of the currents 
$\K^0_k$ depends on $\s$ being inner or not 
(see subsections \ref{Sec:TypeATrivial}, \ref{Sec:TypeANonTrivial} and \ref{Sec:SummaryCyc}). If 
 $\s$ is inner, then the currents $\K^0_k$ and $\W_k$ are equal to the currents $\J^0_k$ and $\Tc_k$. In this case, we have
\begin{equation*}
\Zc^{\lambda\lambda}_{nm}(\rho,x,y)=\Y^{\lambda\lambda}_{nm}(\rho,x,y) = \Xi^{\lambda\lambda}_{nm}(\rho,x,y) - \Xi^{\lambda\lambda}_{mn}(\rho,x,y).
\end{equation*}
The expression for $\Xi^{\lambda\lambda}_{nm}(\rho,x,y)\bigr|_{\lambda^{r_n+r_m}}$ is given by equation \eqref{Eq:XiATrivial} of appendix \ref{App:XiCyc}. It has the same structure as in the case of types B, C and D: the same arguments then apply and we conclude that the integration of $\Y^{\lambda\lambda}_{nm}(\rho,x,y)\bigr|_{\lambda^{r_n+r_m}}$ over $y$ vanishes.\\

Finally, consider $\g=\sl(d,\C)$ of type A with $\s$ not inner. In this case, the currents $\K^0_k(x)$ and $\W_k(\lambda,x)$ are constructed as polynomials of respectively $\J^0_k(x)$ and $\Tc_k(\lambda,x)$. The corresponding structure of $\Zc^{\lambda\mu}_{nm}$ is discussed at the end of paragraph \ref{Sec:ZCEMGeneral}. In particular, $\Zc^{\lambda\lambda}_{nm}(\rho,x,y)$ is composed of two types of terms:
\begin{itemize}
\item $\Xi^{\lambda\lambda}_{kl}(\rho,x,y)$, multiplied by polynomials in the $\Tc_j(\lambda,\bm{\cdot})$'s,
\item $\bigl\lbrace \Tc_k(\lambda,x), \Tc_l(\lambda,y) \bigr\rbrace$, multiplied by polynomials in the $\Tc_j(\lambda,\bm{\cdot})$'s and $\Nc_j(\lambda \, ; \rho,\bm{\cdot})$'s.
\end{itemize}

We want to extract the coefficient of $\lambda^{r_n+r_m}$ in $\Zc^{\lambda\lambda}_{nm}(\rho,x,y)$. Recall that the powers of $\lambda$ appearing in $\Tc_j(\lambda,\bm{\cdot})$ and $\Nc_j(\lambda \, ; \rho,\bm{\cdot})$ are of the form $r_j+aT$, with $a\in\Z_{\geq 0}$, and that the coefficients corresponding to $a=0$ are $\J^0_j(\bm{\cdot})$ and $\Nc^0_j(\rho,\bm{\cdot})$. In the two types of terms mentioned above, one can check that the terms with $a>0$ will not contribute to the coefficient of $\lambda^{r_n+r_m}$. More precisely, $\Zc^{\lambda\lambda}_{nm}(\rho,x,y)\bigr|_{\lambda^{r_n+r_m}}$ is composed of two types of terms:
\begin{itemize}
\item $\Xi^{\lambda\lambda}_{kl}(\rho,x,y)\bigr|_{\lambda^{r_k+r_l}}$, multiplied by polynomials in the $\J^0_j(\bm{\cdot})$'s,
\item $\bigl\lbrace \J^0_k(x), \J^0_l(y) \bigr\rbrace$, multiplied by polynomials in the $\J^._j(\bm{\cdot})$'s and $\Nc^0_j(\rho,\bm{\cdot})$'s.
\end{itemize}
The Poisson brackets $\bigl\lbrace \J^0_k(x), \J^0_l(y) \bigr\rbrace$ are given by equation \eqref{Eq:PBJCycA}. The expression for $\Xi^{\lambda\lambda}_{kl}(\rho,x,y)\bigr|_{\lambda^{r_k+r_l}}$ is worked out in appendix \ref{App:XiCyc} and reads
\begin{align*}
\Xi^{\lambda\lambda}_{kl}(\rho,x,y)\Bigr|_{\lambda^{r_{k}+r_{l}}} & = f^{(0)}_{kl}(\rho,x) \delta_{xy} - \theta_{r_{k}+r_{l}+2-T}\, \zeta'(0) \frac{kl(k-1)}{k+l-2}\Nc^0_{k+l-2}(\rho,x) \delta'_{xy} \\
& \hspace{65pt} - \theta_{r_{k}+1-S}\theta_{r_{l}+1-S}\, \frac{\zeta'(0)}{d}\, kl \J^0_{l-1}(y) \Nc^0_{k-1}(\rho,x) \delta'_{xy}, \notag
\end{align*}
where $f^{(0)}_{kl}$ is a function invariant under the interchange of $k$ and $l$.

The rest of the argument follows closely that given in the non-cyclotomic case. Specifically, the terms containing $f^{(0)}_{kl}$ are seen to vanish by virtue of this symmetry property. We can thus express $\Zc^{\lambda\lambda}_{nm}(\rho,x,y)\bigr|_{\lambda^{r_n+r_m}}$ in terms of the $\J^0_k$'s and $\Nc^0_k$'s only. One then can check explicitly that this expression vanishes when integrated over $y$, as required. We verified this for the first few degrees $n$ and $m$ (up to 8) and different values of $T$ (from 2 to 6). We therefore conjecture that this is also true for any $n,m \in \E_0$ and any $T$.

\section{Applications}
\label{Sec:Applications}

In paragraph \ref{Sec:Examples}, we gave a list of integrable $\s$-models which fit the framework of the present article. In this section, we apply the methods developed in the previous sections to these particular examples, analyse the results and compare them to some existing work in the literature. These models were recently re-interpreted as particular examples of so-called dihedral affine Gaudin models~\cite{Vicedo:2017cge}. We explain in the last part of this section how the framework of dihedral affine Gaudin models is particularly suited to apply the methods of the present article.

\subsection{Principal chiral model and its deformations}
\label{Sec:PCM}

Let us start with the simplest integrable $\s$-model, the Principal Chiral Model (PCM). The study of local charges of the PCM is already well known and was treated in the reference~\cite{Evans:1999mj}: these results were the principal motivation and guideline for the present article. In particular, one of the aims was to generalise the construction of~\cite{Evans:1999mj} to a wider class of models, among which are the integrable two-parameters deformations of the PCM (dPCM). We shall discuss the latter in this subsection.\\

The integrable structure of the dPCM was discussed in subsection \ref{Sec:Examples}. 
In this case, $\s=\Id$ so that  $T=1$. Their Lax matrix and twist function are given by equations \eqref{Eq:LaxPCM} and \eqref{Eq:TwistdPCM} respectively. In the language of this article, the regular zeros of these deformed models are $+1$ and $-1$. The evaluation of $\varphi(\lambda)\Lc(\lambda,x)$ at these zeros gives the fields $J_\pm(x)$ of equation \eqref{Eq:ChiralFieldsdPCM}.

The local charges $\Q^{\pm 1}_n$ constructed in the present article are related to the traces of powers of these fields. In the undeformed case, \textit{i.e.} for the PCM, these fields coincide with the fields $j_\pm = -g^{-1} \p_\pm g$ used in reference~\cite{Evans:1999mj} to construct the local charges. Thus, the method presented in this article gives back the results of~\cite{Evans:1999mj} for the PCM, as expected. In the deformed case, it generalises these results, while keeping a similar structure in the construction: in particular, we obtain two towers of local charges in involution, corresponding to the two chiralities of the model, and the spin of these charges is still related to the exponents of the affine Kac-Moody algebra $\widehat{\g}$, as it was in the PCM case~\cite{Evans:1999mj}.\\

The Hamiltonian and momentum of the deformed PCM are given by equations \eqref{Eq:HamMomPCM}. One can check that these are related to the quadratic charges $\Q^{\pm 1}_2$ as follows
\begin{subequations}
\begin{align*}
\Hc_{\text{dPCM}} & = - \frac{\Q^{+1}_2}{2\varphi'_{\text{dPCM}}(+1)} + \frac{\Q^{-1}_2}{2\varphi'_{\text{dPCM}}(-1)} , \\
\Pc_{\text{dPCM}} & = - \frac{\Q^{+1}_2}{2\varphi'_{\text{dPCM}}(+1)} -\frac{\Q^{-1}_2}{2\varphi'_{\text{dPCM}}(-1)} .
\end{align*}
\end{subequations}
In particular, the Hamiltonian belongs to the algebra of local charges in involution so that these charges are conserved (see also the discussion at the end of subsection \ref{Sec:AlgebraLoc}).

This observation also allows one to recover the temporal component $\M(\lambda,x)$ of the Lax pair of the model (see the paragraph below equation \eqref{Eq:Nabla}). More precisely, the equation of motion of the dPCM can be recast as the Lax equation \eqref{Eq:ZCEH}, where
\begin{equation*}
\M_{\text{dPCM}}(\lambda_,x)= - \frac{\M^{+1}_2(\lambda,x)}{2\varphi'_{\text{dPCM}}(+1)} + \frac{\M^{-1}_2(\lambda,x)}{2\varphi'_{\text{dPCM}}(-1)} = \frac{j_0(x) + \lambda j_1(x)}{1-\lambda^2}.
\end{equation*}
This zero curvature equation \eqref{Eq:ZCEH} is the first among a whole hierarchy of integrable equations generated by the local charges $\Q^{\pm 1}_n$ (cf. subsection \ref{Sec:ZCEL}).\\

In particular, this result was used in subsection \ref{Sec:NonLocal} to show that the local charges $\Q^{\pm 1}_n$ are in involution with the non-local charges extracted from the monodromy of the Lax matrix $\Lc(\lambda,x)$ (see Theorem \ref{Thm:NonLocal}). In~\cite{Evans:1999mj}, it was shown that the local charges of the undeformed PCM Poisson commute with the non-local charges generating the classical Yangian symmetry of the model. In the framework of this article, if we consider the model on the real line $\R$, we expect these non-local charges to be extracted from the expansion of (a gauge transformation of) the monodromy around the pole $\lambda=0$ of the twist function of the PCM.

For Yang-Baxter deformations ($k=0$ and $\eta\neq 0$, see paragraph \ref{Sec:Examples}), this Yangian symmetry gets deformed to a quantum affine symmetry~\cite{Delduc:2016ihq,Delduc:2017brb}. In particular, studying the monodromy around the poles $\pm i \eta$ of the twist function of the Yang-Baxter model, one can extract a $q$-deformed affine Poisson-Hopf algebra $\mathscr U_q(\widehat{\g})$. We have therefore proved that this algebra of non-local charges is in involution with the algebra of local charges consisting of the $\Q^{\pm 1}_n$'s.\\

As mentioned in the paragraph \ref{Sec:Examples}, the PCM and its deformation are defined on a real Lie group $G_0$, whose Lie algebra $\g_0$ is a real form of $\g$. This real form is characterised by a semi-linear involutive automorphism $\tau$. The Lax matrix \eqref{Eq:LaxPCM} of these models satisfies the reality condition \eqref{Eq:Reality}. Moreover, the twist function \eqref{Eq:TwistdPCM} verifies the reality condition \eqref{Eq:TwistReal} and the regular zeros of the model ($+1$ and $-1$) are real. Thus, the discussion of the subsection \ref{Sec:Reality} applies to these models and the charges $\Q^{\pm 1}_n$ are real (possibly up to a redefinition of some $\Q^{\pm 1}_n$ by a factor of $i$, depending on $\tau$).

\subsection[Bi-Yang-Baxter $\s$-model]{Bi-Yang-Baxter $\bm{\s}$-model}
\label{Sec:BYB}

There exists another two-parameter deformation of the PCM, the so-called bi-Yang-Baxter (bYB) $\s$-model \cite{Klimcik:2008eq,Klimcik:2014bta}. Its Hamiltonian integrability was established in~\cite{Delduc:2015xdm}, by viewing it as a deformation of the $\Z_2$-coset model $G_0\times G_0 / G_{0,\text{diag}}$. In this formulation, the bYB model falls into the framework of the present article but has to be considered as a model with a gauge constraint (this is also the approach of the reference~\cite{Vicedo:2017cge}). This can be treated with the methods developed here, using the results of subsections \ref{Sec:Infinity} and \ref{Sec:Gauge} to take into account the gauge constraint. The analysis of the local charges for this model would then be close to the one for the $\Z_2$-coset model that we perform in the next subsection.

In this subsection, we choose to treat the bYB model in its gauge fixed formulation, which is more natural if we see it as a deformation of the PCM. As we will see, this formulation does not fit exactly within the framework of this article, but we will explain how one can overcome this difficulty by further relaxing the general assumptions we made. Even though this generalisation could have been done throughout the entire article, we chose here to work in a more restricting but more common framework for clarity and simplicity. In this regard, the present subsection is also used to illustrate, \textit{a posteriori}, how the methods and results we found apply under the generalised conditions.\\

The Hamiltonian integrability of the bYB model in its gauge fixed formulation was studied in the last section of the article~\cite{Delduc:2015xdm}. In order to see this model as a deformation of the PCM, one should first perform a change of variables from the conventions of~\cite{Delduc:2015xdm}: more precisely, we relate the spectral parameter $z$ of~\cite{Delduc:2015xdm} with the spectral parameter $\lambda$ of this article by the involutive transformation
\begin{equation*}
z = \frac{1-\lambda}{1+\lambda}.
\end{equation*}
In particular, after this change of spectral parameter, the Lax matrix of the gauge fixed-model takes the form
\begin{equation}\label{Eq:LaxBYB}
\Lc_{\text{bYB}}(\lambda,x) = \frac{j_1(x) + \lambda j_0(x)}{1-\lambda^2} + j_\infty(x),
\end{equation}
for some $\g$-valued fields $j_0$, $j_1$ and $j_\infty$. In particular, the field $j_\infty$ vanishes when the deformation parameters $\eta$ and $\tilde{\eta}$ go to zero. We then recover the Lax matrix \eqref{Eq:LaxPCM} of the undeformed PCM. In this sense, the gauge fixed model with spectral parameter $\lambda$ describes an integrable deformation of the PCM.\\

The equation (3.8) of~\cite{Delduc:2015xdm} gives the twist function of the bYB model in terms of the spectral parameter $z$, which we shall denote $\chi_{\text{bYB}}(z)$ here. We define a corresponding function  $\varphi_{\text{bYB}}(\lambda)$ of the spectral parameter $\lambda$ by the relation
\begin{equation*}
\chi_{\text{bYB}}(z) \, \dd z = \varphi_{\text{bYB}}(\lambda) \, \dd\lambda.
\end{equation*}
It is found to be of the form
\begin{equation*}
\varphi_{\text{bYB}}(\lambda) = 8K\frac{1-\lambda^2}{(a\lambda^2+b)(c\lambda^2+d)},
\end{equation*}
where $a$, $b$, $c$ and $d$ depend on the deformation parameters $\eta$ and $\tilde{\eta}$. This is to be compared with the twist function \eqref{Eq:TwistdPCM} (where $k=A=0$) of the PCM. The presence of a global factor in the twist function is directly related to the global factor $K$ in the definition of the action of the model: it can be reabsorbed by setting $K$ to a particular value.

In the undeformed case $\eta=\tilde{\eta}=0$, one has $b=c=0$. Thus, up to the global factor, the function $\varphi_{\text{bYB}}(\lambda)$ coincides with the twist function of the PCM in this case. The bi-Yang-Baxter deformation thus has for effect to deform the poles of the twist function of the PCM: the two double poles at $0$ and $\infty$ in $\varphi_{\text{PCM}}(\lambda) \, \dd\lambda$ split into four simple poles 
on the imaginary axis. The zeros of the twist function stay undeformed in this procedure: indeed, the zeros of $\varphi_{\text{bYB}}$ are $+1$ and $-1$, as for the PCM and its deformations. Considering the expression \eqref{Eq:LaxBYB} of the Lax matrix of the bYB model, we see that these zeros are regular.\\

It was shown in~\cite{Delduc:2015xdm} that the Lax matrix of the gauge fixed model satisfies a non-ultralocal Poisson bracket of the form \eqref{Eq:PBR}. Using the parameters of~\cite{Delduc:2015xdm} and after performing the change of spectral parameter described above, we find that the $\Rc$-matrix describing this Poisson bracket takes the form
\begin{equation*}
\Rc\ti{12}(\lambda,\mu) = \varphi_{\text{bYB}}(\mu)^{-1}\Rc^{\text{bYB}}\ti{12}(\lambda,\mu),
\end{equation*}
where
\begin{equation}\label{Eq:RBYB}
\Rc^{\text{bYB}}\ti{12}(\lambda,\mu) = \Rc^0\ti{12}(\lambda,\mu) - \frac{c\mu}{c\mu^2+d} C\ti{12} - \frac{\tilde{\eta}}{c\mu^2+d} \widetilde{R}\ti{12}.
\end{equation}
In the above equation, $\widetilde{R}$ is the non-split solution of the modified CYBE 
considered in~\cite{Delduc:2015xdm} and we use the standard $\Rc^0$-matrix 
\begin{equation*}
\Rc^0\ti{12}(\lambda,\mu)=\frac{C\ti{12}}{\mu-\lambda}.
\end{equation*}
In the undeformed case, $\tilde{\eta}$ and $c$ vanishes and we recover the integrable structure of the PCM, as described in paragraph \ref{Sec:Examples}. However, in general, the matrix $\Rc^{\text{bYB}}$ is different from $\Rc^0$. Thus, the bYB model does not fit exactly within the framework described in subsection \ref{Sec:Model}.

Let us note that $\Rc^{\text{bYB}}$ is still a solution of the classical Yang-Baxter equation (CYBE) \eqref{Eq:CYBE}, using the fact that $\widetilde{R}$ is a non-split solution of the modified classical Yang-Baxter equation (see~\cite{Delduc:2015xdm}) and the identity
\begin{equation*}
\bigl[ C\ti{ij}, X\ti{i} + X\ti{j} \bigr] = 0,
\end{equation*}
true for any $X\in\g$. Note that the coefficients of $C\ti{12}$ and $\widetilde{R}\ti{12}$ in equation \eqref{Eq:RBYB} are strongly constrained by the requirement that $\Rc^{\text{bYB}}$ fulfils the CYBE.\\

As explained above, $\Rc^{\text{bYB}}$ is different from $\Rc^0$ and therefore we cannot directly apply the results of this article. However, going through the details of 
the proofs of these results for a non-constrained model with $T=1$, we see that the only properties of the matrix $\Rc^0$ that we used are the CYBE (for the zero curvature equations), the fact that $\Rc^0\ti{12}(\lambda,\mu)$ is holomorphic at pairs $(\lambda_0,\mu_0)$ of distinct regular zeros and the asymptotic property \eqref{Eq:RAsymptotic} near a regular zero $\mu = \lambda_0$. The matrix $\Rc^{\text{bYB}}$ also satisfies the CYBE, as explained above. Moreover, one easily checks that it also verifies the holomorphy condition and the asymptotic property mentioned above. Thus, the results we found in this article also apply to the bYB model.

This is a general observation: we can also treat the models where the matrix $\Rc^0$ is replaced by a matrix $\Rc'$ satisfying some similar properties. More precisely, we require that $\Rc'$
\begin{itemize}\setlength\itemsep{0.2em}
\item obeys the CYBE \eqref{Eq:CYBE},
\item is holomorphic at $(\lambda_0,\mu_0)$ with $\lambda_0$ and $\mu_0$ different regular zeros in $\Zc$,
\item verifies the asymptotic property \eqref{Eq:RAsymptotic} around non-cyclotomic regular zeros,
\item satisfies the equation \eqref{Eq:UAround0} for $U\ti{12}(\lambda,\lambda)$ around a cyclotomic regular zero, up to a term $O(\lambda^{2T-3})$ (which would not contribute to some $(r_n+r_m)^{\rm th}$ power of $\lambda$ in \eqref{Eq:PBTCyc}).
\end{itemize}
In particular, let us consider a matrix $\Rc'$ of the form
\begin{equation}\label{Eq:R'}
\Rc'\ti{12}(\lambda,\mu) = \Rc^0\ti{12}(\lambda,\mu) + \mathcal{D}\ti{12}(\mu),
\end{equation}
like the matrix $\Rc^{\text{bYB}}$. Then $\Rc'\ti{12}(\lambda,\mu)$ is holomorphic for $\lambda$ and $\mu$ going to different regular zeros if $\mathcal{D}$ is holomorphic at any regular zero (this is for example the case for the bYB model). This condition also ensures that the asymptotic property \eqref{Eq:RAsymptotic} is satisfied by $\Rc'$. In the same way the condition on $U\ti{12}(\lambda,\lambda)$ is satisfied by $\Rc'$ if $\mathcal{D}\ti{12}(\lambda)+\mathcal{D}\ti{21}(\lambda)=O(\lambda^{T-2})$.

Let us note, however, that these conditions do not allow to treat the case where infinity is a regular zero in the same way that we did in this article (subsections \ref{Sec:Infinity} and \ref{Sec:CycZero}). This would require, among other conditions, that the asymptotic properties \eqref{Eq:RAsymptoticInfinity} at infinity are also satisfied by the matrix $\Rc'$. One can check that a matrix $\Rc'$ of the form \eqref{Eq:R'} can never satisfy the second property of equation \eqref{Eq:RAsymptoticInfinity}.\\

As explained above, we can apply the construction of local charges in involution to the bYB model. These local charges will be very similar to the ones of the PCM and its deformations, described in the previous subsection, so we shall not enter into much details here. Let us note that these charges are related to traces of powers of $j_0(x)\pm j_1(x)$, where $j_0$ and $j_1$ are the fields appearing in the Lax matrix \eqref{Eq:LaxBYB}. As in the case of the PCM (see previous subsection), the Hamiltonian and the momentum of the bYB model are related to the quadratic charges $\Q^{\pm 1}_2$ by the relation
\begin{subequations}
\vspace{-6pt}\begin{align*}
\Hc_{\text{bYB}} & = - \frac{\Q^{+1}_2}{2\varphi'_{\text{bYB}}(+1)} + \frac{\Q^{-1}_2}{2\varphi'_{\text{bYB}}(-1)} , \\
\Pc_{\text{bYB}} & = - \frac{\Q^{+1}_2}{2\varphi'_{\text{bYB}}(+1)} -\frac{\Q^{-1}_2}{2\varphi'_{\text{bYB}}(-1)} .
\end{align*}
\end{subequations}
In particular, the local charges constructed above are all conserved.

\subsection[$\Z_T$-coset models and their deformations]{$\bm{\Z_T}$-coset models and their deformations}

In this subsection, we discuss the construction of local charges in involution 
for $\Z_T$-coset models (and the deformations of $\Z_2$-coset models). These models 
were described in paragraph \ref{Sec:Examples}. The order $T$ of $\s$ 
is strictly greater than one. 
Their twist function and Lax matrix are given by equations \eqref{Eq:TwistZT} and \eqref{Eq:LaxZT}. Such local charges were constructed for symmetric spaces, \textit{i.e.} $\Z_2$-cosets, in references~\cite{Evans:2000qx} and~\cite{Evans:2005zd}: we shall compare these results with the ones of this article.

As already mentioned in the paragraph \ref{Sec:Examples}, the regular zeros of these models are the origin and infinity, which are both cyclotomic points. We shall therefore apply here the construction of section \ref{Sec:CycZero}. Moreover, all these models possess a gauge constraint $\Pi^{(0)}$ (see paragraph \ref{Sec:Examples}), which is identified with the field at infinity $\Cc(x)$ described in subsection \ref{Sec:Infinity}. The results of subsection \ref{Sec:Gauge} ensure that the densities of the local charges that we construct here are gauge invariant. Indeed, by Theorem \ref{Thm:Gauge}, these densities Poisson commute with the constraint $\Cc$.\\

As in the case of the PCM (see subsection \ref{Sec:PCM}), the degrees of the local charges are related to the exponents of the affine Kac-Moody algebra $\widehat{\g}$ plus one (here also, we do not consider the exponents corresponding to the Pfaffian for type D). However, as explained in section \ref{Sec:CycZero}, the fact that the regular zeros of the model are cyclotomic makes some of the exponents `drop out', in the sense that we cannot construct a charge of the corresponding degree. Recall that a degree $n$ (corresponding to an exponent $n-1$) drops out if $r_n$ is equal to $T-1$ (where $r_n$ was defined in subsection \ref{Sec:EquivT}).

Let us study this in more detail for the case of $\Z_2$-cosets. In particular, we shall compare this phenomenon of exponents dropping out with some results of reference~\cite{Evans:2000qx}. Indeed, in this reference, some local charges in involution were constructed for symmetric spaces (\textit{i.e.} $\Z_2$-cosets). These symmetric spaces correspond to quotients $G_0/G^\s_0$ of the real Lie group $G_0$ by the subgroup of fixed points under the involutive automorphism $\s$ (see paragraph \ref{Sec:Examples}). Such spaces were classified, up to isomorphism, for classical compact groups $G_0$.

In particular, the possible exponents (\textit{i.e.} the degrees minus one) of the local charges for each symmetric space of this classification were listed in Table 1 of~\cite{Evans:2000qx}: they form a (potentially proper) subset of the exponents of $\widehat{\g}$. A simple case by case computation of the integers $r_n$ for these symmetric spaces, and thus these automorphisms $\s$, shows that the exponents of $\widehat{\g}$ which do not appear in this list are exactly the exponents that drop out in the formalism of the present article. We therefore recover the structure of the degrees of local charges found in~\cite{Evans:2000qx} (except for the integer $h$ of~\cite{Evans:2000qx}, which we could not interpret in the present formalism).\\

An explicit computation of the traces of powers of $\varphi_{\Z_2}(\lambda)\Lc_{\Z_2}(\lambda,x)$ around the origin $\lambda=0$ shows that the charges constructed in this article coincide, up to some factors, with the ones constructed in reference~\cite{Evans:2000qx}. The two regular zeros $0$ and $\infty$ correspond to the two chiralities of the model. The article~\cite{Evans:2000qx} focused on one particular chirality. Here, we also have the Poisson brackets between the two towers of local charges constructed in this way. Indeed, according to Theorem \ref{Thm:InvolutionInfinity}, we show that these two towers of charges Poisson commute weakly.

This article also generalises the results of~\cite{Evans:2000qx} in different directions. First of all, the present formalism also allows to treat the integrable deformations of the $\Z_2$-coset model. Indeed, as explained in paragraph \ref{Sec:Examples}, the regular zeros of these models are still $0$ and $\infty$ and so the methods developed here still apply. The main generalisation is that this article does not restrict to (compact) symmetric spaces and also generalises the construction to any $\Z_T$-coset model. Finally, in this article we have also studied the hierarchy of equations induced by the flow of these local charges.

The reference~\cite{Evans:2005zd} deals with the local charges in involution for the supersymmetric models on symmetric spaces, working with a superalgebra $\g$. Although we did not consider such models in this article, we expect the construction to extend to these theories, by working with the Grassmann envelope of $\g$ and replacing all traces by supertraces. One should however be careful about how the automorphism $\s$ is extented to the whole matrix algebra (see appendix \ref{App:ExtSigma}) in these supersymmetric cases. Such considerations could allow the construction of local charges in involution for supersymmetric $\s$-models whose target space includes AdS manifolds, with possible applications to the hybrid formulations of string theory.\\

We end this subsection by observing that the Hamiltonian of the $\Z_T$-coset model is related to the quadratic charge $\Q^0_2$ at the origin and the momentum $\Pc_{\Z_T}$ of the theory by
\begin{equation*}
\Hc_{\Z_T} = \frac{\Q^0_2}{\zeta'(0)} + \Pc_{\Z_T},
\end{equation*}
where $\zeta$ was defined in equation \eqref{Eq:DefZeta} (note that this expression also holds for the deformed $\Z_2$ model). Thus, we conclude that the local charges constructed above are conserved, as they commute (at least weakly) with the Hamiltonian.

As the $\Z_T$-coset models are constrained models, their Hamiltonian is defined up to a term $\Tr\bigl( \mu(x) \Cc(x) \bigr)$, where $\mu$ is a $\g$-valued Lagrange multiplier. In this sense, $\Hc_{\Z_T}$ defined above is a particular choice of such a Hamiltonian, which generates a strong zero curvature equation \eqref{Eq:ZCEH} on the Lax matrix $\Lc$. Another choice of Hamiltonian involves the quadratic charge $\Q^\infty_2$ extracted at infinity, namely
\begin{equation*}
\widetilde{\Hc}_{\Z_T} = -\frac{\Q^\infty_2}{\zeta_\infty'(0)} - \Pc_{\Z_T},
\end{equation*}
where $\zeta_\infty$ is defined in the same way than $\zeta$ by $\zeta_\infty(\alpha^T)=\alpha \psi(\alpha)$. This Hamiltonian is weakly equal to $\Hc_{\Z_T}$ and generates a strong curvature equation on the Lax matrix $\Lc^\infty$.

\subsection{Dihedral affine Gaudin models}
\label{SubSec:DGAM}

Let us end this section by discussing briefly dihedral affine Gaudin models (DAGM) and 
their relation to the present article. These models were defined and studied recently 
in reference~\cite{Vicedo:2017cge}, extending the notion of cyclotomic 
Gaudin models of \cite{Vicedo:2014zza}. In particular, it was shown that all the integrable $\s$-models mentioned in the previous subsections and paragraph \ref{Sec:Examples} are examples of such DAGM. In fact, we will argue in this subsection that the construction of local charges of the present article automatically applies to a broader class of DAGM than just these integrable $\s$-models. We use the notations of~\cite{Vicedo:2017cge} and refer the interested reader to the original article for the details.\\

Reference~\cite{Vicedo:2017cge} defines a DAGM as a Hamiltonian field theory with a Poisson algebra of local observables $\widehat{S}_\ell(\widehat{\g}^{\mathcal{D}}_\C)$, depending on the following data:
\begin{itemize}\setlength\itemsep{0.15em}
\item a semi-simple Lie algebra $\g$ ;
\item an automorphism $\s$ of $\g$ of order $T$ and a semi-linear involutive automorphism $\tau$, forming the dihedral group $\Pi=D_{2T}$ ;
\item a set of points $\bm{z}\subset\mathbb{P}^1$ containing infinity (not to be confused with the set of regular zeros $\Zc$ of this article), associated with positive integers $n_x$ ($x\in\bm{z}$), encoded as a divisor $\mathcal{D}=\displaystyle\sum_{x\in\bm{z}} n_x x $ ;
\item complex numbers $\ell^x_p$ ($p=0,\ldots,n_x-1$) for all $x\in\bm{z}\setminus\hspace{-2pt}\lbrace\infty\rbrace$ and $\ell^\infty_q$ ($q=1,\ldots,n_\infty-1$), called the levels of the DAGM.
\end{itemize}
The dihedral group $\Pi$ acts on $\mathbb{P}^1$ \textit{via} the multiplication by $\omega$ and complex conjugation. The set $\bm{z}$ is chosen such that any two points of $\bm{z}$ are in disjoints orbits under the action of $\Pi$. We define the Gaudin poles $\Pi\bm{z}$ as the set of all images of points in $\bm{z}$ by $\Pi$. For this article, we shall restrict the discussion to the case where the highest levels $\ell^x_{n_x-1}$ are all non-zero (this is the main case treated in article~\cite{Vicedo:2017cge}).\\

The twist function $\varphi$ and Lax matrix $\Lc$ of the DAGM are introduced in the form of a ``natural'' connection $\nabla=\varphi(\lambda) \p_x + \varphi(\lambda)\Lc(\lambda,x)$. In particular, the twist function is uniquely defined by the levels $\ell^x_p$ of the model (see equation (4.44) of~\cite{Vicedo:2017cge} and below). The Lax matrix $\Lc(\lambda,x)$ then naturally satisfies an $r/s$-type Poisson bracket \eqref{Eq:PBR} with twist function $\varphi(\lambda)$. Moreover, the Lax matrix and the twist function of the DAGM are shown in~\cite{Vicedo:2017cge} to verify the equivariance properties \eqref{Eq:EquiL} and \eqref{Eq:TwistEqui} and the reality conditions \eqref{Eq:Reality} and \eqref{Eq:TwistReal}. Hence the DAGM automatically fits in the framework of the present article, described in subsection \ref{Sec:Model}.

Let us now look at the zeros of the twist function. In the construction of~\cite{Vicedo:2017cge}, the connection $\nabla$ defined above (and so in particular the twist function) has poles exactly at the Gaudin poles $\Pi\bm{z}$. The zeros of the twist function then do not belong to $\Pi\bm{z}$. Yet, the matrix component $S(\lambda,x)=\varphi(\lambda)\Lc(\lambda,x)$ of $\nabla$ has poles only at the points $\Pi\bm{z}$. Thus, all zeros of the twist function are regular zeros. The methods developed in this article thus apply naturally to the DAGM.\\

Finally, let us relate the discussion of the point at infinity in the present article, based on the results of subsection \ref{Sec:Infinity}, to its treatment in a DAGM given in~\cite{Vicedo:2017cge}. We will argue that the notions studied in that subsection can naturally be transposed to the DAGM setting. For that, we restrict attention to the case where $n_\infty=1$. As one can check from~\cite{Vicedo:2017cge}, this hypothesis is equivalent to the condition that $P(\alpha,x)$, defined in equation \eqref{Eq:DefP}, is regular at $\alpha=0$. With this condition fulfilled, we can then define the field $\Cc(x)$ as in equation \eqref{Eq:DefC}. Throughout the present article, based on the examples of $\Z_T$-coset models, this field $\Cc(x)$ was seen as a gauge constraint (with Theorem \ref{Thm:Gauge} stating that the local charges $\Q^{\lambda_0}_n$ are gauge invariant). This interpretation generalises to any DAGM, as the field $\Cc(x)$ computed above can be seen to coincide with the constraint defined in the paragraph 4.5.3 of the reference~\cite{Vicedo:2017cge}.

In order to define a constrained DAGM, it is required in~\cite{Vicedo:2017cge} that the levels $\ell^x_0$ satisfy some additional relation (4.64). In the context of this article, this relation ensures the regularity of the twist function $\psi(\alpha)$, defined in equation \eqref{Eq:Psi}, 
at $\alpha=0$. If we suppose that the DAGM is such that  $T>1$, then this regularity, combined with the equivariance property \eqref{Eq:EquiInfinity}, implies that $\psi(0)=0$. In this case, the point at infinity is then a regular zero of the model and the methods developed in the present article apply.

\section{Outlook}

In the present paper we showed how, in any integrable field theory described by an 
$r/s$-system with twist function $\varphi(\lambda)$ and underlying finite-dimensional Lie algebra $\g$ of classical type, one can assign an infinite tower of local conserved charges $\mathcal Q^{\lambda_0}_n$, $n \in \mathcal E_{\lambda_0}$, in involution to each regular zero $\lambda_0$ of $\varphi(\lambda)$. If the regular zero $\lambda_0$ is non-cyclotomic then the set of degrees $\mathcal E_{\lambda_0}$ of these charges consists of one plus the exponents of the untwisted affine Kac-Moody algebra $\widehat{\g}$ associated with $\g$ (excluding the Pfaffian in the case of type D). This generalises the results established in \cite{Evans:1999mj, Evans:2000hx} for the principal chiral model, with or without a Wess-Zumino term, on a real Lie group $G_0$ of classical type. On the other hand, if $\lambda_0$ is cyclotomic then some exponents may `drop' and the set of degrees $\mathcal E_{\lambda_0}$ forms a subset of the set of exponents of $\widehat{\g}$ plus one. In the case where the order of the cyclotomy is $T=2$, a simple computation reveals that the `drops' occur precisely when $\g$ is of type A and the automorphism $\sigma$ is inner. In particular, by performing a direct comparison with the results of \cite{Evans:2000qx} we find perfect agreement with the set of exponents naturally associated with a compact symmetric space $G_0 / G_0^\sigma$ listed in Table 1 of that paper (see also \cite{Evans:2001sz}) describing the degrees of the local charges in involution in a symmetric space $\sigma$-model. We were, however, unable to interpret the Coxeter number of the compact symmetric space $G_0 / G_0^\sigma$ in our present formalism. It would be interesting to understand whether there is a more fundamental connection between these two descriptions of the degrees of local charges in involution for symmetric space $\sigma$-models. This would also provide insight into the extension of the definition of the exponents to the case of $\Z_T$-cosets.

As noted in subsection \ref{SubSec:DGAM}, the regularity assumption on the 
zeros of the twist function is intimately related to a condition imposed on the levels 
of a dihedral affine Gaudin model in \cite{Vicedo:2017cge}, so that the results 
of the present paper automatically apply to $r/s$-systems arising in the latter framework.
In light of this fact, a noteworthy feature of the densities of the local charges $\mathcal Q^{\lambda_0}_n$ is that they were all obtained from the particular combination $\varphi(\lambda) \mathcal L(\lambda, x)$ of the twist function and Lax matrix of the model. Indeed, the significance of this expression is that it forms the matrix component of the connection $\nabla = \varphi(\lambda) \partial_x + \varphi(\lambda) \mathcal L(\lambda, x)$ characterising the dihedral affine Gaudin model. As argued in \cite{Evans:2001sz}, the invariant tensors used to build local charges in involution in the principal chiral model and symmetric space $\sigma$-models can be obtained directly from the Weyl-invariant tensors appearing in the leading derivative-free terms of the local charges in involution of $\widehat{\g}$-mKdV theory, given by the Drinfel'd-Sokolov construction. On the other hand, it follows from the results of \cite{Vicedo:2017cge} that $\widehat{\g}$-mKdV theory is itself a cyclotomic affine Gaudin model whose corresponding connection $\nabla$ is the starting point in the construction of Drinfel'd-Sokolov. It is therefore natural to speculate that a more general tower of local charges in involution, involving derivatives of the fields appearing in the Lax matrix $\mathcal L(\lambda, x)$, may be constructed starting from the dihedral affine Gaudin model connection $\nabla$ itself.

With the application to dihedral affine Gaudin models in mind, it is also interesting to note that the results of the present paper bear a strong resemblance to the structure of commuting integrals of motion in a (quantum) Gaudin model for a finite-dimensional Lie algebra $\g$. Indeed, it is well known from the work of Feigin, Frenkel and Reshetikhin \cite{Feigin:1994in} (see also \cite{Frenkel:2003qx, Frenkel:2004qy}) that the algebra of commuting integrals of motion in a Gaudin model with $N \in \Z_{\geq 1}$ sites, known as the Gaudin algebra, is generated by elements of the algebra of quantum observables $U(\g)^{\otimes N}$ whose degrees are given by exponents of $\g$ plus one. An important open problem is to obtain an analogue of this result in the case of quantum Gaudin models associated with affine Kac-Moody algebras. Although this is still currently out of reach, the results of the present article may furnish useful hints in this direction by providing a similar description of local Poisson commuting integrals of motion in classical (dihedral) affine Gaudin models.

\paragraph{Acknowledgments} This work is partially supported by the French Agence Nationale 
de la Recherche (ANR) under grant ANR-15-CE31-0006 DefIS.  S. L. acknowledges 
funding provided by a grant from R\'egion Auvergne-Rh\^one-Alpes. 

\appendix

\section{Extension of the automorphism to the whole space of matrices}
\label{App:ExtSigma}

We consider a Lie algebra $\g$ of classical type A, B, C or D, in its defining matricial representation. We therefore regard elements of $\g$ as acting linearly on a vector space $V$, \textit{i.e.} as element of the space $F$ of endomorphisms of $V$. Note that we can consider some connected matrix group $G \subset F$ whose Lie algebra is $\g$. Let $\s$ be an automorphism of $\g$, of finite order $T$. In this article, we are considering powers of elements of $\g$, which do not belong to $\g$ in general but are elements of $F$. Thus, we want to extend the automorphism $\s$ to the whole space of matrices $F$, in a ``natural way''.

\subsection{The conjugacy case}

Let us begin with the case where $\s$ is inner, \textit{i.e.} when $\s : X \in \g \mapsto Q X Q^{-1}$ for some $Q \in G$. Then the extension of $\s$ to $F$, which by a slight abuse of notation we still denote as $\s$, can be naturally defined as
\begin{equation}\label{Eq:ExtInner}
\begin{array}{rccc}
\s : & F & \longmapsto & F \\
                 & X & \longrightarrow & Q X Q^{-1}
\end{array}
\end{equation}
This covers the case of types B and C, as they do not have any non trivial diagram automorphism.\\

Let us now consider the algebra D$_n$, \textit{i.e.} $\g=\so(2n,\C)$, for $n\geq 5$. In this case, there 
always exists one non trivial diagram automorphism. However, this automorphism can be realised on the defining representation as an external conjugation: $\s: X \in \g \mapsto Q X Q^{-1}$ where $Q$ is not in the group $SO(2n,\C)$ but belongs to $O(2n,\C)$. In this case, the endomorphism $\s$ as defined in equation \eqref{Eq:ExtInner} still naturally extends $\s$ on $F$.

Let us say a few words on the algebra D$_4=\so(8,\C)$. It is known to have 6 diagram automorphisms, forming the \textit{triality}, isomorphic to the symmetric group $S_3$. One of them, of order 2, can also be realised as conjugation by a matrix $Q\in O(8,\C)$ and so extends to $F=M_8(\C)$ by equation \eqref{Eq:ExtInner}. The other non-trivial ones cannot be realised in any ``natural way'' on the defining representation and thus cannot easily be extended to the whole space $M_8(\C)$. We shall not consider them in this article. \\

We now describe the properties of the extension $\s$ defined in \eqref{Eq:ExtInner}. The fact that the automorphism $\s$ of $\g$ is of order $T$ is equivalent to the fact that $Q^T$ belongs to the centraliser of $\g$ in $F$,
\begin{equation*}
Z_F(\g) = \left\lbrace X \in F \; \text{s.t.} \; [X,Y]=0, \; \forall \, Y\in\g \right\rbrace.
\end{equation*}
By Schur's lemma, this implies that $Q^T = \lambda\,\Id$ for some $\lambda\in\C$. Therefore $\s$ on $F$ defined by \eqref{Eq:ExtInner} is also of order $T$. We shall make extensive use of the following five obvious properties of $\s$ as defined in \eqref{Eq:ExtInner}:
\begin{align*}
\s(XY) = \s(X)\s(Y), \;\;\;\;\;
\s(X^n) = \s(X)^n, \;\;\;\;\;
\s(\Id) = \Id, \\
\Tr\bigl(\s(X)\bigr) = \Tr(X), \;\;\;\;\;
\Tr\bigl(\s(X)\s(Y)\bigr) = \Tr(XY), \;\;\;
\end{align*}
for any $X,Y$ in $F$.

\subsection{The transpose case}

The last case that we have to treat is the one of a type A algebra, with 
$\s$ being not inner. We thus consider the defining representation $\g=\sl(d,\C)$. The action of $\s$ on $\g$ can then always be expressed as $\s : X\in\g \mapsto - Q X\Tp Q^{-1}$, where $X\Tp$ is the transpose of $X$ and $Q$ is a matrix in $SL(d,\C)$. Here also we can naturally extend $\s$ to an endomorphism of $F=M_d(\C)$, which we still denote $\s$, by letting
\begin{equation} \label{sigma F type A}
\begin{array}{rccc}
\s : & F & \longmapsto & F \\
                 & X & \longrightarrow & - Q \, X\Tp Q^{-1}
\end{array}.
\end{equation}

Once again, let us investigate the properties of $\s$. As the automorphism $\s$ of $\g$ is 
not inner, its order $T$ must be even, and we shall write $T=2S$. We note that $\s^2$ acts as conjugation by $R=Q (Q\Tp)^{-1}$. The fact that $\s^T=(\s^2)^S=\left. \Id \right|_{\g}$ is thus equivalent to the fact that $R^S$ belongs to the centraliser $Z_F(\g)$. Thus $R^S = \lambda \, \Id$ for some $\lambda\in\C$ and so $\s$ defined in \eqref{sigma F type A} is also of order $T$. We end the subsection by noting the following five properties of $\s$:
\begin{align*}
\s(XY) = - \s(Y)\s(X), \;\;\;\;\;
\s(X^n) = (-1)^{n-1}\s(X)^n, \;\;\;\;\;
\s(\Id) = - \Id, \\
\Tr\bigl(\s(X)\bigr) = -\Tr(X), \;\;\;\;\;
\Tr\bigl(\s(X)\s(Y)\bigr) = \Tr(XY), \hspace{40pt}
\end{align*}
for any $X,Y$ in $F$.

\section{Computation of $\Xi$}
\label{App:Xi}

In this appendix, we give the details of the computation in some particular cases of the term $\Xi^{\lambda\mu}_{nm}(\rho,x,y)$, defined by \eqref{Eq:DefXi}.

\subsection{At a non-cyclotomic regular zero}
\label{App:XiNonCyc}

We first suppose that $\lambda_0$ is a non-cyclotomic regular zero and that $\g$ is of type B, C or D. Recall that in this case, we constructed currents $\K^{\lambda_0}_{2n}=\J^{\lambda_0}_{2n}$ and the associated Lax matrices $\M^{\lambda_0}_{2n}=\Nc^{\lambda_0}_{2n}$. We want to compute $\Xi^{\lambda_0\lambda_0}_{2n\,2m}(\rho,x,y)$, starting from equation \eqref{Eq:DefXi}. Recall from section \ref{Sec:NonCycZero} that for a non-cyclotomic zero $\lambda_0$, one has $U\ti{23}(\lambda_0,\lambda_0)=-\frac{1}{T}\varphi'(\lambda_0)C\ti{12}$. Recall also from subsection \ref{Sec:NonCycZeroBCD} that $S_{2m-1}(\lambda_0,x)$ belongs to the Lie algebra $\g$, as it is an odd power of a matrix in $\g$. Using the completeness relation \eqref{Eq:CompRel}, we find
\begin{align}\label{Eq:XiNonCyc1}
\Xi^{\lambda_0\lambda_0}_{2n\,2m}(\rho,x,y) = - \frac{4nm\varphi'(\lambda_0)}{T} \Tr\ti{2} \left( \Rc^0\ti{12}(\rho,\lambda_0)  \sum_{k=0}^{2n-2} S_k(\lambda_0,x)\ti{2} S_{2m-1}(\lambda_0,y)\ti{2} S_{2n-2-k}(\lambda_0,x)\ti{2} \right) \delta'_{xy},
\end{align}
Using the identity $f(y)\delta'_{xy}=f(x)\delta'_{xy}+\bigl(\p_x f(x) \bigr) \delta_{xy}$ and the fact that $S_p(\lambda,x)S_q(\lambda,x)=S_{p+q}(\lambda,x)$, we get
\begin{equation*}
\Xi^{\lambda_0\lambda_0}_{2n\,2m}(\rho,x,y) = f^{\lambda_0}_{2n\,2m}(\rho,x) \delta_{xy} - \frac{4nm\varphi'(\lambda_0)(2n-1)}{T} \Tr\ti{2} \Bigl( \Rc^0\ti{12}(\rho,\lambda_0) S_{2n+2m-3}(\lambda_0,x)\ti{2} \Bigr) \delta'_{xy},
\end{equation*}
where
\begin{equation*}
f^{\lambda_0}_{2n\,2m}(\rho,x) = - \frac{4nm\varphi'(\lambda_0)}{T} \Tr\ti{2} \Bigl( \Rc^0\ti{12}(\rho,\lambda_0)  \sum_{k=0}^{2n-2} S_k(\lambda_0,x)\ti{2} \p_x\bigl(S_{2m-1}(\lambda_0,x)\ti{2}\bigr) S_{2n-2-k}(\lambda_0,x)\ti{2} \Bigr).
\end{equation*}
Recalling the definition \eqref{Eq:DefNNonCyc} of $\Nc^{\lambda_0}_p$, we obtain
\begin{equation}\label{Eq:XiNonCyc2}
\Xi^{\lambda_0\lambda_0}_{2n\,2m}(\rho,x,y) = \frac{\varphi'(\lambda_0)}{T} \frac{4nm(1-2n)}{2n+2m-2}\Nc^{\lambda_0}_{2n+2m-2}(\rho,x) \delta'_{xy} + f^{\lambda_0}_{2n\,2m}(\rho,x) \delta_{xy}.
\end{equation}
As $\p_x S_p(\lambda,x)=\sum_{l=0}^{p-1} S_{l}(\lambda,x)\p_x \bigl( S(\lambda,x) \bigr) S_{p-1-l}(\lambda,x)$, one can rewrite the function $f^{\lambda_0}_{2n\,2m}$ as
\begin{align*}
& f^{\lambda_0}_{2n\,2m}(\rho,x) = - \frac{4nm\varphi'(\lambda_0)}{T} \sum_{k=0}^{2n-2}\sum_{l=0}^{2m-2} \Tr\ti{2} \Bigl( \Rc^0\ti{12}(\rho,\lambda_0)  \\
& \hspace{150pt} S_{k+l}(\lambda_0,x)\ti{2} \p_x\bigl(S(\lambda_0,x)\ti{2}\bigr) S_{2n+2m-4-k-l}(\lambda_0,x)\ti{2} \Bigr).
\end{align*}
In particular, note that $f^{\lambda_0}_{2n\,2m}=f^{\lambda_0}_{2m\,2n}$.\\

Let us now compute $\Xi^{\lambda_0\lambda_0}_{nm}$ for a non-cyclotomic regular zero $\lambda_0$ and an algebra $\g$ of type A. As in the case of type B, C or D, we have $U\ti{23}(\lambda_0,\lambda_0)=-\frac{1}{T}\varphi'(\lambda_0)C\ti{23}$. Using the generalised completeness relation \eqref{Eq:CompRelSl} and the fact that $\J^{\lambda_0}_p(x)=\Tr\bigl( S_p(\lambda_0,x)\bigr)$, we find from equation \eqref{Eq:DefXi} that
\begin{align*}
\Xi^{\lambda_0\lambda_0}_{n\,m}(\rho,x,y) & = - \frac{nm\varphi'(\lambda_0)}{T} \Tr\ti{2} \Bigl( \Rc^0\ti{12}(\rho,\lambda_0)  \sum_{k=0}^{n-2} S_k(\lambda_0,x)\ti{2} S_{m-1}(\lambda_0,y)\ti{2} S_{n-2-k}(\lambda_0,x)\ti{2} \Bigr) \delta'_{xy} \\
& \hspace{40pt} + \frac{nm(n-1)\varphi'(\lambda_0)}{dT} \J^{\lambda_0}_{m-1}(y) \Tr\ti{2} \Bigl( \Rc^0\ti{12}(\rho,\lambda_0) S_{n-2}(\lambda_0,x) \Bigr) \delta'_{xy}.
\end{align*}
From the identity $f(y)\delta'_{xy}=f(x)\delta'_{xy}+\bigl(\p_x f(x) \bigr) \delta_{xy}$ and equation \eqref{Eq:DefNNonCyc}, we find
\begin{align*}
\Xi^{\lambda_0\lambda_0}_{n\,m}(\rho,x,y) & = - \frac{\varphi'(\lambda_0)}{T} \frac{nm(n-1)}{n+m-2} \Nc^{\lambda_0}_{n+m-2}(\rho,x) \delta'_{xy} \\
& \hspace{70pt}+ \frac{\varphi'(\lambda_0)}{dT} nm \J^{\lambda_0}_{m-1}(y) \Nc^{\lambda_0}_{n-1}(\rho,x) \delta'_{xy} + f^{\lambda_0}_{nm}(\rho,x) \delta_{xy},
\end{align*}
with
\begin{equation*}
f^{\lambda_0}_{nm}(\rho,x) = - \frac{nm\varphi'(\lambda_0)}{T} \Tr\ti{2} \Bigl( \Rc^0\ti{12}(\rho,\lambda_0)  \sum_{k=0}^{n-2} S_k(\lambda_0,x)\ti{2} \p_x\bigl(S_{m-1}(\lambda_0,x)\ti{2}\bigr) S_{n-2-k}(\lambda_0,x)\ti{2} \Bigr).
\end{equation*}
As in the case of type B, C or D, we can re-express $f^{\lambda_0}_{nm}$ as
\begin{equation*}
f^{\lambda_0}_{nm}(\rho,x) = - \frac{nm\varphi'(\lambda_0)}{T} \sum_{k=0}^{n-2}\sum_{l=0}^{m-2} \Tr\ti{2} \Bigl( \Rc^0\ti{12}(\rho,\lambda_0) S_{k+l}(\lambda_0,x)\ti{2} \p_x\bigl(S(\lambda_0,x)\ti{2}\bigr) S_{n+m-4-k-l}(\lambda_0,x)\ti{2} \Bigr).
\end{equation*}
In particular, note that $f^{\lambda_0}_{nm}=f^{\lambda_0}_{mn}$.

\subsection{Around a cyclotomic regular zero}
\label{App:XiCyc}

This subsection is devoted to the computation of $\Xi^{\lambda\lambda}_{nm}(\rho,x,y)$ around the origin $\lambda=0$ and more precisely to the computation of the coefficient of $\lambda^{r_n+r_m}$ in its series expansion. Our starting point is the definition \eqref{Eq:DefXi} of $\Xi^{\lambda\mu}_{nm}$. To evaluate this equation at $\mu=\lambda$,	we will need the expression of $U\ti{12}(\lambda,\lambda)$. We saw in section \ref{Sec:CycZero} that this is given by equation \eqref{Eq:UAround0}.\\

The presence of the partial Casimir $C^{(0)}\ti{12}$ in this equation will gives rise to projections of $S_{m-1}(\lambda,y)$ onto the grading $F^{(0)}=\lbrace Z\in F \, | \, \s(Z)=Z \rbrace$ of the matrix algebra. More precisely, the calculations will involve
\begin{equation}\label{Eq:Defg}
g^\lambda_{mn}(\rho,x,y) = nm\lambda^{-2} \zeta(\lambda^T) \Tr\ti{2} \Bigl( \Rc^0\ti{12}(\rho,\lambda) \sum_{k=0}^{n-2} S_k(\lambda,x)\ti{2} S^{(0)}_{m-1}(\lambda,y)\ti{2} S_{n-2-k}(\lambda,x)\ti{2} \Bigr) \delta'_{xy},
\end{equation}
where $S^{(0)}_{p}$ denotes the projection of $S_{p}$ on $F^{(0)}$. As we are computing the coefficient of $\lambda^{r_n+r_m}$ in $\Xi^{\lambda\lambda}_{nm}$, we will consider the $\lambda^{r_n+r_m}$-term of $g^\lambda_{nm}$. Let us show that this term is actually always zero. Using the conventions and results of the subsections \ref{Sec:EquivT} and \ref{Sec:PBCurrentsCyc}, in particular the integers $\alpha$ and $q_m$, we see that the smallest power of $\lambda$ appearing in $g^\lambda_{nm}$ is $a=\alpha T-2+q_m$, as the $S_p(\lambda,x)$'s are regular at $\lambda=0$. Recall that $r_n$ and $r_m$ are both strictly less than $T-1$ when $n,m\in\E_{0}$ and that in this case, we have $q_m=r_m+1$ (see subsection \ref{Sec:PBCurrentsCyc}). Thus $a = \alpha T-1+r_m$ and hence $a > r_n+r_m$ since $\alpha\geq 1$ and $T-1>r_n$. We can then conclude that the coefficient of $\lambda^{r_n+r_m}$ in $g^\lambda_{nm}$ vanishes, as announced.\\

We will also need the function
\begin{equation*}
f^{\lambda}_{nm}(\rho,x) = - nm\lambda^{T-2}\zeta'(\lambda) \sum_{k=0}^{n-2}\sum_{l=0}^{m-2} \Tr\ti{2} \Bigl( \Rc^0\ti{12}(\rho,\lambda)   S_{k+l}(\lambda,x)\ti{2} \p_x\bigl(S(\lambda,x)\ti{2}\bigr) S_{n+m-4-k-l}(\lambda,x)\ti{2} \Bigr),
\end{equation*}
similar to the function $f^{\lambda_0}_{nm}$ defined in the non-cyclotomic case (see previous subsection) and which possesses the same symmetry property $f^\lambda_{nm}=f^{\lambda}_{mn}$. As for $g^\lambda_{nm}$, we will use more precisely the function $f^{(0)}_{nm}=f^\lambda_{nm}\bigr|_{\lambda^{r_n+r_m}}$, which is also symmetric under the exchange of $n$ and $m$.\\

To go further in the computation, we will need to distinguish between the algebras of type B, C and D and the ones of type A. Let us start with types B, C and D. In this case, we restrict to degrees $2n$ and $2m$ (see subsections \ref{Sec:NonCycZeroBCD} and \ref{Sec:CycBCD}) and thus compute $\Xi^{\lambda\lambda}_{2n\,2m}$. Recall that $S_{2m-1}(\lambda,y)$ belongs to the Lie algebra $\g$, so that we can apply the completeness relations \eqref{Eq:CompRel} and \eqref{Eq:CompRelPart} to it. One then gets
\begin{equation*}
\Xi^{\lambda\lambda}_{2n\,2m}(\rho,x,y) = \widetilde{\Xi}^{\lambda\lambda}_{2n\,2m}(\rho,x,y) + g^\lambda_{2n\,2m}(\rho,x,y),
\end{equation*}
with $g^\lambda_{2n\,2m}$ defined in equation \eqref{Eq:Defg} and
\begin{equation*}
\widetilde{\Xi}^{\lambda\lambda}_{2n\,2m}(\rho,x,y) = - 4nm\lambda^{T-2} \zeta'(\lambda^T) \Tr\ti{2} \Bigl( \Rc^0\ti{12}(\rho,\lambda)  \sum_{k=0}^{2n-2} S_k(\lambda,x)\ti{2} S_{2m-1}(\lambda,y)\ti{2} S_{2n-2-k}(\lambda,x)\ti{2} \Bigr) \delta'_{xy}.\notag
\end{equation*}
The first term $\widetilde{\Xi}^{\lambda\lambda}_{2n\,2m}$ has the same structure as $\Xi^{\lambda_0\lambda_0}_{2n\,2m}$ studied in the previous subsection (see equation \eqref{Eq:XiNonCyc1}). Thus, the calculations of that subsection apply here and we get to an equation similar to \eqref{Eq:XiNonCyc2} for $\widetilde{\Xi}^{\lambda\lambda}_{2n\,2m}$. Namely, we have
\begin{align}\label{Eq:XiLambdaBCD}
&\Xi^{\lambda\lambda}_{2n\,2m}(\rho,x,y) = f^{\lambda}_{2n\,2m}(\rho,x) \delta_{xy} + g^{\lambda}_{2n\,2m}(\rho,x,y)  - \lambda^{T-2}\zeta'(\lambda^T) \frac{4nm(2n-1)}{2n+2m-2}\Nc_{2n+2m-2}(\lambda\,;\rho,x) \delta'_{xy},
\end{align}
where $f^\lambda_{2n\,2m}$ is defined above.

We now compute the coefficient of $\lambda^{r_{2n}+r_{2m}}$ in this expression. We showed above that $g^\lambda_{2n\,2m}$ does not contribute to this term and we have defined $f^{(0)}_{2n\,2m}$ as its contribution from $f^\lambda_{2n\,2m}$. Recall also that $\Nc_k(\lambda\,;\rho,x)$ has the same equivariance property as $\Tc_k(\lambda,x)$ (equation \eqref{Eq:EquiN}) so that its power series expansion starts with $\lambda^{r_k}$. Thus, the smallest power of $\lambda$ in the second line of equation \eqref{Eq:XiLambdaBCD} is greater than or equal to $T-2+r_{2n+2m-2}$. We have shown in subsection \ref{Sec:TypeANonTrivial} that this is equal to $r_{2n}+r_{2m}$ or $r_{2n}+r_{2m}+T$, depending on whether $r_{2n}+r_{2m}$ is greater than or strictly less than $T-2$. We find
\begin{equation}\label{Eq:XiBCD}
\Xi^{\lambda\lambda}_{2n\,2m}(\rho,x,y)\Bigr|_{\lambda^{r_{2n}+r_{2m}}} = f^{(0)}_{2n\,2m}(\rho,x) \delta_{xy} - \theta_{r_{2n}+r_{2m}+2-T}\, \zeta'(0) \frac{4nm(2n-1)}{2n+2m-2}\Nc^0_{2n+2m-2}(\rho,x) \delta'_{xy}.
\end{equation}~

Finally, let us study the case of a  Lie algebra of type A, \textit{i.e.} of $\g=\sl(d,\C)$. The term in $\Xi^{\lambda\lambda}_{nm}$ involving the Casimir $C\ti{12}$ is treated with the generalised completeness relation \eqref{Eq:CompRelSl}. In the same way, one has a generalised completeness relation for the partial Casimir $C^{(0)}\ti{12}$. This relation depends on whether the extension of $\s$ to the whole algebra of matrices fixes 
the identity $\Id$ or not, and thus whether  $\s$ is inner or not (see appendix \ref{App:ExtSigma} and subsections \ref{Sec:TypeATrivial} and \ref{Sec:TypeANonTrivial}). In general, one can write
\begin{equation*}
\Tr\ti{12}\left(C\ti{12}^{(0)}Z\ti{2}\right) = \pi^{(0)}(Z) - \frac{a}{d} \Tr(Z),
\end{equation*}
for any $Z\in M_d(\C)$, with $a=1$ if  $\s$ is inner and $a=0$ if not. Using this relation and the relation \eqref{Eq:CompRelSl}, one finds
\begin{equation*}
\Xi^{\lambda\lambda}_{nm}(\rho,x,y) = \widetilde{\Xi}^{\lambda\lambda}_{nm}(\rho,x,y) + g^\lambda_{nm}(\rho,x,y),
\end{equation*}
with $g^\lambda_{nm}$ defined in equation \eqref{Eq:Defg} and
\begin{align*}
\widetilde{\Xi}^{\lambda\lambda}_{nm}(\rho,x,y) & = - nm\lambda^{T-2} \zeta'(\lambda^T) \Tr\ti{2} \Bigl( \Rc^0\ti{12}(\rho,\lambda)  \sum_{k=0}^{n-2} S_k(\lambda,x)\ti{2} S_{m-1}(\lambda,y)\ti{2} S_{n-2-k}(\lambda,x)\ti{2} \Bigr) \delta'_{xy} \notag \\
& \hspace{20pt} + \frac{nm(n-1)}{d} \frac{\lambda^{T} \zeta'(\lambda^T)-a\zeta(\lambda^T)}{\lambda^2} \Tc_{m-1}(\lambda,y) \Tr\ti{2} \Bigl( \Rc^0\ti{12}(\rho,\lambda) S_{n-2}(\lambda,x)\ti{2} \Bigr)\delta'_{xy}.
\end{align*}
The first term in this expression is treated in the same way as in the case of types B, C and D. Moreover, we recognise in the second term the definition of $\Nc_{n-1}(\lambda\,;\rho,x)$. Finally, we obtain
\begin{align}\label{Eq:XiLambdaA}
&\hspace{-20pt}\Xi^{\lambda\lambda}_{nm}(\rho,x,y) = f^{\lambda}_{nm}(\rho,x) \delta_{xy} + g^{\lambda}_{nm}(\rho,x,y) - \lambda^{T-2}\zeta'(\lambda^T) \frac{nm(n-1)}{n+m-2}\Nc_{n+m-2}(\lambda\,;\rho,x) \delta'_{xy},\\
& \hspace{130pt} + \frac{nm}{d} \frac{\lambda^{T} \zeta'(\lambda^T)-a\zeta(\lambda^T)}{\lambda^2} \Tc_{m-1}(\lambda,y) \Nc_{n-1}(\lambda\,;\rho,x)\delta'_{xy}. \notag
\end{align}

We now compute the coefficient of $\lambda^{r_n+r_m}$ in $\Xi^{\lambda\lambda}_{nm}$. As explained at the beginning of this subsection, $g^\lambda_{nm}$ does not contribute to this term and the contribution of $f^\lambda_{nm}$ is defined as $f^{(0)}_{nm}$. The contribution from the third term is calculated as in the case of types B, C and D. In particular, it vanishes when $r_n+r_m$ is strictly less than $T-2$.

Finally, let us discuss the contribution of the last term. First of all, we note that if $a=1$, $\lambda^T\zeta'(\lambda^T)-a\zeta(\lambda^T)=O(\lambda^{2T})$. Thus the powers of $\lambda$ in this term are greater than $2T-2$. Yet, we have $r_n+r_m<2T-2$ for $n,m\in\E_0$, so this term does not contribute to the $\lambda^{r_n+r_m}$-term in this case. Hence 
for $\s$ inner, we have
\begin{equation}\label{Eq:XiATrivial}
\Xi^{\lambda\lambda}_{nm}(\rho,x,y)\Bigr|_{\lambda^{r_{n}+r_{m}}} = f^{(0)}_{nm}(\rho,x) \delta_{xy}  - \theta_{r_{n}+r_{m}+2-T}\, \zeta'(0) \frac{nm(n-1)}{n+m-2}\Nc^0_{n+m-2}(\rho,x) \delta'_{xy},
\end{equation}
as in the case of types B, C and D.

Suppose now that $\s$ is not inner, so that $a=0$. Then the smallest power of $\lambda$ in the last term of equation \eqref{Eq:XiLambdaA} is greater than or equal to $T-2+r_{n-1}+r_{m-1}$. In subsection \ref{Sec:TypeANonTrivial}, we have shown that this is equal to $r_n+r_m$ if both $r_n$ and $r_m$ are greater than $S-1$ and that it is strictly greater than $r_n+r_m$ otherwise. In conclusion, we find
\begin{align}\label{Eq:XiANonTrivial}
\Xi^{\lambda\lambda}_{nm}(\rho,x,y)\Bigr|_{\lambda^{r_{n}+r_{m}}} & = f^{(0)}_{nm}(\rho,x) \delta_{xy} - \theta_{r_{n}+r_{m}+2-T}\, \zeta'(0) \frac{nm(n-1)}{n+m-2}\Nc^0_{n+m-2}(\rho,x) \delta'_{xy} \\
& \hspace{45pt} + \theta_{r_{n}+1-S}\theta_{r_{m}+1-S}\, \zeta'(0)\, \frac{nm}{d} \J^0_{m-1}(y) \Nc^0_{n-1}(\rho,x) \delta'_{xy}. \notag
\end{align}

 \providecommand{\href}[2]{#2}\begingroup\raggedright\endgroup

\end{document}